\pgfplotsset{compat=1.17}
\tikzset{external/system call= {lualatex -save-size=80000 
                          -pool-size=10000000 
                          -extra-mem-top=500000000 
                          -extra-mem-bot=10000000 
                          -main-memory=90000000 
                          \tikzexternalcheckshellescape 
                          -interaction=batchmode
                          -jobname "\image" "\texsource"}}
\newcommand{\dpgbdt}{\ensuremath{\text{S-BDT}}\xspace}
\newcommand{\reftoarxiv}{For the complete proof we refer to the appendix.}
\date{}
    \newcommand*\samethanks[1][\value{footnote}]{\footnotemark[#1]}
    \author[1]{Thorsten Peinemann \thanks{{\normalfont The first two authors equally contributed to this work.}}}
    \author[1]{Moritz Kirschte \samethanks}
    \author[2]{Joshua Stock}
    \author[3]{Carlos Cotrini}
    \author[1]{Esfandiar Mohammadi}
    \affil[1]{Universität zu L\"ubeck, L\"ubeck, Germany}
    \affil[2]{Universität Hamburg, Hamburg, Germany}
    \affil[3]{ETH Zurich, Zurich, Switzerland} 
    \affil[1]{\textit{\{t.peinemann,\,m.kirschte,\,esfandiar.mohammadi\}@uni-luebeck.de}}
    \affil[2]{\textit{joshua.stock@uni-hamburg.de}}
    \affil[3]{\textit{ccarlos@inf.ethz.ch}}
\newcommand{\eps}{\varepsilon}
\renewcommand{\set}[1]{\bgroup\left\{\,#1\,\right\}\egroup\xspace}
\DeclareMathOperator*{\Eop}{\mathbb{E}}
\newcommand{\E}[1]{{\operatorname{\mathbb{E}}}\bgroup\left[#1\right]\egroup\xspace}
\newcommand{\Econd}[2]{{\operatorname{\mathbb{E}}_{#1}}\bgroup\left[#2\right]\egroup\xspace}
\DeclareMathOperator*{\diag}{diag}
\newtheorem*{rep@theorem}{\rep@title}
\newcommand{\newreptheorem}[2]{%
\newenvironment{rep#1}[1]{%
 \def\rep@title{#2 \ref{##1}}%
 \begin{rep@theorem}}%
 {\end{rep@theorem}}}
\newtheorem{theorem}{Theorem}
\newtheorem{lemma}[theorem]{Lemma}
\newtheorem{corollary}[theorem]{Corollary}
\theoremstyle{definition}
\newtheorem{definition}[theorem]{Definition}
\theoremstyle{remark}
\newtheorem{remark}[theorem]{Remark}
\crefname{algocf}{alg.}{algs.}
\Crefname{algocf}{Alg.}{Algs.}
\crefname{theorem}{Thm.}{Thms.}
\crefname{lemma}{Lem.}{Lems.}
\crefname{corollary}{Cor.}{Cors.}
\crefname{definition}{Def.}{Defs.}
\crefname{figure}{Fig.}{Figs.}
\crefname{appendix}{Appx.}{Appxs.}
\crefname{section}{Sec.}{Sec.}
\crefname{table}{Tbl.}{Tbls.}
\crefname{equation}{Eq.}{Eqs.}
\newcommand{\boldparagraph}[1]{\textbf{#1}\mbox{}\quad\xspace}
\newcommand{\blue}[1]{{\color{black}#1}}
\title{\dpgbdt: Distributed Differentially Private Boosted Decision Trees}
\begin{document}

\maketitle
\begin{abstract}
    \blue{We introduce \dpgbdt: a novel $(\varepsilon,\delta)$-differentially private distributed gradient boosted decision tree (GBDT)} learner that improves the protection of single training data points (privacy) while achieving meaningful learning goals, such as accuracy or regression error (utility). S-BDT uses less noise by relying on non-spherical multivariate Gaussian noise, for which we show tight subsampling bounds for privacy amplification and incorporate that into a Rényi filter for individual privacy accounting.
    We experimentally reach the same utility while saving $50\%$ in terms of epsilon for $\varepsilon \le 0.5$ on the Abalone regression dataset (dataset size $\approx 4K$), saving $30\%$ in terms of epsilon for $\varepsilon \le 0.08$ for the Adult classification dataset (dataset size $\approx 50K$), and \blue{saving $30\%$ in terms of epsilon for $\varepsilon\leq0.03$ for the Spambase classification dataset (dataset size $\approx 5K$)}. Moreover, we show that for situations where a GBDT is learning a stream of data that originates from different subpopulations (non-IID), S-BDT improves the saving of epsilon even further.
\end{abstract}

\section{Introduction}
    
We present a differentially private distributed learning algorithm for a class of fast learners, the so-called gradient boosted decision tree ensembles (GBDT): these
models combine many weak decision tree learners into an ensemble to prevent overfitting while still capturing complex non-linear patterns.
GBDTs utilize a robust, data-efficient, incremental learning method, traits that are well-suited for strong privacy-preserving approximations.
 
Classical decision trees are vulnerable to privacy attacks on training data \cite{fredrikson_decision_trees_attacks}. 
Each tree consists of splits and leaves which are data-dependent and typically significantly overfitted and thus potentially leak information. 
The state-of-the-art notion for provably protecting against such information leakage is $(\varepsilon,\delta)$-differential privacy (DP), which requires that the impact of single data points be small and deniable.
To protect the training itself, user data should be stored locally and not collected by a trusted 3rd party. This can be realized using distributed training.
Yet, training GBDTs in a privacy-preserving manner needs to achieve strong utility-privacy tradeoffs: protect single training data points (privacy) while keeping the strong machine learning performance (e.g., high classification accuracy or low regression error) of GBDTs (utility).

The best-performing prior work \cite{Maddock_2022} has shown that training GBDTs has the potential to achieve this strong utility-privacy tradeoff.
Yet, we are able to achieve better privacy guarantees (lower $\eps$) for the same utility for regular training as well as in a setting where the data distribution changes during training, i.e. a stream of data that originate from different subpopulations (non-IID):
For such a stream, it is crucial to keep the information from data that arrived at previous rounds, but prior work cannot use that information to avoid additional privacy costs.

Lower $\eps$ values for the same degree of utility are desirable, as for many DP algorithms $\varepsilon$ is in $\mathcal{O}(\sfrac{1}{n})$, for $n$ many data points. Hence, achieving such a better privacy-utility tradeoff means that such a DP algorithm can be applied to smaller datasets. Thus, improving $\eps$ without impeding utility is of practical importance.

\boldparagraph{Contributions} 
We introduce the novel securely distributed DP gradient boosted decision tree algorithm \dpgbdt{}\footnote{\blue{code available at \href{https://github.com/kirschte/sbdt}{https://github.com/kirschte/sbdt}}} that achieves stronger utility-privacy tradeoffs than the best-performing prior work~\cite{Maddock_2022} for strict privacy requirements ($\varepsilon \le  0.5$)  and regression tasks, and provides significant improvements for a stream of non-IID data. 
In contrast to prior work, \dpgbdt{} incorporates three techniques that improve privacy: (1) subsampling to increase the number of trees, (2) leaf-balanced noise for better noise calibration, and (3) Rényi filters for individual privacy accounting.

\begin{figure*}
    \centering
   \includegraphics[width=\textwidth]{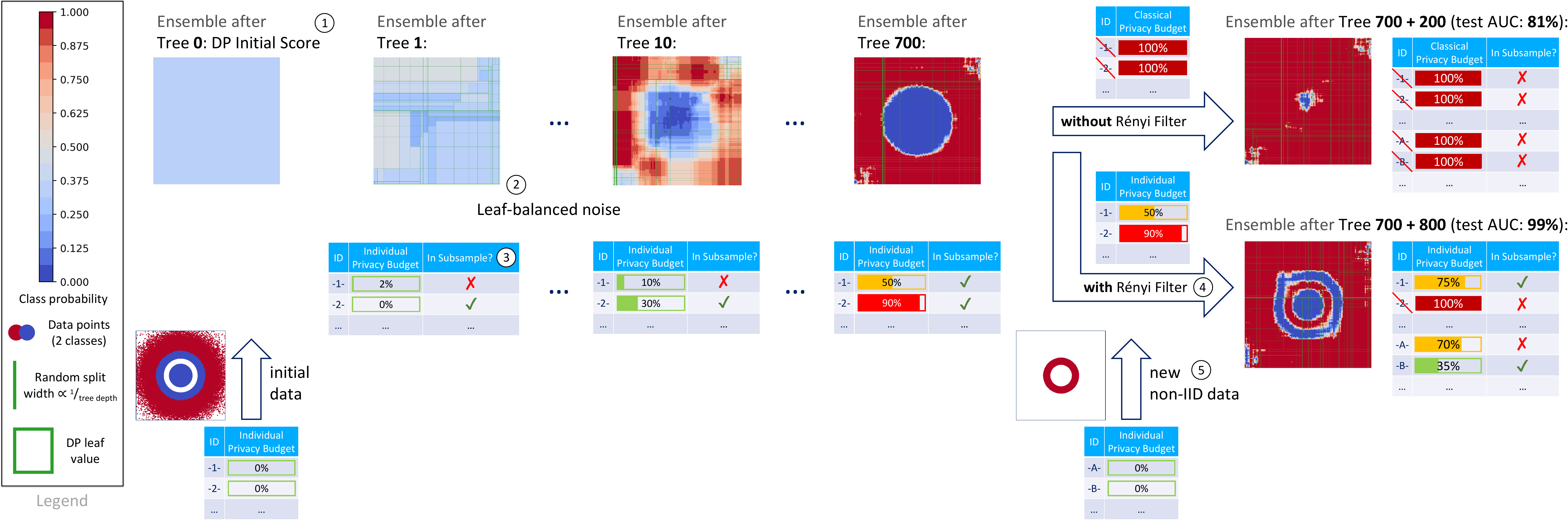}
  \caption{\blue{\textbf{Schematic overview of \dpgbdt ($\eps=0.5$)} classifying two-dimensional concentric circles where the inner yellow circle arrives after the (here: 700) regular training rounds.
  The numbers \ding{192} to \ding{196} refer to \dpgbdt{}'s key technical features (cf. \Cref{sec:overview}).}}
  \label{fig:schematic_overview}
\end{figure*}

\begin{enumerate}[label=(\arabic*), leftmargin=*] 
\item \emph{Novel condition for exact Rényi DP Bounds for Poisson Subsampling.} Subsampling is known to improve DP bounds. For simple DP algorithms, prior work has shown tight bounds for privacy loss, which leads to less noise and hence better utility. Prior work only showed untight bounds for more general DP algorithms such as used in this work (details in \Cref{sec:single_tree}).
We introduce a novel condition that enables us to prove tight privacy bounds for subsampling for our algorithm. This novel condition generalizes prior results and might be of independent interest.
\item \emph{Individual Rényi DP Bounds for Leaf-balanced Noise.} We show individual RDP for a multivariate Gaussian that balances the scale of the noise added to the first derivative and second derivative of the loss that Newton Boosting requires to construct a single leaf.
These individual RDP bounds neatly fit with subsampling via our contribution above.
\item \emph{Stream of non-IID data.} We adopt a so-called Rényi filter for individual privacy accounting to ensure that underutilized data points that arrived in the past can still be used for training the model, upon the arrival of novel data. We show for a stream of non-IID data that \dpgbdt{} can significantly improve on a naïve baseline which adds extra training rounds but discards formerly known data.
\item \emph{Strong empirical performance.} We experimentally show that our DP GBDT algorithm \dpgbdt{} provides a saving of $>50\%$ in terms of epsilon for $\varepsilon \le 0.5$ (\Cref{fig:all_main_results}) on Abalone regression (dataset size $\approx 4K$) of $>30\%$
for $\varepsilon \le 0.08$ on Adult classification (dataset size $\approx 50K$), and \blue{of $>30\%$ for $\varepsilon \leq 0.03$ on Spambase classification (dataset size $\approx 5K$).} 
Our ablation study evaluates the effectiveness of our improvements (1) to (3).
\end{enumerate}

\section{Overview}
    \label{sec:overview}

We briefly sketch gradient boosted decision tree ensembles (GBDT) and then highlight our key technical contributions. As depicted in \Cref{fig:schematic_overview}, decision trees store at each inner node a so-called \emph{split} that partitions the input space along their features. Each path from the root to a leaf thus describes one partition of the input space. For each of these partitions, a decision tree learns a constant leaf value. 
Upon prediction, features of the input are used to choose the corresponding leaf and predict its leaf value. GBDTs iteratively learn a sequence of trees, which leads to overlapping partitions of the input space. Upon prediction, a GBDT sums the leaf values of all trees. During boosted training, each new tree assigns each data point a new label which corrects the current ensemble prediction.

The main sources of privacy leakage for GBDTs are choosing the best splits and leaves. \dpgbdt randomly chooses splits, as prior work~\cite{bojarski2014differentially,Maddock_2022} has shown that this strategy in combination with the iterative improvement of boosting leads to acceptable results.

Our key technical contribution is that we first combine \emph{subsampling} -- a privacy amplification -- with an improved noise calibration (\emph{leaf-balanced noising}) and individual privacy accounting (via \emph{Rényi filters}) and second, derive tight privacy bounds for this combination which might be of independent interest. Our evaluation shows that our modifications lead to DP GBDTs with significantly better utility. 

\boldparagraph{\ding{192} DP Initial score.} We reduce the range of the leaf-value-corrections by first DP-approximating the label-means.

\blue{
\boldparagraph{\ding{193} Tight RDP bounds for leaf-balanced noise.}
DP Newton Boosting for GBDTs \cite{Maddock_2022} sets the leaf value as the fraction of the noisy first derivative of the training loss divided by the noisy second derivative. We propose \emph{leaf-balanced noise} that boosts the utility by balancing the magnitude of both noises. 
On the theoretical side, this noisy derivative pair represents a non-spherical multivariate Gaussian mechanism. We prove tight and individual Rényi DP bounds for this mechanism in \Cref{thm:individual_rdp_nonspherical_gauss}, something prior work \cite{chanyaswad2018mvg,Maddock_2022} did not provide. For instance, MVG \cite{chanyaswad2018mvg} uses untight bounds for a more general noise distribution: matrix-variate Gaussian.
}

\blue{
\boldparagraph{\ding{194} Tight RDP bounds for subsampled multi-variate noising.}
Choosing a random subset of data points (subsampling) is a known privacy amplification.
Yet, prior subsampling results do not apply to DP GBDTs as they either offer untight generic bounds or require univariate noise. The untight bounds do not lead to any privacy amplification (e.g. a factor 15 worse than the tight bound for one of our runs). For a tight bound, \citet{pmlr-v97-zhu19c} prove their tight subsampling bound only for univariate Gaussian noise, yet a DP GBDT adds 2-dimensional noise per leaf. Hence, we prove in \Cref{cor:pearson_vajda_condition_general} a general version for multivariate Gaussian noise including our leaf-balanced noise.
}

\blue{
\boldparagraph{\ding{195} \dpgbdt boosted by Rényi filters.}
}
\citet{feldman2020individual} have shown that in some cases data points whose information has not been used during learning can be re-used in later trees without additional privacy leakage. To this end, a so-called individual privacy accounting via Rényi filters is used. 
\Cref{fig:avgleak} illustrates how this technique utilizes the information of more data points. We prove that the tight Rényi bounds from above constitute a sound individual Privacy accountant  (cf.~\Cref{sec:accounting}) and incorporate a Privacy accountant into \dpgbdt (cf.~\Cref{sec:alg_overview}).

\begin{figure}[t]
    \centering
    \resizebox{\columnwidth}{!}{\input{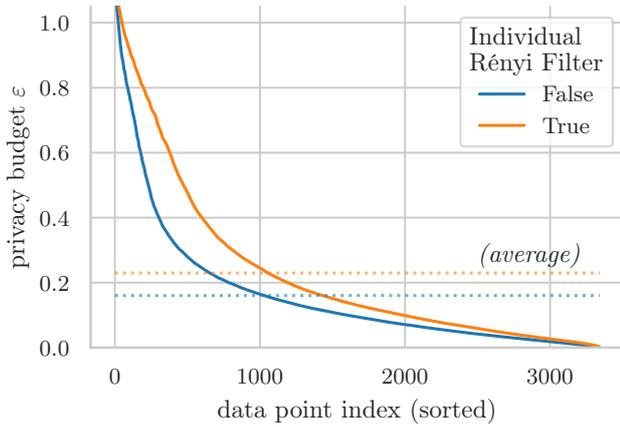}\unskip}
  \caption{\blue{
  \textbf{Individual Rényi filter (IRF) boosts the average privacy leakage} (dotted line) closer to the worst-case accounted one. The noise scale is calibrated on the regular tree training rounds, but data points that did not consume their accounted budget are used for free in extra rounds (here: 100 rounds).
  On the abalone dataset, we 1) measure for every data point (x-axis) the privacy leakage by how much the gradient sum of \dpgbdt's leaf changes after removing one data point and 2) aggregate it across the ensemble (y-axis).
  }}
  \label{fig:avgleak}
\end{figure}

\blue{
\boldparagraph{\ding{196} \dpgbdt is ready for a stream of non-IID data via a Rényi filter.}
We pose the important challenge of consecutively updating a model with later arriving data which are useful to reflect recent data distribution shifts (non-IID). Regularly, a privacy budget once spent cannot be redeemed, thus classical techniques fail to reuse old data to update such a model. Yet, old data is needed to learn new decision boundaries between the old data and the new data points and to ensure that the old data is not unlearned. As a solution, we propose our \dpgbdt tailored with Rényi filters that keeps using those old data that did not consume their accounted budget together with later arriving data points at no additional privacy cost (cf.~\Cref{sec:renyi_filter_streams}).
}

\boldparagraph{\blue{\ding{197}} Scalable distributed learning.} Our differentially private training algorithm for GBDT works seamlessly with distributed learning (cf.~\Cref{sec:distributed_learning}), a setting where multiple parties jointly train a model, without divulging their respective private training datasets to any other user or the public.
This is especially relevant in the medical field with sensitive medical metadata that hospitals might want to keep on-premise but use to train a machine learning model collaboratively with other hospitals and their share of sensitive data.
Our extension to distributed learning builds on prior work \cite{Maddock_2022} that established distributed training of DP GBDT.

\section{Preliminaries}\label{sec:preliminaries}
    
\subsection{Differential Privacy} 

Differential privacy (DP) \cite{DBLP:conf/tcc/DworkMNS06} is the de facto standard for provable privacy. Given a mechanism that has a data-dependent output, DP requires that the impact of single data points in the output is limited and thus deniable. In line with close prior work ~\cite{li2020privacy,Maddock_2022}, we consider unbounded DP, where the effect of adding or removing a single instance from a dataset on the output is analyzed.

\begin{definition}[Neighboring datasets]
    Given two datasets $X \subseteq \mathcal{X}$ and $X' \subseteq \mathcal{X}$ where $X' \coloneqq X \cup \{x\}$ for some $x \in \mathcal{X}$, then $X$ and $X'$ are neighboring: $X \sim_{x} X'$, or in short $X \sim X'$. 
\end{definition}

\begin{definition}[DP, ~\cite{DBLP:conf/tcc/DworkMNS06}]\label{def:dp}
A randomized mechanism $M: \mathcal{X} \mapsto \mathcal{R}$ satisfies $(\varepsilon, \delta)$-DP if, for any two neighboring datasets $X \sim X'$ and any observation $o$:
$\Pr[M(X) = o] \leq e^{\varepsilon} \cdot \Pr[M(X') = o] + \delta$.
\end{definition}

For accounting, we use a variant of DP: $(\alpha,\rho(\alpha))$-Rényi differential privacy (RDP) which bounds the $\alpha$-th moment of the privacy loss. 
With RDP we can mathematically describe features like composition (cf. \Cref{thmrdpsequentialcomposition}), filter (cf. \Cref{sec:prelim_renyifilter}), or subsampling (cf. \Cref{sec:prelim_subsampling}). Using \Cref{cor:rdp_to_dp}, we can convert Rényi DP to DP. For a tight conversion, we refer to \cite[Theorem 2]{sommer2019privacy} which requires access to all $\alpha$.

\begin{definition}[Rényi Divergence, ~\cite{renyi_divergence}]
    Given two probability distributions $P,Q$ over $\mathcal{R}$ where $P(o)$ denotes the density of $P$ at $o$, then the Rényi divergence of order $\alpha$ is defined as
    \[%
    \textstyle D_\alpha(P||Q) := \frac{1}{\alpha-1} \log \int_{-\infty}^{\infty} \frac{P(o)^\alpha}{Q(o)^{\alpha-1}} do.
    \]%
\end{definition}

\begin{definition}[Rényi DP, Definition 4 in \cite{DBLP:journals/corr/Mironov17}] \label{def:renyi_dp}
    A randomized mechanism $M: \mathcal{X} \mapsto \mathcal{R}$ satisfies $(\alpha, \rho(\alpha))$-Rényi DP if, for any two neighboring datasets $X \sim X'$:
    $D_\alpha(M(X)||M(X') \leq \rho(\alpha)$.%
\end{definition}

\blue{
\begin{theorem}[Adaptive sequential composition for RDP, ~\cite{feldman2020individual} Theorem 3.1]
    Let $\alpha$ and $\rho(\alpha)$ be fixed, $\mathcal{X}$ be a dataset space, and $M$ be a sequence of adaptively chosen mechanisms
    $M_i: \Pi_{j=1}^{i-1} \mathcal{R}_{j} \times \mathcal{X} \mapsto \mathcal{R}_i$ for $i \in \set{1,\dots,k}$, i.e. $M_i$ has the outputs of all previous mechanisms $R_{1},\dots,R_{i-1}$ as an input. If each $M_i$ satisfies $(\alpha, \rho_i(\alpha))$-Rényi DP and $\sum_{i} \rho_i(\alpha) \leq \rho(\alpha)$ then $M$ satisfies $(\alpha, \rho(\alpha))$-Rényi DP.
    \label{thmrdpsequentialcomposition}
\end{theorem}
\Cref{thmrdpsequentialcomposition} also holds if the mechanisms are not adaptively chosen, i.e. have privacy budgets fixed in advance.
}

\begin{corollary}[RDP to DP, Thm. 21 in \cite{balle2019hypothesis} or Prop. 12 in \cite{canonne2020discrete}]\label{cor:rdp_to_dp}
    For any $\delta \in [0,1]$, if a mechanism is $(\alpha,\rho(\alpha))$-RDP, then it is $(\eps,\delta)$-DP with
    $\eps = \rho(\alpha) + \log(\frac{\alpha - 1}{\alpha}) - \frac{\log(\delta) + \log(\alpha)}{\alpha - 1}$.%
\end{corollary}

If the output of a function $q$, e.g. the leaf value in a GBDT, is $s$-sensitivity bounded, then adding Gaussian noise to $q$ is RDP.

\blue{
\begin{definition}[$L_2$ sensitivity]
    For neighboring inputs $X, X' \in \mathcal{X}$, universe $\mathcal{U}$, and a randomized function $q\colon \mathcal{X} \mapsto \mathcal{U}$, the $L_2$ sensitivity $s$ is defined as $s \coloneqq \max_{X \sim X'}\norm{q(X) - q(X')}_{2}$. We call a function with a finite $L_2$ sensitivity $s$, an \emph{$s$-$L_2$-sensitivity bounded function}.
\end{definition}

}

\begin{theorem}[Gaussian mechanism, \cite{privacybook}\cite{DBLP:journals/corr/Mironov17}]
    Let $q\colon \mathcal{X} \mapsto \mathbb{R}^m$ be an $s$-$L_2$-sensitivity bounded function with respect to inputs $X \sim X'$. The Gaussian mechanism $M: \mathcal{X} \mapsto \mathbb{R}^{m}$ of the form
    $M(x) = q(x) + \mathcal{N}(0, s^{2}\sigma^{2} I_m)$
    satisfies $(\alpha, \sfrac{\alpha}{2\sigma^2})$-Rényi DP.
\end{theorem}

\begin{theorem}[Post-Processing~\cite{DBLP:journals/corr/Mironov17}]
    Let $M\colon \mathcal{X} \mapsto \mathcal{R}$ be  $(\alpha, \rho(\alpha))$-RDP mechanism and $f\colon \mathcal{R} \mapsto \mathcal{R}'$ a randomized function. Then, for any pair of neighboring datasets $X \sim X'$
    \[D_\alpha(M(X)||M(X')) \geq D_\alpha(f(M(X))||f(M(X'))),\]
    i.e. Rényi differential privacy is preserved by post-processing.
    \label{thm:rdp_post_processing}
\end{theorem}

\subsection{Gradient Boosted Decision Trees}
\label{sec:gbdt}

\begin{table}[t]
\footnotesize
\begin{center}
\begin{tabular}{ cl } 
    \toprule
    Symbol & Description \\
    \midrule
    $M$ & Randomized mechanism \\
    $(x,y) \in X$ (or $D$) & labeled data point of dataset \\
    $X\sim X'$, $X\sim_{x} X'$ & Neighboring datasets (differing in $x$) \\
    $D_\alpha(M(X)||M(X'))$& Rényi Divergence of order $\alpha$ \\
    $(\alpha, \rho(\alpha))$ & Rényi DP bound $\rho(\alpha)$ for $\alpha $\\
    $(\alpha, \rho_{t}^{(i)}(\alpha))$ & Individual RDP bound of $x_i$ for \\
    & round $t$ \\
    $\mathcal{F}_{\alpha, \rho(\alpha)}$ & Individual Rényi filter \\
    $\tilde{y_i}$ & Prediction for $x_i$ \\
    $(g_i, h_i)$ & Gradient and Hessian of $x_i$ \\
    \blue{$g^*, h^*, m^*$} & Gradient/Hessian/label clipping \\
    & bound \\
    \blue{$r_1, r_2$} & \blue{Noise weights for leaf value} \\
    \blue{$\gamma$} & \blue{Subsampling ratio} \\
    \blue{$\eta$} & \blue{Learning rate} \\
    \blue{$\lambda, \beta$} & \blue{Regularization parameters} \\
    \blue{$\varepsilon_\text{trees}, \varepsilon_\text{init}, \varepsilon_\text{ds}$} & $\eps$ privacy budget shares\\
    & for trees/initial score/dataset size \\
    \blue{$\sigma_\text{leaf}^2$} & \blue{Unweighted variance of leaf Gaussian} \\
    \blue{$T_\text{regular}, T_\text{extra}$} & Number of regular/extra training\\
    & rounds \\
    \bottomrule
\end{tabular}
\end{center}
\caption{\textbf{Notation Table}}\label{table:notation_table}
\end{table}

GBDT~\cite{friedman2001greedy} learns a sequence of decision trees by iteratively correcting errors of prior trees.
Let $X = \{(x_1, y_1), \ldots, (x_n, y_n)\} \subseteq \mathbb{R}^{m} \times \mathbb{R}$ denote a labeled dataset with $n$ data points and $m$ features. For simplicity, we denote $y$ as a label although a regression target applies equally.
A tree ensemble model $\phi_k \coloneqq \begin{bmatrix} f_1, \dots, f_k \end{bmatrix}$ minimizes
\[
\textstyle \mathcal{L}_k(\phi, X) = \sum_{(x_i,y_i) \in X} l(\phi_k(x_i), y_i) + \sum_{t=1}^k \Omega(f_t)
\]
where $l$ is a twice-differentiable convex loss function, e.g. squared error or binary cross-entropy, that measures the difference between the prediction $\tilde{y}_i \coloneqq \phi_k(x_i)$ and the label $y_i$, and $\Omega(f_t) = \sfrac{1}{2}\lambda \lVert V_t \rVert^2$ is a regularization term on the leaves vector $V_t = \mathrm{Leaves}(f_t)$.

Ensemble $\phi_k$ uses $k$ trees $(f_t)_{t=1}^k$ to predict $\tilde{y}_i$ for $x_i$:
\[
\textstyle \tilde{y}_i = \phi_k(x_i) = \sum_{t=1}^k \eta \cdot \mathrm{clamp}(f_t(x_i), -\beta,\beta)
\]
where $\eta$ is a learning rate and $\beta$ a regularization parameter for the clipping routine $\mathrm{clamp}(z,-\beta,\beta) = \max(-\beta, \min(z, \beta))$.

For training each tree $f_t$, XGBoost \cite{chen2016xgboost} proposes Newton boosting, a second-order approximation of the loss function:
\begin{align*}
    &\textstyle\phantom{\approx~}l(\phi_t(x_i), y_i) = l(\phi_{t-1}(x_i) + f_t(x_i), y_i) \\
    &\textstyle\approx l(\phi_{t-1}(x_i), y_i) + g_t(\phi_t(x_i),y_i) f_t(x_i) \\ 
    &+ \frac12 h_t(\phi_t(x_i),y_i) f_t^2(x_i)
\end{align*}
with gradient $g_t(\Tilde{y}_i,y_i) = \sfrac{\partial}{\partial \Tilde{y}_i} l(\Tilde{y}_i, y_i)$ and Hessian $h_t(\Tilde{y}_i,y_i) = \sfrac{\partial^2}{\partial \Tilde{y}_i^2} l(\Tilde{y}_i, y_i)$.
If we have a squared error loss function like for regression, we have a closed form of the gradient $g_t(\tilde{y}_i, y_i) = y_i - \tilde{y}_i$ and Hessian $h_t(\tilde{y}_i, y_i) = 1$. If we have a binary cross-entropy loss like for classification, then $g_t(\tilde{y}_i, y_i) = y_i - \mathrm{sigmoid}(\tilde{y}_i)$ and $h_t(\tilde{y}_i, y_i) = -\mathrm{sigmoid}(\tilde{y}_i)\cdot(1- \mathrm{sigmoid}(\tilde{y}_i))$ with $\mathrm{sigmoid}(z) = (1 + \exp(-z))^{-1}$.

\boldparagraph{Optimal split.} Each tree $f_t$ is recursively built from a root node to the leaves: each node splits a dataset $X$ in a left $I_{L}$ and right child $I_{R}$ given some split criterion $s$. Each child with its remaining dataset is then the basis for the next subtree. The process stops until a stopping criterion, e.g. the maximal tree depth, is reached. An optima-preserving equivalent of the optimal split criterion for each split $s$ is derived as:
\begin{align*}
&\textstyle G_t(I_{L}, I_{R}) = \\
&\frac{\Big(\sum_{(x_i,y_i) \in I_{L}} g_t(\phi_t(x_i),y_i)\Big)^2}{\sum_{(x_i,y_i) \in I_{L}} h_t(\phi_t(x_i),y_i) + \lambda} \\
&+ \frac{\Big(\sum_{(x_i,y_i) \in I_{R}} g_t(\phi_t(x_i),y_i)\Big)^2}{\sum_{(x_i,y_i) \in I_{R}} h_t(\phi_t(x_i),y_i) + \lambda}
\end{align*}
where $\lambda$ is the regularization parameter of $\Omega$.

\boldparagraph{Optimal leaf value.} The leaves of $f_t$ contain the tree's prediction, which is derived \blue{with the Newton method} as: 
\begin{align}\label{eq:leaf_value}
\textstyle V_t(I_{\mathit{Leaf}}) = -\frac{\sum_{(x_i,y_i) \in I_{\text{Leaf}}} g_t(\phi_t(x_i),y_i)}{\sum_{(x_i,y_i) \in I_{\text{Leaf}}} h_t(\phi_t(x_i),y_i) + \lambda}
\end{align}%
for a subset $I_{\mathit{Leaf}}$ of training data $X$ that ended up in this leaf.

\subsection{Individual Rényi filter} \label{sec:prelim_renyifilter}
The conventional privacy accounting approach involves a worst-case analysis of the privacy loss, assuming the global sensitivity for all individuals. This results in an overly conservative estimation of the privacy loss, as often a data point with many similar data points in the dataset has an individual sensitivity that is smaller than the global sensitivity and only those data points which have few similar data points in the dataset utilize the full sensitivity.

An individual Rényi filter ~\cite{feldman2020individual}, is a way to implement personalized privacy accounting. The Rényi filter measures privacy losses individually via individual Rényi Differential Privacy (cf. \Cref{def:individual_privacy}) and guarantees that the privacy loss of no data point will surpass a predefined upper bound $\rho(\alpha)$. This enables a differentially private mechanism, comprising a composed sequence of mechanisms $M_1, M_2, ..., M_{T_\text{max}}$, to execute as many rounds as desired using data points that have not expended their privacy budget, as long as the accounting for individual Rényi privacy losses is sound.

\begin{definition}[Individual Rényi Differential Privacy]
    Fix $n \in \mathbb{N}$ and a data point $x_i$. A randomized mechanism $M$ satisfies $(\alpha, \rho(\alpha))$-individual Rényi differential privacy if for all neighboring datasets $X,X'$ that differ in $x_{i}$, denoted as $X \sim_{x_{i}} X'$, and satisfy $|X|, |X'| \leq n$, it holds that 
   $D_{\alpha}(M(X)||M(X') \leq \rho(\alpha)$
    \label{def:individual_privacy}
\end{definition}

\Cref{alg:adaptivecompwithprivfilt} shows the individual Rényi filter algorithm. The algorithm obtains the individual Rényi privacy loss of round $t$ for each data point $x_i$ (line \ref{alg:adaptivecompwithprivfiltrho}). Next, the algorithm filters out all the data points that have surpassed the predefined upper bound on the privacy loss $\rho(\alpha)$ (line \ref{alg:adaptivecompwithprivfiltfilter}) using the privacy filter from \Cref{thmprivacyfilter} and then continues execution of mechanism $M_t$ in round $t$ only on the active data points that have not been filtered out. \citet{feldman2020individual} show that \Cref{alg:adaptivecompwithprivfilt} satisfies $(\alpha, \rho(\alpha))$-Rényi DP (cf. \Cref{thm:adaptive_composition_individual_privacy}).

\begin{theorem}[Rényi Privacy Filter, \cite{feldman2020individual} Theorem 4.3]
    Let \[\textstyle\mathcal{F}_{\alpha, \rho(\alpha)}(\rho_1, \rho_2, \dots, \rho_k) = \begin{cases}
        \text{CONT, if $\sum_{i=1}^k \rho_i(\alpha) \leq \rho(\alpha)$} \\
        \text{HALT, if $\sum_{i=1}^k \rho_i(\alpha) > \rho(\alpha)$}
    \end{cases}\]
    
    where $\rho(\alpha)$ is the upper bound on the privacy loss and $\rho_i(\alpha)$ ($i \in \{1,2,\dots,k\}$) the individual privacy loss of a data point for round $i$. Then $\mathcal{F}_{\alpha, \rho}$ is a valid Rényi privacy filter.
    \label{thmprivacyfilter}
\end{theorem}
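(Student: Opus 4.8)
The plan is to prove validity of the filter directly: show that, for every adaptive choice of mechanisms $\mathcal{A}_1,\mathcal{A}_2,\dots$ together with their declared per-round Rényi budgets $\rho_1(\alpha),\rho_2(\alpha),\dots$, the composed mechanism $M$ that runs the $\mathcal{A}_i$ in sequence until $\mathcal{F}_{\alpha,\rho(\alpha)}$ first returns HALT satisfies $(\alpha,\rho(\alpha))$-Rényi DP, since this is exactly what ``valid privacy filter'' asserts. Fix an order $\alpha>1$ and neighboring datasets $X\sim X'$, and let $P,Q$ be the transcript distributions of $M$ on $X,X'$. First I would normalize the run length: following~\cite{feldman2020individual}, once $\mathcal{F}$ returns HALT I append dummy rounds that ignore their input and output a fixed symbol (which is trivially $(\alpha,0)$-Rényi DP), so every execution produces a transcript $o=(o_1,\dots,o_k)$ of the same length $k$, and the filter's decisions and the declared budgets $\rho_i(\alpha)$ become deterministic functions of the prefix $o_{1:i-1}$ only. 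Set $\tilde\rho_i$ to be the declared $\rho_i(\alpha)$ on rounds actually run and $\tilde\rho_i:=0$ on dummy rounds; then by the adaptive-RDP assumption on each $\mathcal{A}_i$ (resp. by $P_i=Q_i$ on dummies) the conditional divergence obeys $D_\alpha^\leftrightarrow\!\big(P_i(\cdot\mid o_{1:i-1})\,\|\,Q_i(\cdot\mid o_{1:i-1})\big)\le\tilde\rho_i$, with $\tilde\rho_i$ measurable in $o_{1:i-1}$.

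The core step is a supermartingale argument. Put $S_t:=\sum_{i=1}^t\tilde\rho_i$ (measurable in $o_{1:t-1}$) and define
\[ L_t \;:=\; \Big(\tfrac{P(o_{1:t})}{Q(o_{1:t})}\Big)^{\alpha}\,\exp\!\big(-(\alpha-1)S_t\big),\qquad L_0:=1. \]
Using the chain rule $P(o_{1:t})/Q(o_{1:t})=\prod_{i\le t}P_i(o_i\mid o_{1:i-1})/Q_i(o_i\mid o_{1:i-1})$ together with $\mathbb{E}_{o_t\sim Q_t(\cdot\mid o_{1:t-1})}\big[(P_t/Q_t)^{\alpha}\big]=\exp\big((\alpha-1)D_\alpha^\leftrightarrow(P_t\|Q_t)\big)\le e^{(\alpha-1)\tilde\rho_t}$, I would show $\mathbb{E}_Q[L_t\mid o_{1:t-1}]\le L_{t-1}$, i.e.\ $(L_t)$ is a supermartingale under $Q$: the point is that $S_t$ and the prefix ratio $P(o_{1:t-1})/Q(o_{1:t-1})$ are $o_{1:t-1}$-measurable, so they pull out of the conditional expectation and the surviving factor $e^{(\alpha-1)\tilde\rho_t}$ exactly cancels $e^{-(\alpha-1)(S_t-S_{t-1})}$. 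Hence $\mathbb{E}_Q[L_k]\le L_0=1$.

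It remains to cash this in. By the definition of $\mathcal{F}_{\alpha,\rho(\alpha)}$, along every realization the rounds actually run satisfy $\sum\rho_i(\alpha)\le\rho(\alpha)$ — the filter halts precisely before the bound would be violated — and dummy rounds add $0$, so $S_k\le\rho(\alpha)$ surely; since $\alpha-1>0$, this gives $e^{-(\alpha-1)S_k}\ge e^{-(\alpha-1)\rho(\alpha)}$ pointwise. Therefore $\mathbb{E}_Q\big[(P/Q)^{\alpha}\big]\le e^{(\alpha-1)\rho(\alpha)}\,\mathbb{E}_Q[L_k]\le e^{(\alpha-1)\rho(\alpha)}$, so $D_\alpha^\leftrightarrow(M(X)\|M(X'))=\tfrac{1}{\alpha-1}\log\mathbb{E}_Q[(P/Q)^{\alpha}]\le\rho(\alpha)$. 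Since $X\sim X'$ and the adaptive strategy were arbitrary, $\mathcal{F}_{\alpha,\rho(\alpha)}$ is a valid Rényi privacy filter; the per-data-point version used in \Cref{alg:adaptivecompwithprivfilt} follows by the same argument with $X\sim_{x}X'$ and the individual budgets $\rho_t^{(i)}(\alpha)$ in place of $\rho_i(\alpha)$.

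The part I expect to be delicate is the interaction of two sources of adaptivity: both the stopping round and the declared budgets $\rho_i(\alpha)$ depend on the realized outputs, so one cannot merely add fixed per-round bounds, and a naive round-by-round peeling fails because a future $\tilde\rho_j$ depends on outputs that one is still integrating over. The padding-plus-predictability normalization is what resolves this — it turns the problem into a fixed-length product measure in which $S_t$ is predictable, making $L_t$ an honest supermartingale and letting the surely-true constraint $S_k\le\rho(\alpha)$ be applied inside the expectation afterwards. The only remaining care is the usual boundary bookkeeping for $D_\alpha^\leftrightarrow$ when a conditional likelihood ratio is $0$ or $\infty$, which is handled exactly as in the standard Rényi-DP adaptive composition theorem that this argument mirrors.
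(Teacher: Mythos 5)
The paper does not actually prove this statement: it imports Theorem 4.3 of \cite{feldman2020individual} verbatim, and for the downstream composition result (\Cref{thm:adaptive_composition_individual_privacy}) it offers only an informal intuition while explicitly deferring the proof to the original paper. Your proposal therefore cannot be compared to an in-paper argument, but it is a correct reconstruction of the Feldman--Zrnic proof: the padding to a fixed horizon, the predictability of the declared budgets $\tilde\rho_i$ in the transcript prefix, the supermartingale $L_t$ with $\mathbb{E}_Q[L_k]\le 1$, and the final step exploiting the surely-true constraint $S_k\le\rho(\alpha)$ enforced by the filter are exactly the ingredients of their argument. Two points are load-bearing and you get both right: (i) the filter in \Cref{alg:adaptivecompwithprivfilt} tests the cumulative sum \emph{including} the current round's budget and zeroes that budget when the point is excluded, which is what makes $S_k\le\rho(\alpha)$ hold surely rather than only ``up to one overshooting round''; (ii) in the individual setting the budget $\rho_t^{(i)}(\alpha)$ is defined as a supremum over all pairs $X\sim_{x_i}X'$ given the prior outputs, so it is measurable in the transcript prefix alone --- without this predictability the supermartingale step would fail, since a budget depending on unreleased data could not be pulled out of the conditional expectation. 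The only cosmetic mismatch is that the theorem as stated speaks of ``a valid Rényi privacy filter'' without restating Feldman--Zrnic's formal definition of validity (the fully adaptive composition game with a halting rule); your opening sentence supplies that definition correctly, which is necessary for the claim to be provable at all.
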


\begin{algorithm}
    \caption{Adaptive composition with individual privacy filtering (cf. \cite[Algorithm 3]{feldman2020individual})}
    \KwIn{$D : \text{dataset}$} 
    \SetKwInOut{KwIn}{\color{white}\phantom{Input}}
    \KwIn{$(M_1, M_2, \dots, M_{T_\text{max}}) : \text{sequence of mechanisms}$} 
    \KwIn{$\hat{\alpha} : \text{Rényi DP parameter}$}
    \KwIn{$\rho(\hat{\alpha}) : \text{upper bound on  Rényi DP privacy loss}$}
    \For{$t=1$ to $T_\text{max}$}{
        \For{$x_i \in D$}{ 
        $\rho_{t}^{(i)}(\alpha) := \sup{}_{X \sim{}_{x_i} X'}
        D_{\alpha}(M_{t}(a_1, \dots, a_{t-1}, X)$
        $|| M_{t}(a_1, \dots, a_{t-1}, X'))$ \label{alg:adaptivecompwithprivfiltrho} \\
        }
        Determine active set $D_{t} = (x_i : \mathcal{F}_{\hat{\alpha}, \rho(\hat{\alpha})}(\rho_{1}^{(i)}, \rho_{2}^{(i)}, \dots, \rho_{t}^{(i)}) = \text{CONT})$ \label{alg:adaptivecompwithprivfiltfilter} \\
        For all $x_i \in D$, set $\rho_{t}^{(i)} \leftarrow 0$ if $x_i \notin D_t$ \\
        Compute $a_t = M_{t}(a_1, \dots, a_{t-1}, D_t)$
    }
    \Return $(a_1, a_2, \dots, a_{T_\text{max}})$
    \label{alg:adaptivecompwithprivfilt}
\end{algorithm}

\begin{theorem}[Theorem 4.5 in \cite{feldman2020individual}]
    Adaptive composition with individual Rényi filters (~\Cref{alg:adaptivecompwithprivfilt}) using the Rényi filter from ~\Cref{thmprivacyfilter} satisfies $(\alpha, \rho(\alpha))$-Rényi differential privacy.
    \label{thm:adaptive_composition_individual_privacy}
\end{theorem}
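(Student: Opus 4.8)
The plan is to reproduce the privacy-filter martingale argument of Feldman and Zrnic~\cite{feldman2020individual}. Fix two neighboring datasets $D_0 \sim_{x_i} D_1$ with $|D_0|,|D_1|\le n$; it suffices to bound $D_\alpha^\leftrightarrow(\mathcal{A}(D_0)\|\mathcal{A}(D_1)) \le \rho(\alpha)$, where $\mathcal{A}$ denotes the full composed output $(a_1,\dots,a_k)$ of \Cref{alg:adaptivecompwithprivfilt}. First I would couple the two executions on a common transcript space: because each $\rho_t^{(j)}(\alpha)$ in line~\ref{alg:adaptivecompwithprivfiltrho} is a supremum over \emph{all} neighbors differing in $x_j$ — hence a function of the prefix $a_{1:t-1}$ and $\mathcal{A}_t$ alone, not of the underlying data — the filter decisions of \Cref{alg:adaptivecompwithprivfilt} for every data point $x_j\neq x_i$ agree under $D_0$ and $D_1$ given the same $a_{1:t-1}$, and so do the decisions for $x_i$ itself. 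Consequently, in round $t$ the active sets $S_t^{(0)}\subseteq D_0$ and $S_t^{(1)}\subseteq D_1$ either coincide (when $x_i$ is filtered out, so $\mathcal{A}_t$ produces identical conditional distributions under both datasets) or satisfy $S_t^{(0)}\sim_{x_i}S_t^{(1)}$ (when $x_i$ is still active).

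Next I would set up the likelihood-ratio supermartingale. Let $f_t(\cdot\mid a_{1:t-1})$ and $g_t(\cdot\mid a_{1:t-1})$ be the conditional densities of $a_t$ under $\mathcal{A}(D_0)$ and $\mathcal{A}(D_1)$, let $R_t = f_t(a_t\mid a_{1:t-1})/g_t(a_t\mid a_{1:t-1})$, and define
\[
M_t \;=\; \prod_{s=1}^{t} R_s^{\alpha-1}\, e^{-(\alpha-1)\rho_s^{(i)}(\alpha)}.
\]
Both $\rho_t^{(i)}(\alpha)$ and the membership of $x_i$ in $S_t$ are $a_{1:t-1}$-measurable. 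When $x_i$ is inactive, $R_t\equiv 1$ and $\rho_t^{(i)}(\alpha)=0$ (after the reset in \Cref{alg:adaptivecompwithprivfilt}), so the round-$t$ factor equals $1$; when $x_i$ is active, applying \Cref{def:individual_privacy} to the neighbors $S_t^{(0)}\sim_{x_i}S_t^{(1)}$ gives $\mathbb{E}\!\left[R_t^{\alpha-1}\mid a_{1:t-1}\right]\le e^{(\alpha-1)\rho_t^{(i)}(\alpha)}$. In either case $\mathbb{E}[M_t\mid a_{1:t-1}]\le M_{t-1}$, so $(M_t)$ is a supermartingale under $\mathcal{A}(D_0)$ with $M_0=1$, whence $\mathbb{E}[M_k]\le 1$.

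Finally I would invoke the filter guarantee of \Cref{thmprivacyfilter}: since $\mathcal{F}_{\alpha,\rho(\alpha)}$ removes $x_i$ (zeroing its round loss) exactly when the running sum would exceed $\rho(\alpha)$, we have $\sum_{s=1}^{k}\rho_s^{(i)}(\alpha)\le\rho(\alpha)$ pointwise along every transcript. For $\alpha>1$ this gives $e^{-(\alpha-1)\sum_s\rho_s^{(i)}(\alpha)}\ge e^{-(\alpha-1)\rho(\alpha)}$, hence
\begin{align*}
\mathbb{E}_{\mathcal{A}(D_0)}\!\Big[\big(\tfrac{f}{g}\big)^{\alpha-1}\Big]
&= \mathbb{E}\!\Big[M_k\, e^{(\alpha-1)\sum_s\rho_s^{(i)}(\alpha)}\Big] \\
&\le e^{(\alpha-1)\rho(\alpha)}\,\mathbb{E}[M_k]\;\le\; e^{(\alpha-1)\rho(\alpha)},
\end{align*}
which is exactly $D_\alpha^\leftrightarrow(\mathcal{A}(D_0)\|\mathcal{A}(D_1))\le\rho(\alpha)$; the boundary case $\alpha=1$ follows from the analogous additive KL-supermartingale (or by taking $\alpha\downarrow1$). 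The main obstacle, and the only genuinely delicate point, is the coupling step: one must argue carefully that the per-point quantities $\rho_t^{(j)}(\alpha)$ and the induced filter decisions are predictable and data-independent enough that the executions on $D_0$ and $D_1$ share a transcript filtration, and that rounds in which $x_i$ is filtered out contribute a likelihood ratio identically equal to $1$; once that is in place, the rest is the standard optional-stopping/supermartingale bookkeeping.
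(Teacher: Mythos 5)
Your reconstruction is correct and follows exactly the supermartingale/privacy-filter argument of Feldman and Zrnic that the paper itself defers to: the paper gives no proof of \Cref{thm:adaptive_composition_individual_privacy}, only the intuition that the filter decision is predictable from the transcript and that filtered-out rounds leak nothing, which is precisely the coupling and "likelihood ratio $\equiv 1$" step you make rigorous before the standard optional-stopping bookkeeping. The only cosmetic caveat is that under the paper's add/remove notion of neighboring, $x_i$ is present in only one of $D_0,D_1$, so "the decisions for $x_i$ itself agree" should read that the active sets either coincide or differ only in $x_i$ — which is what your argument actually uses.
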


The intuition for the proof of \Cref{thm:adaptive_composition_individual_privacy} is that, whether a data point $x$ is active in some round, does not depend on the rest of the input dataset but only on the outputs of prior rounds, which are known to the adversary. Once $x$ is excluded from training, it does not lose any more privacy, because the output of any round from there on looks the same whether $x$ is present in the dataset or not.

\subsection{Subsampled Rényi differential privacy}\label{sec:prelim_subsampling}

Privacy amplification by subsampling allows a stronger privacy guarantee when choosing the data points for a single training round randomly from the training dataset rather than training on a fixed subset of the training dataset or even on the whole training dataset. Privacy amplification by subsampling was first analyzed by \citet{li2011sampling}. In this work, we utilize Poisson Subsampling~\cite{balle2018privacy} where each data point is chosen according to a Bernoulli experiment with probability $\gamma$, to obtain a batch of training data.
We utilize the bound of ~\Cref{thm:subsampling_preliminaries} by \citet{pmlr-v97-zhu19c} for subsampled Rényi DP which demands that the to-be-subsampled mechanism satisfies a lower bound on its Pearson-Vajda $\mathcal{X}^{l}$ pseudo-divergence.

\begin{definition}[Pearson-Vajda pseudo-divergence, Vajda (1973)]
    Let $P, Q$ be two probability distributions defined over $\mathcal{R}$. Let $P(x)$, $Q(x)$ be the density of $P, Q$ at $x$. The Pearson-Vajda $\mathcal{X}^{l}$ pseudo-divergence is defined as
    $D_{\mathcal{X}^{l}}(P||Q) := \int_{-\infty}^{\infty} Q(x) \Big(\frac{P(x)}{Q(x)}-1\Big)^{l}$
\end{definition}

\begin{theorem}[Privacy Amplification by Subsampling, Theorem 8 in \cite{pmlr-v97-zhu19c}]
    Let $\mathcal{M}$ be any randomized mechanism that obeys $(\alpha, \rho'(\alpha))$-Rényi differential privacy. Let $\gamma$ be the subsampling ratio and $\alpha \geq 2$. Let $M^{\mathcal{P}_\gamma} = \mathcal{M} \circ \mathcal{P}_\gamma$ and $\mathcal{P}_\gamma$ generating a Poisson subsample with subsampling ratio $\gamma$. 
    If for all neighboring datasets $X \sim X'$ and all odd $3 \leq l \leq \alpha$,
    $D_{\mathcal{X}^{l}}(M(X)||M(X')) \geq 0$
    then $M^{\mathcal{P}_\gamma}$ is tightly $(\alpha, \rho(\alpha))$-Rényi differentially private with $\rho(\alpha) = \frac{1}{\alpha-1} \log{}\bigg((1-\gamma)^{\alpha-1} (\alpha\gamma - \gamma +1) + \sum_{l=2}^{\alpha} \binom{\alpha}{l} (1-\gamma)^{\alpha - l} \gamma^l e^{(l-1) \cdot \rho'(l)}\bigg)$.
    \label{thm:subsampling_preliminaries}\label{thmsubsampling}
\end{theorem}

\section{Problem statement \& related work}\label{sec:problem_statement}
    Our task is to build a differentially private gradient boosted decision tree ensemble (GBDT). As indicated in the preliminaries, a GBDT has two primary data-dependent parts: splits and leaves. \blue{In prior work, the splits are selected data-independent, e.g. a random split selection, whereas the leaves are built with an additive noise mechanism. \dpgbdt uses both techniques as a foundation.}

\subsection{Random split selection}\label{sec:prelim_randomsplit}
A decision tree ensemble with randomized splits can yield good utility~\cite{bojarski2014differentially} and limits the privacy leakage to learning leaf-values. Random splits are constructed by randomly selecting a feature of the dataset and a value for that feature as a split.
For a numerical feature $i$, we assume a fixed feature range $(v^{(i)}_{\text{min}}, v^{(i)}_{\text{max}})$ for sampling.

\citet{bojarski2014differentially} propose to use random splits for binary classification using random forests: As the trees in the ensemble are random, most trees do not improve the decision to which class a data point belongs. This yields an almost even distribution between trees predicting class `0' and class `1'.
With high probability, however, a few random trees do improve the class-prediction, which suffices to tilt the prediction of the ensemble toward the right class.

\citet{nori2021accuracy} propose \emph{cyclical feature interaction}: Split in all nodes of each tree $t$ on one feature $i \in \{1,\dots,m\}$ only and the next tree splits on the next feature ($i = t \mod m$). This technique boosts the performance in a DP setting.

\citet{Maddock_2022} also investigate random splits and propose different candidate selections: one variant is based on candidates that are chosen equidistantly from a predefined split candidate set. This prevents too fine-grained splits but is also not flexible. Another variant is called \emph{iterative Hessian} which refines an initial equidistant split candidate set during ensemble training using tree-specific information: the aggregated Hessian as used in the leaf. For each split candidate, the aggregated Hessian is calculated: small values indicate data point absence which is handled by merging split candidates while large values indicate a large data point density which is handled by further subdividing this split candidate. A differentially private aggregated Hessian is obtained the same as in the leaf. This method refines the split candidates for the first $s$ training rounds and increases the privacy budget for our algorithm as follows. If \dpgbdt without split refinement is $(\alpha, \rho_1(\alpha))$-RDP, then \dpgbdt with split refinement is $(\alpha, \rho_1(\alpha) + s\rho_2(\alpha))$-RDP where $\rho_2(\alpha)$ is the RDP bound for only releasing the Hessian. The bound is derived via sequential composition (cf. \Cref{thmrdpsequentialcomposition}) and because the split refinement uses all training data. Our experiments show that the split refinement does not improve the utility of \dpgbdt.

\subsection{Leaf Noising}\label{sec:prelim_leafnoising}
The leaf value $V$ is stated in \Cref{eq:leaf_value}. Prior work~\cite{li2020privacy} proposes to additively noise each leaf value $V$ with Laplace noise proportional to the sensitivity $s$ of the leaf value: $V + \mathrm{Lap}(0,\sfrac{s}{\eps_{\text{leaf}}})$. Since each data point is only in one leaf of a tree, parallel composition applies which means that the privacy budget for each leaf is the same as for all leaves of a tree. While they provide a loose sensitivity to the leaf value, \citet{Maddock_2022} propose to Gaussian noise the numerator, i.e. gradient sum, and denominator, i.e. Hessian sum, individually. Thus, we only need to bound each sensitivity individually which is significantly less loose. For classification and due to its loss function, each gradient $g$ is naturally bounded between $[-1,1]$ and each Hessian $h$ between $[0,0.25]$. For regression, each gradient is unbounded but each Hessian is exactly $1$. To accommodate both settings, we clamp each gradient $g$ by $g^*$ and each Hessian $h$ by $h^*$ which can even for classification boost the utility-privacy tradeoff as the noise scales with the clipping bound. Concretely, the sensitivity of a clamped version of the gradient sum $\sum_{(x_i,y_i) \in I_{\text{Leaf}}} \mathrm{clamp}(g_t(\phi_t(x_i),y_i), -g^*, g^*)$ is bounded by a fixed $g^*$ where $\mathrm{clamp}(z,-g^*,g^*)$ clips input $z$ to $[-g^*, g^*]$ and $I_{\text{Leaf}}$ are the data points within a leaf. Similarly, the sensitivity of a clamped version of the Hessian sum $\sum_{(x_i,y_i) \in I_{\text{Leaf}}} \mathrm{clamp}(h_t(\phi_t(x_i),y_i), -h^*, h^*)$ is bounded by $h^*$.

\blue{
\subsection{Distributed learning: Batched Updates}\label{sec:prelim_batchedupdates}
To reduce the communication cost in distributed GBDT training, \citet{Maddock_2022} propose batched updates: A technique that only syncs the prediction updates among the users and thus updates the gradient and Hessian every $B$ batches of the $T$ total rounds. For syncing, $B$ many prediction updates are averaged. This technique effectively reduces the communication cost from $\mathcal{O}(T)$ to $\mathcal{O}(\sfrac{T}{B})$ as only $\sfrac{T}{B}$ many trees are trained sequentially and the others in parallel like in a random forest. Utility-wise, less information is used, yet \citet{Maddock_2022} suggest in their experiments no significant utility loss and a slight utility advantage in the high-privacy regime ($\eps \le 0.1$) for small batches. Privacy-wise, the accounting remains the same as the same number of trees are released.
}

\subsection{Discussion of closely related work}

\blue{
\boldparagraph{DP GBDT learners.} 
\citet{Maddock_2022} establish GBDTs as a promising direction for differentially private machine learning. We contribute significantly to the utility-privacy tradeoff by privacy amplification by \emph{subsampling} and our \emph{leaf-balanced noise}. We pose a solution to learning on a stream of non-IID data using Rényi filters.

\begin{itemize}[leftmargin=*]
    \item Subsampling does not improve \citet{Maddock_2022} without our tight bounds in \Cref{cor:pearson_vajda_condition_general} as it either leads to an untight generic bounds (e.g. a factor 15 worse in one of our runs) or requires univariate noise (due to a missing condition that we prove). As we add 2-dimensional noise per leaf in Newton Boosting, a subsampled \citet{Maddock_2022} would need to fall back to an inefficient gradient boosting like in \cite{li2020privacy} where we add noise per leaf proportional to $\mathcal{O}(g^*)$ instead of $\mathcal{O}(\sfrac{g^*}{n})$ with the gradient clipping bound $g^*$ and the number of data points in a leaf $n$.
    \item Leaf-balanced noise does not improve \citet{Maddock_2022} without our tight bounds in \Cref{thm:individual_rdp_nonspherical_gauss} as prior mechanisms like MVG \cite{chanyaswad2018mvg} do not provide tight privacy bounds (e.g. RDP). As in MVG no RDP bounds are provided, other improvements like subsampling and individual Rényi filters remain an open challenge.
    \item \citet{Maddock_2022} do not investigate individual Rényi filters, training an initial score, or learning on a stream of non-IID data.
    \item For distributed learning, we use similar techniques as in \cite{Maddock_2022}.
\end{itemize}

}

\boldparagraph{Subsampling for individual RDP.} The work by \citet{yu2023individual} applies individual privacy accounting to DP-SGD and shows that individual RDP works seamlessly with privacy amplification by subsampling. Their work uses the moments accountant \cite{Abadi_2016} which can be applied to spherical multivariate Gaussians and needs numerical evaluation that can suffer from imprecision compared to an analytical bound. In this work, we \emph{prove} a tight RDP bound for subsampling by utilizing the analytical bound by \citet{pmlr-v97-zhu19c}. We also generalize the condition required for using the bound by \citet{pmlr-v97-zhu19c} to a novel condition which only states that the RDP bound of the to-be-subsampled mechanism has to be linear in RDP $\alpha$. Using our novel condition, we can precisely evaluate the bound of \citet{pmlr-v97-zhu19c} even for our leaf-balanced noise which resembles a non-spherical multivariate Gaussian.

The original individual Rényi Filter work \cite{feldman2020individual} does not cover subsampled RDP bounds. A follow-up work by \citet{koskela2023individualgdp} applies individual Gaussian DP to a non-subsampled DP-GD; however, their work does cover a subsampled DP-SGD by a numerical implementation of a privacy loss accountant instead of an analytical bound as we have.

\boldparagraph{Leaf-balanced noise.} Prior work \cite{Maddock_2022} on DP GBDTs does not offer the noise weighting of our leaf-balanced noise. There is work on using non-spherical multivariate Gaussian noise for DP: MVG \cite[Theorem 3]{chanyaswad2018mvg} proposes differential privacy accounting of matrix-variate Gaussian noise. For that, they bound the product of the norms of the singular values of the inversed covariance matrices proportional to $\mathcal{O}(\eps)$ and $\mathcal{O}(ln^2 \delta)$. This offers a more general setting, as \dpgbdt only handles diagonal multi-variate Gaussian noise\footnote{Our leaf-balanced noise could be generalized if the covariance matrix is public, since rotations preserve distances and thus the sensitivity remains unchanged.}, yet 
our work provides tight RDP accounting which is essential for applying the tight subsampling bounds (cf. \Cref{sec:exactdp_subsampling}) as well as individual RDP accounting (cf. \Cref{thm:individual_rdp_nonspherical_gauss}), which has not been covered by MVG.

Another work \cite[Lemma 1]{mcmahan2017learning} proposes a technique that noises the ratio of gradient sum to subsampled batch size (in our case the Hessian sum) once instead of each sum individually. For that, they lower bound the subsampled batch size. However, this only works if the subsampled batches have a similar size. In the case of GBDTs, we have a widely varying Hessian sum, where a fixed lower bound would heavily deteriorate utility. In fact, \citet{li2020privacy} assume such a scenario of a subsampled batch size of $1$ (better: $0$) in each leaf.

\section{Tight Individual RDP for subsampled non-spherical multivariate Noise}
    \label{sec:building_blocks}
    In these sections, we present novel technical components for individual Rényi DP that might be of independent interest.
\subsection{Exact RDP bounds for Poisson subsampling}
    \label{sec:exactdp_subsampling}
    We employ subsampling to boost the utility-privacy tradeoff. To also facilitate an individual Rényi filter we need subsampling bounds in individual Rényi DP accounting, for which prior work \cite{yu2023individual} has shown that an individual Rényi filter works seamlessly with subsampled individual Rényi DP. 

For subsampled RDP, there exists a generic analytical bound by \citet{pmlr-v97-zhu19c} (cf.  \Cref{thm:subsampling_preliminaries}) which is tight if the Pearson-Vajda $\mathcal{X}^l$ pseudo-divergence $D_{\mathcal{X}^l}(M(D_{0})||M(D_{1}))$ is semi-positive for all neighboring datasets $D_0 \sim D_1$, the to-be-subsampled mechanism $M$ and all odd $3 \leq l \leq \alpha$. When $\gamma \, \alpha \, e^{\rho(\alpha)} \ll 1$ the tight bound and the general upper bound \cite[Theorem 5]{pmlr-v97-zhu19c} match up to a multiplicative factor of $1 + \mathcal{O}(\gamma \, \alpha \, e^{\rho(\alpha)})$ for subsampling ratio $\gamma$ and Rényi-DP bound of the to-be-subsampled mechanism $(\alpha, \rho(\alpha))$. In all other cases, the bounds match up to an additive factor of $\frac{\log{(3)}}{\alpha-1}$. For our best-performing hyperparameter setting of \dpgbdt{} on the Abalone regression dataset for $\varepsilon=0.1$, we observe an improvement in $\rho(\alpha)$ from $0.7443$ to $0.04705$ for $\alpha=241$ when using the tight bound compared to the general upper bound.

\citet[Theorem 17]{pmlr-v97-zhu19c} provide a proof for one-\\dimensional Gaussians that enables usage of the tight subsampling bound. Consequently, this proof does not apply to more general mechanisms, including mechanisms using our leaf-balanced noise, a non-spherical multivariate Gaussian. We provide in \Cref{cor:pearson_vajda_condition_general} a general proof of the Pearson-Vajda $\mathcal{X}^l$ pseudo-divergence condition with one notable requirement: \blue{To apply the tight subsampling bound of \Cref{thm:subsampling_preliminaries}~\cite{pmlr-v97-zhu19c}, the to-be-subsampled mechanism $M$ has to have a RDP bound $\rho(\alpha)$ that is linear in $\alpha$ and semi-positive.
This holds for variants of the Gaussian mechanism like multi-variate or non-spherical Gaussian, cf. \Cref{thm:individual_rdp_nonspherical_gauss}, but not Laplace or randomized response (cf. \cite[Table 2]{DBLP:journals/corr/Mironov17}).}

\begin{lemma}
    Let $M$ be the to-be-subsampled mechanism that satisfies $(\alpha,\rho(\alpha))$-Rényi DP and $X \sim X'$ two neighboring datasets. If $\rho(\alpha) = \alpha \cdot t$ for any $t\ge 0$, then the Pearson-Vajda $\mathcal{X}^{\alpha}$ pseudo-divergence of $M$ is semi-positive, i.e. $D_{\mathcal{X}^{\alpha}}(M(X)||M(X')) \geq 0$, for all odd $\alpha \geq 1$.
    \label{cor:pearson_vajda_condition_general}
\end{lemma}
\begin{proof}
    By induction, we show that for all odd $\alpha \geq 1$, $M$ satisfies the condition for any neighboring datasets $X \sim X'$. Let $o$ be an observation of $M$ and $p_{M(X)}, p_{M(X')}$ be the densities of $M(X), M(X')$.
   
    \begin{align*}
       &\text{\textbf{Base case ($\alpha=1$):}} \\ 
       &\textstyle\phantom{=~} D_{\mathcal{X}^{\alpha}}(M(X)||M(X'))
        = \Econd{M(X')}{\frac{M(X)}{M(X')} - 1}^{\alpha} \\
        &\textstyle= \int_{o} p_{M(X')}(o) \cdot \left(\frac{p_{M(X)}(o)}{p_{M(X')}(o)} - 1\right) \, do \\
        &\textstyle= \int_{o} p_{M(X)}(o) - p_{M(X')}(o) \, do = 0
    \end{align*}
    \boldparagraph{Inductive step ($\alpha-2 \to \alpha$):}
    Note that by rewriting the Rényi divergence we have
    \[\textstyle\Econd{M(X')}{\frac{M(X)}{M(X')}}^{\alpha} = \exp{((\alpha-1) \cdot D_{\alpha}(M(X)||M(X')))}.\]
    Using this equality and the binomial identity we get
    \begin{align*}
        &\textstyle D_{\mathcal{X}^{\alpha}}(M(X)||M(X')\textstyle) = \sum_{l=0}^{\alpha} \binom{\alpha}{l} (-1)^{\alpha-l} \, \Econd{M(X')}{\frac{M(X)}{M(X')}}^{l}\\
        &\textstyle= \sum_{l=0}^{\alpha} \binom{\alpha}{l} (-1)^{\alpha-l} \, \exp{\left((l-1) \cdot \rho(l)\right)} \\
        &\textstyle= \alpha - 1 + \sum_{l=2}^{\alpha} \binom{\alpha}{l} (-1)^{\alpha-l} \, \exp{\left((l-1) \cdot l \cdot t\right)}.
    \end{align*}
    We investigate the partial derivative in $t$ of the Pearson-Vajda $\mathcal{X}^{\alpha}$ pseudo-divergence:
    \begin{align*}
        &\textstyle\phantom{=~} \frac{\partial}{\partial t} D_{\mathcal{X}^{\alpha}}(M(X)||M(X')) \\
        &\textstyle= \sum_{l=2}^{\alpha} \binom{\alpha}{l} (-1)^{\alpha-l} \exp{((l-1) \cdot l \cdot t)} \cdot (l-1) \cdot l \\
        &\textstyle= \sum_{l=2}^{\alpha} \binom{\alpha}{l} (-1)^{\alpha-l} \, \Econd{M(X')}{\frac{M(X)}{M(X')}}^{l} \cdot (l-1) \cdot l
        \intertext{Since $\binom{\alpha}{l} \cdot \binom{l}{2} = \binom{\alpha}{2} \cdot \binom{\alpha-2}{l-2}$, for $\tilde{l} \coloneqq l - 2$ and $\tilde{\alpha} \coloneqq \alpha(\alpha-1)$}
        &\textstyle= \tilde{\alpha} \sum_{l=2}^{\alpha} \binom{\alpha-2}{l-2} (-1)^{\alpha-l} \, \Econd{M(X')}{\frac{M(X)}{M(X')}}^{l} \\
        &\textstyle= \tilde{\alpha} \sum_{\tilde{l}=0}^{\alpha-2} \binom{\alpha-2}{\tilde{l}} (-1)^{\alpha-2-\tilde{l}} \, \Econd{M(X')}{\frac{M(X)}{M(X')}}^{\tilde{l}+2} \\
        &\textstyle= \tilde{\alpha} \Econd{M(X')}{\left(\frac{M(X)}{M(X')}\right)^{2} \sum_{\tilde{l}=0}^{\alpha-2} \binom{\alpha-2}{\tilde{l}} (-1)^{\alpha-2-\tilde{l}}\left(\frac{M(X)}{M(X')}\right)^{\tilde{l}} }
        \intertext{Using the binomial identity we get}
        &\textstyle= \tilde{\alpha} \Econd{M(X')}{ \left(\frac{M(X)}{M(X')}\right)^{2} \left(\frac{M(X)}{M(X')} - 1\right)^{\alpha-2} }
        \intertext{By \citet[Lemma 16]{pmlr-v97-zhu19c} we lower bound}
        &\textstyle\geq \tilde{\alpha} \Econd{M(X')}{ \left(\frac{M(X)}{M(X')} - 1\right)^{\alpha-2}} \\
        &\textstyle= \underbrace{\alpha (\alpha-1)}_{\geq 0} \, \cdot \underbrace{D_{\mathcal{X}^{\alpha-2}}(M(X)||M(X'))}_{\geq 0 \text{ by inductive assumption}} \geq 0
    \end{align*}
    Since the partial derivative in $t$ is semi-positive and for $t=0$ we have $\Econd{M(X')}{\left(\frac{M(X)}{M(X')} - 1\right)^{\alpha}} = 0$, we conclude that \break $\Econd{M(X')}{\left(\frac{M(X)}{M(X')} - 1\right)^{\alpha}} \geq 0 ~ \forall_{t\ge0}$. 
    Thus by induction, we conclude that $\Econd{M(X')}{\left(\frac{M(X)}{M(X')} - 1\right)^{\alpha}} \geq 0 ~ \forall_{t\ge0} \forall_{\alpha \geq 1, \alpha \text{ odd}}$.
\end{proof}

If $\rho(\alpha) = \alpha \cdot t$ then the Pearson-Vajda pseudodivergence precondition of \Cref{thm:subsampling_preliminaries} is fulfilled, thus the conclusion of \Cref{thm:subsampling_preliminaries} applies, i.e. the subsampled mechanism is RDP with
   $\rho(\alpha) = \frac{1}{\alpha-1} \log{}\big((1-\gamma)^{\alpha-1} (\alpha\gamma - \gamma +1) + \sum_{l=2}^{\alpha} \binom{\alpha}{l} (1-\gamma)^{\alpha - l} \gamma^l e^{(l-1) \cdot \rho'(l)}\big)$.

\subsection{Individual RDP bound for Leaf-balanced noise}
    \label{sec:tight_leaf_balanced_noise}
    To release the value of a leaf in GBDT in a differentially private manner, we consider the function $f(X) = (w, u)$ for an arbitrary dataset $X$. $f$ clips the individual gradients and Hessians from $X$ to bound the sensitivities and computes the sum of gradients $u$ and the sum of Hessians $w$. To preserve differential privacy, Gaussian noise is calibrated to the privacy budget and the clipping bound and applied to the output of $f$. The leaf value is then constructed by post-processing on the released noisy values, $\tilde{u}/\tilde{w}$.

When noising $f$, we utilize leaf-balanced non-spherical multivariate Gaussian noise with covariance matrix 
\[\textstyle\Sigma = \diag\Big(\frac{(h^{*})^{2}\sigma_\text{leaf}^2}{2\cdot r_1}, \frac{(g^{*})^{2}\sigma_\text{leaf}^2}{2\cdot r_2}\Big)\]
with $r_1, r_2$ satisfying $r_1+r_2=1$. We sample $Y \sim \mathcal{N}(0, \Sigma)$ and then release $f(X) + Y = (w, u) + Y$.  Here, $r_1, r_2$ allow us to better calibrate the amount of noise that is applied to the two values $w$ and $u$. 

We believe that our adaptation of the Gaussian mechanism with leaf-balanced noise is of independent interest, as its idea of finetuning the shares of noise over the dimensions of the output can be helpful for different functions even with an arbitrary number of output dimensions. We analytically derive individual Rényi DP bounds for this mechanism in \Cref{thm:individual_rdp_nonspherical_gauss}. 

\begin{theorem}[Individual RDP of Gaussian Mechanism with leaf-balanced non-spherical noise]
    Let $f \colon \mathcal{X} \rightarrow \mathbb{R}^D$ denote a function from an arbitrary input $X \in \mathcal{X}$ to a set of scalars. Let the function $f$ projected to its $d$-th output have bounded sensitivity $s_d \in \mathbb{R}$ and bound individual sensitivity $s^{(i)}_d \in \mathbb{R}$. Let $Y \sim \mathcal{N}(\mathbf{0},\Sigma)$ be a random variable of non-spherical multivariate Gaussian noise with covariance matrix $\Sigma \coloneqq D^{-1}\cdot\diag(r_1^{-1} s_{1}^{2} \sigma^2, \dots, r_D^{-1} s_{D}^{2} \sigma^2 )$ where a given variance $\sigma^2 \in \mathbb{R}^+$ is weighted in each dimension $d$ individually by $s_{d}^{2}$ and $r_d \in \mathbb{R}^+$ such that $\sum_{d=1}^D r_d = 1$. Then the Gaussian mechanism $M(X) \mapsto \set{f(X)_d + Y_d}_{d=1}^D$ satisfies $(\alpha, \rho(\alpha))$-individual RDP for $x_i \in X$ with $\rho(\alpha) = \alpha\cdot \frac{D}{2\sigma^2} \cdot \sum_{d=1}^D \frac{r_d\cdot (s^{(i)}_d)^2}{s_{d}^{2}}$.
    \label{thm:individual_rdp_nonspherical_gauss}
\end{theorem}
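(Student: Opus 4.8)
\emph{Proof plan.} The plan is to reduce the claim to the classical Rényi-divergence formula for two Gaussians sharing a covariance matrix, exploiting that here the covariance is diagonal so the $D$ coordinates of the output are independent. First I would fix an arbitrary pair of datasets $X \sim_{x_i} X'$ differing only in $x_i$ and abbreviate $\mu \coloneqq f(X)$ and $\mu' \coloneqq f(X')$. By construction $M(X) \sim \mathcal{N}(\mu, \Sigma)$ and $M(X') \sim \mathcal{N}(\mu', \Sigma)$ with $\Sigma = D^{-1}\diag(r_1^{-1}\sigma^2, \dots, r_D^{-1}\sigma^2)$. Since $\Sigma$ is diagonal, both laws factor over coordinates: writing $\tau_d^2 \coloneqq \sigma^2/(D r_d)$ we have mutually independent coordinates with $M(X)_d \sim \mathcal{N}(\mu_d, \tau_d^2)$ and $M(X')_d \sim \mathcal{N}(\mu'_d, \tau_d^2)$.

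Next I would invoke two standard facts, each of which I would include a short self-contained derivation of. (i) Rényi divergence is additive over independent components: for $P = \bigotimes_{d} P_d$ and $Q = \bigotimes_d Q_d$ one has $D_\alpha^\leftrightarrow(P\|Q) = \sum_{d=1}^D D_\alpha^\leftrightarrow(P_d\|Q_d)$, which follows by plugging the product densities into the integral defining the divergence and applying Fubini. (ii) For one-dimensional Gaussians of common variance, $D_\alpha^\leftrightarrow\big(\mathcal{N}(a,\tau^2)\,\|\,\mathcal{N}(b,\tau^2)\big) = \tfrac{\alpha(a-b)^2}{2\tau^2}$; this is obtained by writing the defining integral $\int p(x)^{\alpha} q(x)^{1-\alpha}\,dx$ with $p,q$ the two Gaussian densities, merging the two exponents into a single quadratic in $x$, completing the square, and noting that the leading coefficient of that quadratic stays strictly positive for every $\alpha>1$, so the integral converges to a Gaussian normalizing constant. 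Combining (i) and (ii) gives
\[
D_\alpha^\leftrightarrow\big(M(X)\,\|\,M(X')\big) = \sum_{d=1}^D \frac{\alpha\,(\mu_d-\mu'_d)^2}{2\tau_d^2} = \frac{\alpha D}{2\sigma^2}\sum_{d=1}^D r_d\,(\mu_d-\mu'_d)^2 .
\]

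Finally I would apply the sensitivity hypothesis: since the $d$-th output of $f$ is $s_d^{(i)}$-$\ell_2$-sensitivity bounded with respect to inputs differing in $x_i$, every admissible pair satisfies $\Abs{\mu_d - \mu'_d} = \Abs{f(X)_d - f(X')_d} \le s_d^{(i)}$. Substituting and taking the supremum over all $X \sim_{x_i} X'$ yields $D_\alpha^\leftrightarrow(M(X)\|M(X')) \le \alpha D \cdot \tfrac{\sum_{d=1}^D r_d (s_d^{(i)})^2}{2\sigma^2} =: \rho(\alpha)$, which is exactly $(\alpha,\rho(\alpha))$-individual RDP for $x_i$. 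The only genuinely delicate step is fact (ii): performing the Gaussian integral cleanly and verifying that the quadratic's leading coefficient is positive for all $\alpha>1$ so the integral is finite, while keeping the $D^{-1}$ normalization in $\Sigma$ straight — it is precisely this factor, via $\Sigma^{-1} = (D/\sigma^2)\diag(r_1,\dots,r_D)$, that produces the $D$ in front of $\rho(\alpha)$. Everything else is bookkeeping with diagonal covariance matrices.
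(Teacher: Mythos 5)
Your proposal is correct and yields exactly the bound in the statement, but it takes a genuinely different route from the paper. The paper works through the privacy-loss-distribution (PLD) formalism of Sommer et al.: it writes down the Gaussian PLD of each coordinate, invokes the theorem that composition corresponds to convolution of PLDs (so the $D$-dimensional mechanism has a Gaussian PLD whose mean and variance are sums over coordinates), and then converts the resulting PLD to an RDP bound via the moment generating function of a Gaussian, using $\sigma_\omega^2 = 2\mu_\omega$. You instead exploit that the diagonal covariance makes the output coordinates independent, so the Rényi divergence is additive over the product measure, and then apply the classical closed form $D_\alpha^\leftrightarrow\big(\mathcal{N}(a,\tau^2)\,\|\,\mathcal{N}(b,\tau^2)\big)=\tfrac{\alpha(a-b)^2}{2\tau^2}$ per coordinate before taking the supremum over admissible pairs using the per-coordinate sensitivities. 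Your bookkeeping with $\tau_d^2=\sigma^2/(Dr_d)$ is right and correctly produces the factor $D$ in $\rho(\alpha)$. The tradeoff: your argument is more elementary and self-contained (only product-measure additivity and the one-dimensional Gaussian formula, both of which you derive), whereas the paper's PLD route imports heavier machinery but slots directly into the accounting framework it reuses elsewhere and would generalize to other additive noise distributions with known worst-case PLDs. One small remark on your "delicate step": for equal variances the leading coefficient of the merged quadratic is $-\tfrac{1}{2\tau^2}\big(\alpha+(1-\alpha)\big)=-\tfrac{1}{2\tau^2}$ independently of $\alpha$, so convergence is automatic here; the positivity concern only genuinely bites for unequal variances.
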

\begin{proof}
     We prove $(\alpha,\rho(\alpha))$-individual RDP as follows: (1) We derive the privacy loss distribution (PLD) of the $d$-th output of $M(X)$. (2) We use that the PLD of a $D$-fold sequential composition of $M$ is the same as a $D$-fold convolution of the $1$-dimensional PLD. (3) We show $(\alpha,\rho(\alpha))$-individual RDP via the convoluted PLD, which constitutes a Gaussian distribution.

    (1) Let $X \sim_{x_i} X'$ be neighboring datasets differing in $x_i$. Let $o$ be an atomic event, and $\mathcal{L}_{M(X)/M(X')}(o) \coloneqq \ln(\frac{\Pr[M(X) = o]}{\Pr[M(X') = o]})$ be the privacy loss of mechanism $M$. Then, the privacy loss distribution $\omega(y)$ on support $y \in Y \coloneqq \bigcup_{o} \set{\mathcal{L}_{M(X)/M(X')}(o)}$ as defined in \citet[Definition 2]{sommer2019privacy} resembles a probability distribution over the privacy losses $\mathcal{L}_{M(X)/M(X')}$. Prior work has shown that if there are worst-case distributions, there is a PLD that fully describes the leakage of any mechanism. Any additive mechanism for a sensitivity-bounded query has worst-case distributions: a pair of Gaussians of which one is shifted by the sensitivity. Since $M$ is the Gaussian mechanism, we get a closed form for $\omega^d$ on the $d$-th output of $M$ for $X\sim_{x_i} X'$~\cite[Lemma 11]{sommer2019privacy}:
    \[
        \textstyle\omega^d \sim \mathcal{N}\Big(\frac{r_d\cdot D \cdot (s^{(i)}_d)^2}{2\sigma^2 s_{d}^{2}}, \frac{r_d\cdot D \cdot (s^{(i)}_d)^2}{\sigma^2 s_{d}^{2}}\Big).
    \]
    Note: $\omega^d$ is the same for both privacy losses $\mathcal{L}_{M(X)/M(X')}$ and $ \mathcal{L}_{M(X')/M(X)}$ since a Gaussian is symmetric.

    (2) One characteristic of a PLD is that a $D$-fold sequential composition corresponds to a $D$-fold convolution of $\omega$~\cite[Theorem 1]{sommer2019privacy}. Hence, the convolution of two Gaussians is Gaussian again: $\mathcal{N}(\mu_x, \sigma_x^2) + \mathcal{N}(\mu_y, \sigma_y^2) = \mathcal{N}(\mu_x + \mu_y, \sigma_x^2 + \sigma_y^2),~\forall_{\mu_x,\mu_y,\sigma_x,\sigma_y}$. Thus, we have the following closed form for $\omega$ on the $D$-dimensional output of $M(X)$:
    \begin{align*}
        \textstyle\omega \sim \mathcal{N}\Big(\sum_{d=1}^D \frac{r_d\cdot D \cdot (s^{(i)}_d)^2}{2\sigma^2 s_{d}^{2}}, \sum_{d=1}^D \frac{r_d\cdot D \cdot (s^{(i)}_d)^2}{\sigma^2 s_{d}^{2}}\Big)
        \iff \\
        \textstyle\omega \sim \mathcal{N}\Big(\underbrace{\textstyle\frac{D}{2\sigma^2} \cdot\sum_{d=1}^D \frac{r_d\cdot (s^{(i)}_d)^2}{s_{d}^{2}}}_{\eqqcolon \mu_\omega}, \underbrace{\textstyle\frac{D}{\sigma^2} \cdot\sum_{d=1}^D \frac{r_d\cdot (s^{(i)}_d)^2}{s_{d}^{2}}}_{\eqqcolon \sigma_\omega^2}\Big).
    \end{align*}
        (3) We convert the convoluted PLD $\omega$ to an $(\alpha,\rho(\alpha))$-RDP bound as follows:
    \begin{align*}
        \textstyle(\alpha, \rho(\alpha)) & = (\alpha, \mathcal{D}_\alpha(M(X), M(X') ))
        \intertext{by~\cite[Lemma 8]{sommer2019privacy} we get}
        &\textstyle= (\alpha, \frac{1}{\alpha - 1} \ln\left(\Eop_{y \sim \omega}e^{(\alpha - 1) y}\right))
        \intertext{by the moment generating function:}
        \intertext{$\Eop_{Y \sim \mathcal{N}(\mu,\sigma^2)} [e^{tY}] = e^{t\mu + 0.5\sigma^2t^2},~\forall_{t,\mu,\sigma}$ we get}
        &\textstyle= (\alpha, \mu_\omega + 0.5\sigma_\omega^2(\alpha-1))
        \intertext{since $\sigma_\omega^2 = 2 \mu_\omega$ we conclude}
        &\textstyle= (\alpha, \alpha\cdot\mu_\omega)\\[-6ex]
    \end{align*}
\end{proof}

\blue{\begin{corollary}
    The Gaussian mechanism with leaf-balanced non-spherical noise satisfies $(\alpha, \alpha\cdot \frac{D}{2\sigma^2})$-RDP.
    \label{thm:rdp_nonspherical_gauss}
\end{corollary}
\begin{proof}
    The corollary follows directly from ~\Cref{thm:individual_rdp_nonspherical_gauss} for a challenge data point with worst-case sensitivity $s_d$.
\end{proof}}

We apply \Cref{thm:individual_rdp_nonspherical_gauss} to \dpgbdt{} in \Cref{cor:dpleaf_individual_rdp_main_body} where we have a bivariate case of non-spherical Gaussian noise. In this way, \dpgbdt{} utilizes a leaf-balanced noise. We experimentally show (cf. \Cref{sec:main_results} and \Cref{sec:impact_of_parameters}) that our leaf-balanced noise with $r_1 \neq r_2$ performs better than the standard spherical
Gaussian mechanism where $r_1 = r_2 = 0.5$.

\section{\dpgbdt: Tighter DP GBDT}
    \label{sec:sgbdt}
\blue{
In this section, we detail our \dpgbdt algorithm and our improvements to DP GBDTs:
we start with \Cref{algtrainsgbdt} as the starting point of \dpgbdt training (cf. overview \Cref{sec:alg_overview}) and continue with \Cref{algdpinitscore} as our \emph{differentially private initial score}  (cf. \Cref{sec:initial_score}) and with \Cref{algtrainsingletree} as the training routine of a single tree with \emph{subsampling} and random splits (cf. \Cref{sec:single_tree}). Every single tree utilizes our \emph{leaf-balanced noise} which we describe together with the leaf value calculation in \Cref{algdpleafvalue} (cf. \Cref{sec:leaf}). Then, we finish by presenting the individual Rényi DP accountant in \Cref{alg:initialization} (cf. \Cref{sec:accounting}) and our two generalizations: Learning on a stream of non-IID data (cf. \Cref{sec:generalization_streams}) and scalable distributed learning (cf. \Cref{sec:distributed_learning}).
}

\subsection{Algorithm overview}\label{sec:alg_overview}

\blue{
\dpgbdt trains a GBDT ensemble as follows (cf. ~\Cref{algtrainsgbdt}). After initialization of the individual Rényi DP accountant (line \ref{algtrainsgbdtinitialize}), we compute the initial score from the labels in the dataset using ~\Cref{algdpinitscore} (line \ref{algtrainsgbdtcalldpinitscore}) and add it to the ensemble $E$. \dpgbdt then runs $T_{\max} = T_\text{regular} + T_\text{extra}$ training rounds where in each round we 1) apply the individual Rényi filter by computing individual RDP bounds for all data points using \Cref{thm:train_single_tree_irdp_main_body} (line \ref{alg:train_sgbdt_individual}) such that we filter out those data points that have exceeded their privacy budget using \Cref{thmprivacyfilter} (line \ref{alg:train_sgbdt_filter}) and 2) train a single tree on the filtered dataset using \Cref{algtrainsingletree} (line \ref{alg:train_sgbdt_train_single_tree}) and add it to the ensemble $E$.
In ~\Cref{thm:maintheorem} we show that ensemble training with \dpgbdt satisfies $(\alpha, \rho(\alpha))$-RDP.
}

\begin{algorithm}[t]
    \newcommand{\single}[1]{\texttt{TrainSingleTree}}
    \caption{TrainSBDT : Train a DP GBDT ensemble}
    \SetKwFunction{Initialize}{Initialize}
    \SetKwFunction{TrainSingleTree}{TrainSingleTree}
    \SetKwFunction{PrivacyAmplificationBySubsampling}{PrivacyAmplificationBySubsampling}
    \SetKwFunction{ComputeInitScore}{ComputeInitScore}
    \SetKwFunction{ComputeGradients}{ComputeGradients}
    \SetKwFunction{DPInitScore}{DPInitScore}
    \SetKwFunction{UpdateIndividualPrivacyLoss}{UpdateIndividualPrivacyLoss}
    \SetKwFunction{Filter}{Filter}
    \SetKwFunction{PoissonSubsample}{PoissonSubsample}
    \KwIn{$D$ : training data, $\gamma$ : subsampling ratio}
    \SetKwInOut{KwIn}{\color{white}\phantom{Input}}
    \KwIn{$\varepsilon_\text{init}$ : privacy budget for initial score}
    \KwIn{$(\varepsilon_\text{trees}, \delta_\text{trees})$ : DP parameters for training of trees} 
    \KwIn{$(r_1, r_2)$ : noise weights for leaf value} 
    \KwIn{$(g^*, h^*, m^*)$ : gradient/Hessian/label clipping bound}
    \KwIn{$\lambda, \beta$ : regularization parameters}
    \KwIn{$(T_\text{regular}, T_\text{extra})$ : number of regular and extra rounds}
    \KwIn{$\alpha_{\max}$ : largest $\alpha$ to test in Rényi DP, $d$ : depth of trees}
    $T_\text{max} = T_\text{regular} + T_\text{extra}$ \\
    $\hat{\alpha}, \sigma_\text{leaf}^2, \rho(\hat{\alpha}) =$ \Initialize{$\alpha_{\max}$, $(\varepsilon_\text{trees}, \delta_\text{trees}), \varepsilon_\text{init}, \gamma$} \label{algtrainsgbdtinitialize} \\
    $\text{init}_0= \text{\DPInitScore{$D, m^*, \varepsilon_\text{init}$}}$ \label{algtrainsgbdtcalldpinitscore} \\
    $E = (\text{init}_0)$ \\
    \For{$t = 1$ to $T_\text{max}$}{
    \For{$i=1$ to $|D|$}{ \label{algtrainsgbdtfilterstart}
        $\rho_t^{(i)}(\alpha) := a_\gamma(\alpha, \frac{\alpha}{\sigma_\text{leaf}^{2}} \cdot \Big(\frac{r_1\cdot |h_i|^2}{(h^{*})^{2}} + \frac{r_2\cdot |g_i|^2}{(g^{*})^{2}}\Big))$\tcp*{by~\Cref{thm:train_single_tree_irdp_main_body} \blue{(for RDP: \Cref{cor:train_single_tree_rdp_main_body})}}\label{alg:train_sgbdt_individual}
    }
     $D_{t} = (x_i : \mathcal{F}_{\hat{\alpha}, \rho(\hat{\alpha})}(\rho_{1}^{(i)}, \dots, \rho_{t}^{(i)}) = \text{CONT})$\tcp*{by~\Cref{thmprivacyfilter}}
     \label{alg:train_sgbdt_filter} \label{algtrainsgbdtfilterend}
     $\text{tree}_t =$ \TrainSingleTree{$D_t, d,$
     $\sigma_\text{leaf}^2, g^*, h^*, (r_{1}, r_{2}), \lambda, \beta, E$} \label{alg:train_sgbdt_train_single_tree} \\
     $E = (\text{init}_0, \text{tree}_1, ..., \text{tree}_t)$
     }
     \Return E
    \label{algtrainsgbdt}
\end{algorithm}

\subsection{Initial score}\label{sec:initial_score}
GBDT ensemble training needs an initial score so that the ensemble can then add trees to improve on the error of the initial base classifier that outputs the initial score. Prior works chose the initial score simply as $0.0$, however, it can be beneficial (cf. \blue{Experiment \Cref{sec:main_results,sec:impact_of_parameters}}) that the initial score gives a more meaningful starting point for the ensemble. We output the mean of labels from the dataset as the initial score and release this mean with the Laplace mechanism to preserve $(\alpha, \rho(\alpha))$-Rényi DP, which we show in \Cref{thmdpinitscore}:

\blue{
\begin{theorem}
\Cref{algdpinitscore} (\texttt{DPInitialScore}) with clipping bound $m^*$, DP approximated dataset size $|D|_\text{priv}$ 
and $f(\varepsilon) = \log \Big\{ \frac{\alpha}{2\alpha -1} \exp\Big((\alpha-1)\cdot \varepsilon\Big) + \frac{\alpha -1}{2\alpha -1} \exp\Big(-\alpha \cdot \varepsilon\Big) \Big\}$
satisfies $\Big(\alpha,\frac{1}{\alpha-1} 
(f(\varepsilon_\text{init}) + f(\varepsilon_\text{ds}))\Big)$-RDP.
    \label{thmdpinitscore}
\end{theorem}
\begin{proof}
    \reftoarxiv{} 
    It uses sequential composition and a generalization of \citet[Proposition 6]{DBLP:journals/corr/Mironov17} for arbitrary sensitivity.
\end{proof}
}

\begin{algorithm}
    \caption{DPInitialScore : Compute a DP initial score}
    \SetKwFunction{Laplace}{Laplace}
    \SetKwFunction{clamp}{clamp}
    \KwIn{$D$ : training dataset, $m^*$ : clipping bound on labels}
    \SetKwInOut{KwIn}{\color{white}\phantom{Input}}
    \KwIn{$\varepsilon_\text{init}$ : DP privacy budget for initial score}
    \KwIn{\blue{$\eps_{\text{ds}} = 0.005$: DP privacy budget for dataset size}}
    \BlankLine
    \blue{$|D|_{\text{priv}} = |D| + \text{\Laplace{$0, 1/\eps_{\text{ds}}$}}$} \label{dpreleaseds} \\
    \If{regression}{
        $M = \frac{1}{|D|_{\blue{\text{priv}}}}\sum_{y_i \in D} \text{\clamp{$y_i, -m^*, m^*$}}$ \label{algdpinitscoreclipmean} \\
         $M_{\text{priv}} = M + \text{\Laplace{$0, m^*/(|D|_{\blue{\text{priv}}} \cdot \varepsilon_\text{init})$}}$ \label{algdpinitscorelaplace}\\
         \Return $M_{\text{priv}}$
     }\ElseIf{\blue{classification}}{\blue{
     $M = \frac{1}{|D|_{\text{priv}}}\sum_{y_i \in D} \text{\clamp{$y_i, 0, m^*$}}$ \\
     $M_{\text{priv}} = M + \text{\Laplace{$0, m^*/(|D|_{\text{priv}} \cdot \varepsilon_\text{init})$}}$ \\
     \Return $\ln(\sfrac{M_{\text{priv}}}{1 - M_{\text{priv}}})$
     }}\label{algdpinitscore}
\end{algorithm}

\boldparagraph{Description of algorithm.}

\blue{In \Cref{algdpinitscore}, we release the dataset size in a DP manner using the Laplace mechanism with privacy budget $\varepsilon_\text{ds}$ (line \ref{dpreleaseds}). We fix $\varepsilon_\text{ds}=0.005$.
We clamp the labels of all data points to a fixed range (line \ref{algdpinitscoreclipmean}) to upper bound the influence of any data point on the initial score. Then we average the clamped labels (line \ref{algdpinitscoreclipmean}) and add Laplace noise calibrated to the clipping bound $m^*$, the privacy budget $\varepsilon_\text{init}$, and the DP approximated number of data points $|D|_\text{priv}$ (line \ref{algdpinitscorelaplace}) to satisfy differential privacy. For classification, we rescale the noised mean with the logit which is the inverse of the sigmoid function.}

\subsection{Training a single tree with subsampling}\label{sec:single_tree}

To boost the utility-privacy tradeoff, we tailor privacy amplification by subsampling which is well discussed in literature~\cite{Abadi_2016,balle2018privacy,li2011sampling,pmlr-v97-zhu19c} to GBDTs. For subsampling, we train each tree in GBDT only on a random subsample of the training data, where we select each training data point with probability $\gamma$. An $(\eps,\delta)$-DP mechanism on all training data becomes an $(\mathcal{O}(\gamma\varepsilon), \gamma\delta)$-DP mechanism on the subsampled data~\cite[Theorem 8]{balle2018privacy}.

\boldparagraph{Description of algorithm.} ~\Cref{algtrainsingletree} describes the training of a single tree. First, we generate a subsample of the training data via Poisson subsampling (line \ref{alg:train_single_tree_poisson_subsample}). 
Second, we sample a random tree of full depth $d$ (line \ref{alg:train_single_tree_random_tree}). Third, we assign all leaf values using differentially private \Cref{algdpleaf} (lines~\ref{alg:train_single_tree_dp_leaf}-\ref{algtrainsingletreecalldpleaf}).

\begin{algorithm}
    \caption{TrainSingleTree: Train a DP decision tree}
    \SetKwFunction{RandomSplit}{RandomSplit}
    \SetKwFunction{SetLeaf}{SetLeaf}
    \SetKwFunction{DPLeaf}{DPLeaf}
    \SetKwFunction{Clamp}{clamp}
    \SetKwFunction{Predict}{Predict}
    \SetKwFunction{DPLeafSupport}{DPLeafSupport}
    \SetKwFunction{PoissonSubsample}{PoissonSubsample}
    \SetKwFunction{RandomTree}{RandomTree}
    \SetKwFunction{ComputeGradients}{GetGradients}
    \SetKwFunction{ComputeHessians}{GetHessians}
    \KwIn{$D$ : training data, $d$ : depth of trees} 
    \SetKwInOut{KwIn}{\color{white}\phantom{Input}}

    \KwIn{$\sigma_\text{leaf}^2$ : unweighted variance of Gaussian for leaves}
    \KwIn{$(g^*, h^*)$ : clipping bound on gradients and Hessians}
    \KwIn{$(r_{1}, r_{2})$ : noise weights for leaf value}
    \KwIn{$\lambda, \beta$ : regularization parameters}
    \KwIn{$E$ : ensemble of trees up to round $t-1$}

    $D^{\mathcal{P}} = \text{\PoissonSubsample{$D, \gamma$}}$\label{alg:train_single_tree_poisson_subsample}\;
    $\text{tree}_t = \text{\RandomTree(d)}$\label{alg:trainSingleTreeRandomTree}\label{alg:train_single_tree_random_tree}\tcp*{cf. \Cref{sec:prelim_randomsplit}}
    \For {\KwSty{each} leaf $l$ \KwSty{in} $\text{tree}_t$}{
        $v = $\DPLeaf{$l, D^\mathcal{P}, 
        g^*,
        $
        $
        h^*, \sigma_\text{leaf}^2,(r_1, r_2), \lambda, \beta$}\label{alg:train_single_tree_dp_leaf}\;
        \SetLeaf($\text{tree}_t, l, v$)\label{algtrainsingletreecalldpleaf}\;
    }
    \Return $\text{tree}_t$\label{algtrainsingletree}\;
\end{algorithm}

We experimentally show an improvement in the utility-privacy tradeoff through subsampling for \dpgbdt{} in \Cref{sec:main_results} and \Cref{sec:impact_of_parameters}.

\boldparagraph{Rényi DP proof.}

\blue{
\begin{theorem}
    \label{thm:train_single_tree_irdp_main_body}
    Let $r_1, r_2$ be defined as in ~\Cref{thm:individual_rdp_nonspherical_gauss}. Let $\sigma_\text{leaf}^2$ be the unweighted variance of the leaf Gaussian. Let $a_\gamma: \mathbb{N} \times \mathbb{R} \mapsto \mathbb{R}$ denote the privacy amplification of ~\Cref{thmsubsampling} with subsampling ratio $\gamma$. Then, for data point $x_i$ with gradient $g_i$ and Hessian $h_i$, \Cref{algtrainsingletree} (\texttt{TrainSingleTree}) satisfies $(\alpha, a_{\gamma}(\alpha, \alpha \cdot \frac{2}{2\sigma_\text{leaf}^2} \cdot \Big( \frac{r_1 \cdot |h_i|^{2}}{(h^{*})^{2}} + \frac{r_2\cdot |g_i|^2}{(g^{*})^{2}}\Big)))$-individual RDP.
\end{theorem}
\begin{proof}
    \reftoarxiv{} The proof uses the individual RDP bound of a leaf $\rho(\alpha) = \alpha\cdot t = \alpha \cdot \frac{2}{2\sigma_\text{leaf}^2} \cdot \Big( \frac{r_1 \cdot |h_i|^{2}}{(h^{*})^{2}} + \frac{r_2\cdot |g_i|^2}{(g^{*})^{2}}\Big)$
    (cf. \Cref{cor:dpleaf_individual_rdp_main_body})
    as well as our novel condition for tight subsampling bounds (cf. \Cref{cor:pearson_vajda_condition_general}), i.e. \Cref{algtrainsingletree} (\texttt{TrainSingleTree}) when no subsampling is applied has a Pearson-Vajda $\mathcal{X}^{\alpha}$ pseudo-divergence of $D_{\mathcal{X}^{\alpha}}(M(X)||M(X')) \geq 0$ for all odd $\alpha \geq 1$. This allows us to use the tight individual RDP bound for subsampling by \citet{pmlr-v97-zhu19c}.
\end{proof}
}

\blue{
\begin{corollary}
    \label{cor:train_single_tree_rdp_main_body}
    \texttt{TrainSingleTree} (cf. \Cref{algtrainsingletree}) satisfies $(\alpha, a_{\gamma}(\alpha, \\\sfrac{\alpha}{\sigma_\text{leaf}^2}))$-RDP.
\end{corollary}
\begin{proof}
    The corollary follows directly from \Cref{thm:train_single_tree_irdp_main_body} with the worst-case sensitivities $g^*$ and $h^*$.
\end{proof}
}

\subsection{Leaf computation with leaf-balanced noise}\label{sec:leaf}

\boldparagraph{Description of algorithm.}
\dpgbdt computes leaves for each tree in a GBDT ensemble using \Cref{algdpleaf}. Our \emph{leaf-balanced} noise uses the foundation of prior work as detailed in \Cref{sec:prelim_leafnoising}: We calculate and Gaussian noise the sum of clamped gradients and clamped Hessians. With clamping we bound the influence of each gradient $g_i$ such that $\abs{g_i} \le g^*$ and each Hessian $h_i$ such that $\abs{h_i} \le h^*$. We release a differentially private leaf value by dividing the noised sum of gradients by the noised sum of Hessians.

\boldparagraph{Leaf-balanced noise.}
Our leaf-balanced noise (\Cref{sec:tight_leaf_balanced_noise}) allows us to better calibrate the amount of noise that is applied to the sum of gradients and sum of Hessians (lines \ref{algdpleafnoisedsupport} and \ref{algdpleafsupportsum} in \Cref{algdpleaf}). We experimentally show (cf. \Cref{sec:main_results} and \Cref{sec:impact_of_parameters}), that our leaf-balanced noise with $r_1 \neq r_2$ performs better than the standard non-spherical multivariate Gaussian mechanism where $r_1 = r_2 = 0.5$. \blue{In our experiments, choosing the range $0.05 \le r_1 \le 0.2$ is most helpful which assigns more budget to the gradient sum (leaf numerator).}

\boldparagraph{Rényi DP proof.}
\blue{

\begin{corollary}
    \Cref{algdpleaf} (\texttt{DPLeaf}) satisfies $(\alpha, \alpha \cdot \frac{2}{2\sigma_\text{leaf}^2} \cdot \Big( \frac{r_1 \cdot |h_i|^{2}}{(h^{*})^{2}} + \frac{r_2\cdot |g_i|^2}{(g^{*})^{2}}\Big))$-individual RDP for data point $x_i$ with gradient $g_i$ and Hessian $h_i$ and with $\sigma_\text{leaf}^2$ as the unweighted variance of the leaf Gaussian.
    \label{cor:dpleaf_individual_rdp_main_body}
\end{corollary}
\begin{proof}
    \reftoarxiv{} The proof directly follows from our leaf-balanced noise (cf. \Cref{thm:individual_rdp_nonspherical_gauss}) which shows that the overall privacy budget remains the same if we weigh each dimension via $r_1, r_2$ differently as long as $r_1 + r_2 = 1$.
    In \texttt{DPLeaf} we represent the leaf noising as adding a 2-dimensional Gaussian to a vector of gradient sum (leaf numerator) and Hessian sum (leaf denominator). Building this ratio is DP due to the post-processing theorem.
\end{proof}

\begin{corollary}
    \label{cor:dpleaf_rdp_main_body}
    \Cref{algdpleaf} (\texttt{DPLeaf}) satisfies $(\alpha, \sfrac{\alpha}{\sigma_\text{leaf}^2})$-RDP.
\end{corollary}
\begin{proof}
    The corollary follows directly from \Cref{cor:dpleaf_individual_rdp_main_body} with a challenge data point with worst-case sensitivities $g^*$ and $h^*$.
\end{proof}

}

\begin{algorithm}
    \caption{DPLeaf : Compute a DP leaf node}
    \SetKwFunction{Gauss}{Gauss}
    \SetKwFunction{Clamp}{Clamp}
    \SetKwFunction{max}{max}

    \KwIn{$D : \text{training data}$, $l$: leaf node identifier}
    \SetKwInOut{KwIn}{\color{white}\phantom{Input}}

    \KwIn{$(g \in D, h \in D)$ : gradient/Hessian of data points in $D$} 
    \KwIn{$(g^*, h^*)$ : clipping bound on gradients and Hessians}
    \KwIn{$(r_{1}, r_{2})$ : noise weights for leaf value}
    \KwIn{$\lambda, \beta$ : regularization parameters}
    \KwIn{$\sigma_\text{leaf}^2$ : unweighted variance of Gaussian for leaves}

    \label{algdpleaf}
    Let $D_l\subseteq D$ be the set of data points in leaf $l$\\
    $w = \sum_{h \in D_l} \text{\Clamp}(h, \blue{0.0}, h^{*})$ \label{algdpleafsupportsum} \\

    $\tilde{w} = \text{$\lambda + w + \text{\Gauss{$0.0, (h^{*})^{2}\sigma_\text{leaf}^2 / (2 \cdot r_1)$}}$}$ \label{algdpleafnoisedsupport} \\
    $u = \sum_{g \in D_l} \text{\Clamp}(g, -g^*, g^*)$ \label{algdpleafsum}\label{alg:dp_leaf_clamp} \\

    $\tilde{u} = u + \text{\Gauss{$0.0, (g^{*})^{2} \sigma_\text{leaf}^2/ (2 \cdot r_2)$}}$ \label{algdpleafnoisedsum} \\
    \Return \Clamp{$v = \tilde{u} / \tilde{w}, -\beta, \beta$} \label{algdpleafvalue}
\end{algorithm}

\subsection{Rényi DP accounting}\label{sec:accounting}

Our Rényi DP accountant picks an order $\hat{\alpha}$ to measure the privacy loss in Rényi DP, sets noise variance $\sigma_\text{leaf}^2$ that is applied to the leaves and sets an upper bound on the individual privacy loss of our training, so that we can utilize an individual Rény filter.  

\boldparagraph{Description of algorithm.}
Our Rényi DP accountant (cf. \Cref{alg:initialization}) iterates over pairs of order $\alpha$ and noise variance $\sigma_\text{leaf}^2$ and uses ~\Cref{cor:train_single_tree_rdp_main_body} (line \ref{alg:initialization_single_tree}) to compute the Rényi DP privacy leakage of order $\alpha$ for subsampled training of a single tree when applying noise with variance $\sigma_\text{leaf}^2$ in the leaves. The accountant investigates orders of $\alpha\geq 2$ as for other $\alpha$ the privacy amplification by subsampling of ~\Cref{thmsubsampling} does not apply. Next, the accountant applies RDP sequential composition (cf. \Cref{thmrdpsequentialcomposition}) to account for the number of training rounds (line \ref{alg:initialization_sequential_composition}) and converts this RDP bound to a $(\varepsilon, \delta)$-DP bound using \Cref{cor:rdp_to_dp}. Over all orders $\alpha$ and all noise variances $\sigma_\text{leaf}^2$ the accountant picks the pair that satisfies the following condition: the converted $(\varepsilon, \delta)$ bound is close to the user-specified privacy budget and the noise variance is minimal (line \ref{alg:initialization_pick}).

The initialization returns (1) the order $\hat{\alpha}$ for which the Rényi DP accounting is done, (2) the noise variance $\sigma_\text{leaf}^2$ applied to the leaves and (3) an upper bound $\rho(\hat{\alpha})$ for the individual privacy loss. The initialization is $(\varepsilon_\text{init} + \varepsilon_\text{trees}, \delta)$-DP for $\delta=5\cdot10^{-8}$ and $\varepsilon_\text{init}$ and $\varepsilon_\text{trees}$ being the DP budget of the initial score and the tree training.

\begin{algorithm}
\caption{Initialize Rényi DP accountant}
    \label{alg:initialization}
    \SetKwFunction{RenyiDPRhoGaussianMechanismNonspherical}{RenyiDPRhoGaussianMechanismNonspherical}
    \SetKwFunction{RenyiDPSubsampledPrivacyAmplification}{RenyiDPSubsampledPrivacyAmplification}
    \SetKwFunction{Append}{Append}
    \KwIn{$\alpha_{\max}$ : largest $\alpha$ to test in RDP, $\gamma$ : subsampling ratio}
    \SetKwInOut{KwIn}{\color{white}\phantom{Input}}
    \KwIn{$(\varepsilon_\text{trees}, \delta_\text{trees})$ : DP parameters for training of trees}
    \KwIn{$\varepsilon_\text{init}$ : privacy budget for DP initial score}
    $\mathcal{T} = ()$ \\
    \For{$\alpha=2$ to $\alpha_\text{max}$}{
        \For{$\sigma_\text{leaf}^2$ in $(0.0, 1000.0]$}{
            $\rho_\text{subsampled-tree} = a_\gamma(\alpha, \frac{\alpha}{\sigma_\text{leaf}^2})$\label{alg:initialization_single_tree}\tcp*{Use \Cref{cor:train_single_tree_rdp_main_body}}
            $\rho(\alpha) = T_{\text{regular}} \cdot \rho_\text{subsampled-tree}$\label{alg:initialization_sequential_composition}\tcp*{Use \Cref{thmrdpsequentialcomposition}}
            $\varepsilon_\text{trees}' = \rho(\alpha) + \log{\frac{\alpha -1}{\alpha} - \frac{\log{\delta_\text{trees}} + \log{\alpha}}{\alpha-1}}$\tcp*{\Cref{cor:rdp_to_dp}}
            \BlankLine
            \Append{$\mathcal{T}, (\alpha, \sigma_\text{leaf}^2, \rho(\alpha), \varepsilon_\text{trees}')$}
        }
    }
    Pick $(\hat{\alpha}, \sigma_\text{leaf}^2, \rho(\hat{\alpha}), \varepsilon_\text{trees}')$ from $\mathcal{T}$ so that $\sigma_\text{leaf}^2$ is smallest and
    $\varepsilon_\text{trees}'$ is close to $\varepsilon_\text{trees}$ \label{alg:initialization_pick} \\
    Report $\varepsilon_\text{init} + \varepsilon_\text{trees}'$ to user\\
    \Return $\hat{\alpha}, \sigma_\text{leaf}^2, \rho(\hat{\alpha})$
\end{algorithm}

\boldparagraph{Discussion.}
Our accountant can measure subsampled Rényi DP, giving \dpgbdt a privacy amplification by subsampling. This sets our accounting apart from the accounting of prior works that utilized parallel and sequential composition for DP. Our accountant is generic: the RDP bound of any mechanism can be plugged in to obtain an accurate and robust subsampled RDP accounting.

\subsection{Generalizing \dpgbdt: Stream of non-IID data}
\label{sec:generalization_streams}

We consider learning a stream of non-IID data, where new data arrives over time, with DP GBDT. Here, we want to incorporate training data that arrives later and also not forget about formerly seen data points. Under privacy, however, training again with formerly seen data points poses the risk of additional privacy leakage. Privacy would hold if the GBDT model simply forgets about formerly seen data points, but this can hurt model performance.

Tailoring an individual Rényi filter to DP GBDT training enables individual privacy accounting, which in turn enables training for an arbitrary amount of rounds with only those data points that still have privacy budget left. This enables \dpgbdt{} to effectively learn on streams of non-IID data: \dpgbdt{} learns with newly arrived data while also including data points that arrived in the past. 

We experimentally show (cf. ~\Cref{sec:renyi_filter_streams}) that when learning a stream of non-IID data, tailoring an individual Rényi filter to \dpgbdt{} can significantly improve the model performance. We compare to a naïve approach for stream learning where the formerly seen data points are discarded once they have been used for the amount of training rounds that the privacy budget has been accounted for.

\subsection{Scalable distributed learning}
    \label{sec:distributed_learning}

We extend \dpgbdt to work in a distributed learning setting, where multiple users collaboratively train a model, while the private training datasets are not directly shared.
This applies to the field of medical studies where hospitals want to keep sensitive patient data at their hospital but want to train a machine learning model collaboratively. Our extension of \dpgbdt to distributed learning uses similar techniques as \citet{Maddock_2022}.
\blue{For the detailed algorithm, we refer to the appendix. In summary:}

\blue{Every user receives the training hyperparameters from a secure bulletin board and sets up the accounting. For the initial classifier, the user computes DP releases of the sum of labels and the dataset size. All users synchronize and invoke \texttt{SecureAggregation} \cite{secure_aggregation_bonawitz,secure_aggregation_bell} with fixed precision, for aggregating the label sum first and then dataset size. The initial score is then built by dividing the aggregated label sum by the overall dataset size and added to the ensemble. The user commences $T_{\max}$ rounds of training. Each round, the user updates the individual RDP privacy losses for all its data points and filters out data points that have expended their privacy budget. Next, the user locally initializes a tree for the current round. All user synchronize and utilize public uniform sampling \cite[Protocol 1]{sabater_verifiable_sampling} to collaboratively and verifiably sample uniformly random features and feature values for the tree splits.

The user locally only adjusts the leaves of the tree with its local share of sensitive data. All users synchronize again to utilize \texttt{SecureAggregation} with fixed precision for collaboratively generating leaf values for the tree. \texttt{SecureAggregation} is used for aggregating the gradient sum and the Hessian sum of all leaves, then the user sets the leaf values of its local tree and finally adds this tree to the ensemble.}

\section{\dpgbdt is differentially private}
    \begin{theorem}[Main theorem (informal)]\label{thm:maintheorem}
    \dpgbdt is  $(\alpha, \rho(\alpha)$)-Rényi differentially private.
\end{theorem}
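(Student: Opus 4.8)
The plan is to decompose \dpgbdt{} (\Cref{algtrainsgbdt}) into its elementary differentially private building blocks and then recombine their guarantees through Rényi DP composition. The only data-dependent quantities that \dpgbdt{} ever releases are (a) the initial score produced by \Cref{algdpinitscore}, and (b) for each of the $T_{\max}$ rounds, the leaf values of one tree produced by \Cref{algtrainsingletree}. Everything else — the random tree layout, the random feature and split-value choices, and the clamping of the noised support to $\lambda$ in \Cref{algdpleaf} — is either sampled from public randomness fixed before the analysis or a data-independent post-processing step, and hence incurs no privacy leakage. It therefore suffices to bound (i) the RDP of the initial-score mechanism, (ii) the RDP of a single subsampled tree-training round, and (iii) the RDP of the adaptive composition of (i) with the filter-guarded sequence of $T_{\max}$ instances of (ii).

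For (i), \Cref{algdpinitscore} clamps labels to $[-m^*,m^*]$, which bounds the $\ell_1$-sensitivity of their mean, and adds Laplace noise calibrated to that sensitivity and $\varepsilon_{\text{init}}$; by \Cref{thmdpinitscore} this is $(\alpha,\rho_{\text{init}}(\alpha))$-RDP. For (ii), \Cref{cor:train_single_tree} already establishes that one round of \Cref{algtrainsingletree} is $(\alpha,\rho(\alpha))$-RDP: the leaves are released by the non-spherical Gaussian mechanism of \Cref{thm:individual_rdp_nonspherical_gauss} and \Cref{thm:nonspherical_gauss}, instantiated in the bivariate case by \Cref{cor:dpleaf_individual_rdp} with per-output sensitivities $1$ for the leaf support and $g^*$ for the clamped gradient sum, and this guarantee is amplified by Poisson subsampling with ratio $\gamma$ via \Cref{thm:subsampling_preliminaries}. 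Running the identical argument on datasets differing only in a fixed $x_i$ yields the per-datapoint bound $\rho_t^{(i)}(\alpha)$ of \Cref{thm:train_single_tree_individual_rdp}, which is exactly the quantity the algorithm evaluates on line \ref{alg:train_sgbdt_individual}; note the gradients $g_i$ feeding that line are computed from the already-published ensemble and so carry no extra leakage.

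For (iii), I would recognize the sequence of tree-training rounds in \dpgbdt{} as an instance of adaptive composition with individual privacy filtering (\Cref{alg:adaptivecompwithprivfilt}): the $t$-th mechanism $\mathcal{A}_t$ is ``Poisson-subsample the currently active set and train one tree given the previously published ensemble'', the individual losses are the $\rho_t^{(i)}(\alpha)$ from (ii), and the budget handed to the Rényi filter $\mathcal{F}_{\hat\alpha,\rho(\hat\alpha)}$ is the $\rho(\hat\alpha)$ returned by \Cref{alg:initialization}. \Cref{thm:adaptive_composition_individual_privacy} then gives $(\hat\alpha,\rho(\hat\alpha))$-RDP for the whole filtered ensemble; the $T_{\text{regular}}$ ``regular'' rounds are simply the prefix on which the filter provably never triggers, contributing $T_{\text{regular}}\cdot\rho_{\text{subsampled-tree}}(\hat\alpha)$ as used on line \ref{alg:initialization_sequential_composition}. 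Composing this with the initial-score mechanism by RDP sequential composition (\Cref{thmrdpsequentialcomposition}) yields the claimed $(\alpha,\rho(\alpha))$-RDP, with $\hat\alpha$, $\sigma^2_{\text{leaf}}$ and $\rho(\hat\alpha)$ being precisely the triple \Cref{alg:initialization} picks so that the composed (and ADP-converted) bound matches the user's budget.

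The hard part is the interaction between subsampling and the individual Rényi filter. \Cref{thm:adaptive_composition_individual_privacy} requires each $\rho_t^{(i)}(\alpha)$ to be a valid upper bound on the \emph{conditional} individual Rényi divergence of round $t$ given only the outputs of rounds $1,\dots,t-1$, and it requires membership of $x_i$ in the active set to depend only on those outputs — yet with Poisson subsampling the subsample drawn in round $t$ is itself hidden from the adversary. I would therefore need to argue that the subsampled individual-RDP bound of \Cref{thm:train_single_tree_individual_rdp} holds conditionally on the published ensemble (not on the realized subsample), so that ``subsampled individual Rényi DP'' plugs into \Cref{alg:adaptivecompwithprivfilt} exactly as ordinary individual RDP does; this is the substance of combining contributions \ding{195} and \ding{196}. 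A secondary point is checking that all the auxiliary steps flagged as ``free'' above — the mock-up of empty leaves, the static feature border, and the $\max(\lambda,\cdot)$ in \Cref{algdpleaf} — are genuinely data-independent, so that the worst-case-over-neighbors analysis underlying each cited RDP bound is unaffected.
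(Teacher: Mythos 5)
Your proposal matches the paper's own proof (Theorem~\ref{thm:maintheorem_appendix}) essentially step for step: decompose into the initial score plus the filter-guarded tree rounds via RDP sequential composition, bound the initial score by Theorem~\ref{thmdpinitscore}, bound each round's individual losses by Theorem~\ref{thm:train_single_tree_individual_rdp}, and invoke Theorem~\ref{thm:adaptive_composition_individual_privacy} for the filtered sequence. The subtlety you flag about subsampled individual RDP plugging into the filter is real, but the paper disposes of it only by the assertion in Remark~\ref{rem:subsampling}, so your treatment is, if anything, more careful on the same route.
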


\begin{proof}
Let $X$ be the training dataset, $T_\text{regular}$ the number regular trees without extra trees due to a Rényi filter, $m^*$ the clipping bound on the labels, and $\varepsilon_\text{init}, \varepsilon_\text{ds}$ the privacy budgets used by the DP initial score. Let $a_\gamma: \mathbb{N} \times \mathbb{R} \mapsto \mathbb{R}$ denote the privacy amplification of ~\Cref{thmsubsampling} with subsampling ratio $\gamma$ and $f(\varepsilon) = \log \Big\{ \frac{\alpha}{2\alpha -1} \exp\Big((\alpha-1)\cdot \varepsilon\Big) + \frac{\alpha -1}{2\alpha -1} \exp\Big(-\alpha \cdot \varepsilon\Big) \Big\}$. We then have
\begin{align*}
    \textstyle\rho(\alpha) =~&\textstyle\frac{1}{\alpha-1} \Big(
    f(\varepsilon_\text{init}) + f(\varepsilon_\text{ds})
    \Big)  
    + T_\text{regular} \cdot a_{\gamma}(\alpha, \frac{\alpha}{\sigma_\text{leaf}^{2}})
\end{align*}

\dpgbdt consists of the differentially private initial score mechanism, and then multiple training rounds, each outputting a single tree. So the proof decomposes \dpgbdt into the initial score step and the training rounds via sequential composition for Rényi DP. We can then separately bound the Rényi divergences for the initial score and the training rounds. 

For the initial score, we derive an analytical $(\alpha, \rho_\text{init}(\alpha))$-Rényi DP bound in ~\Cref{thmdpinitscore}, using a generalization of the Laplace mechanism to arbitrary sensitivity, as the sensitivity of our unnoised initial score depends on the DP approximated dataset size $|D|_\text{priv}$, the privacy budget for the initial score $\varepsilon_\text{init}$ and the clipping bound $m^*$ on the labels from dataset $D$.

Next, we obtain a Rényi DP bound for the training rounds. 
Here is where the individual Rényi filter comes into play: A single training round consists of the individual Rényi filter application, i.e. computing the individual Rényi DP privacy losses via our analytical bound \Cref{thm:train_single_tree_irdp_main_body} \blue{(for RDP: \Cref{cor:train_single_tree_rdp_main_body})} then filtering out those data points that have fully expended their privacy budget, before finally running the training round. In this setup, the individual Rényi filter (cf. \Cref{thm:adaptive_composition_individual_privacy}) guarantees that an arbitrary amount of training rounds satisfies an a priori $(\alpha, \rho_\text{training}(\alpha))$-Rényi DP bound, as long as the individual Rényi DP accounting is sound.
Combining the two analyses, we get that the output of \dpgbdt satisfies $(\alpha, \rho_\text{init}(\alpha) + \rho_\text{training}(\alpha))$.
For the complete proof we refer to the appendix.
\end{proof}

\section{Empirical evaluation}
    \label{sec:experiments}

We evaluate our differentially private gradient boosted decision trees ensemble learner \dpgbdt (cf. \Cref{sec:sgbdt}) as follows. 
In \Cref{sec:main_results} we compare \dpgbdt against the state of the art (SOTA) by Maddock et al. \cite{Maddock_2022}. In \Cref{sec:impact_of_parameters} we perform ablation studies of our improvements in \dpgbdt. We answer the following research questions: 

\noindent\textbf{(RQ1)} \emph{Is \dpgbdt improving on the current SOTA \cite{Maddock_2022}?}
In \Cref{sec:main_results} we find that \dpgbdt significantly improves on the SOTA, saving $50\%$ in terms of $\eps$ for $\varepsilon\leq0.5$ on the regression dataset and $30\%$ for $\varepsilon\leq0.08$ and $\varepsilon\leq0.03$ respectively on the classification datasets for the same utility (AUC/RMSE).

\noindent\textbf{(RQ2)} \emph{Under privacy, are our contributions leaf-balanced noise (cf. \Cref{sec:leaf}), subsampling (cf. \Cref{sec:single_tree}), and DP initial score (cf. \Cref{sec:initial_score}) improving the performance of \dpgbdt{}?}
In \Cref{sec:impact_of_parameters} we show leaf-balanced noise and subsampling consistently improve the performance of \dpgbdt. DP initial score improves utility on the regression dataset.

\noindent\textbf{(RQ3)} \emph{Do individual Rényi filters \cite{feldman2020individual} improve the performance of \dpgbdt{} when learning a stream of non-IID data?}
In \Cref{sec:renyi_filter_streams} we show that an individual Rényi filter can significantly improve the performance of \dpgbdt when learning a stream of non-IID data.

\boldparagraph{Sensitive Datasets.} We use three datasets for our experiments: Abalone, Adult and Spambase. Abalone \cite{misc_abalone_1} is a regression dataset containing $4{,}177$ data points. 
Given eight numerical attributes, e.g. sex, length, and diameter of an abalone, the task is to predict its age.
Adult \cite{misc_adult_2} is a binary classification dataset with more than $48{,}000$ data points. 
Given 14 attributes, e.g. age, sex, and occupation, the task is to determine whether a person earns over $50{,}000 \, $\$ per year.
\blue{Spambase \cite{misc_spambase_3} is a binary classification dataset containing $4{,}601$ data points. Given 57 numerical attributes, e.g. frequencies of words and symbols in a mail, the task is to determine whether that mail is spam.}

\boldparagraph{Experimental Setup.} We tune our \dpgbdt and the state of the art by Maddock et al. \cite{Maddock_2022} in a \blue{randomized grid-search that randomly selects without replacement hyperparameters of a grid search.} In all experiments, we set $\delta=5\cdot10^{-8}$. We use 5-fold cross-validation to obtain training and test datasets. In the hyperparameter search every setting 
is evaluated for 20 runs and the best-performing hyperparameter setting, again for 200 runs \blue{(for Spambase and $\varepsilon\leq0.1$: 1000 runs)}. 
Tests were run with an Intel Xeon Platinum 8168 2.7 GHz CPU and 32GB of RAM. \blue{The code of \dpgbdt is available at \href{https://github.com/kirschte/sbdt}{https://github.com/kirschte/sbdt}.}

\boldparagraph{Model performance.}
 We measure model performance on the test dataset by the root mean squared error (RMSE) for regression and AUC-ROC for classification. We also report the standard error of the mean, defined by the standard deviation of the runs divided by the square root of the number of runs \cite{Altman903}. The standard error represents the standard deviation of the distribution over the means of the runs, and as such gives an estimate on the precision of the sampled mean. The standard error decreases with an increasing number of runs, as the extent of chance variation is reduced. 

\boldparagraph{Non-private baselines.}
For the non-private baselines we implement the XGBoost \cite{chen2016xgboost} algorithm in our framework \blue{and denote \emph{xgboost} for reporting the utility when selecting the optimal, data-dependent split and \emph{xgboost random splits} when selecting splits uniformly at random from the split value range.}

\boldparagraph{Learning a stream of non-IID data.}
For the baseline, a naïve training approach for streams that adds extra training rounds for newly arrived data, and training with Rényi filter we investigate $T' \in \{T/4, T/2, T \, 3/4, T\}$ extra training rounds for $T$ regular training rounds. \blue{We rerun our RDP accountant for the extra training rounds, i.e. we adapt the noise scale of the extra round to the number of extra rounds.} 

\boldparagraph{Hyperparameters.}
 The evaluated hyperparameters are the number of trees $T_\text{regular} \in \{5, 10, 25, 50, 100, 150, 200, 300, 400, 500, \\600\}$, the depth of the trees $d \in \{2,3,5,6\}$ and the number of training rounds of iterative Hessian (cf. \Cref{sec:prelim_randomsplit}) $s \in \{5, 30, 100\}$. We clip gradients with $g^* \in \{0.1, 0.3, 0.5, 0.7, 0.9\}$ and Hessians with $h^* \in \{0.1, 0.25\}$. To simplify gradient clipping for regression we scale the regression labels to the range $[-1,1]$. We evaluate subsampling ratio $\gamma \in \{0.005, 0.05, 0.1, 0.2\}$, leaf-balanced noise parameter $r_1 \in \{0.04, 0.1, 0.2, 0.3, ..., 0.9\}$, ratio $\varepsilon_\text{init}/\varepsilon \in \{10\%, 30\%\}$, clipping bound for initial score $m^* \in \{0.1, 0.5, 1.0\}$. \blue{We fix the privacy budget for DP releasing the dataset size for the initial score $\varepsilon_\text{ds}=0.005$.} We fix one regularization parameter $\beta=2$ and investigate $\lambda \in \{1, 15\}$ (Abalone \blue{and Spambase}) and $\lambda \in \{1, 10\}$ (Adult). We investigate learning rate $\eta \in \{0.1, 0.2, 0.3\}$.

To reduce the overhead for learning a stream of non-IID data, we restrict some hyperparameters to the range that was useful for IID data. For \dpgbdt{} we evaluate $T_\text{regular} \in \{50, 100, 200, 400\}$ for Abalone and $T_\text{regular} \in \{50, 200, 400\}$ for Adult. We fix $\varepsilon_\text{init}/\varepsilon = 10\%$, $m^* = 1.0$,$\gamma=0.1$, $\eta=0.1$ and $r_1=0.2$ (Abalone) and $r_1=0.1$ (Adult). 
For Maddock et al. we evaluate $T_\text{regular} \in \{5, 10, 25\}$ for Abalone and $T_\text{regular} \in \{25, 50, 200\}$ for Adult. We fix 5 rounds of training with iterative Hessian and $\eta=0.3$ 
For both, 
we evaluate $d \in \{2,4,6\}$ and $g^* \in \{0.1, 0.3, 0.5, 0.7\}$, $h^* \in \{0.1, 0.25\}$.

\subsection{Comparing \dpgbdt{} to the SOTA \cite{Maddock_2022}}
\label{sec:main_results}

\begin{figure}[!t]
  \centering
      \begin{subfigure}[h]{\columnwidth}
        \resizebox{\textwidth}{!}{\input{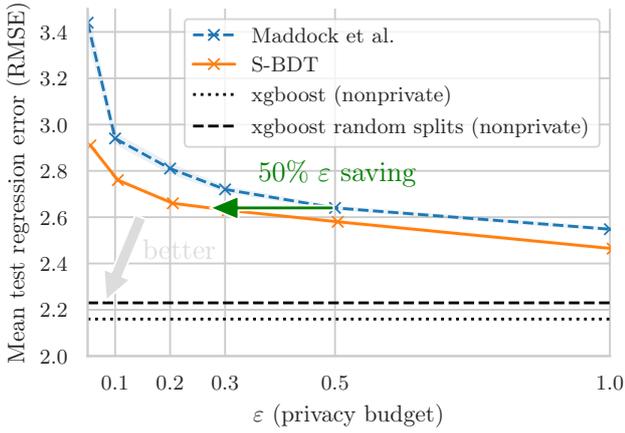}\unskip}
          \caption{\blue{Dataset: \textbf{Abalone} (Regression)}}
          \label{fig:abalone_main_results}
      \end{subfigure}
      \begin{subfigure}[h]{\columnwidth}
        \resizebox{\textwidth}{!}{\input{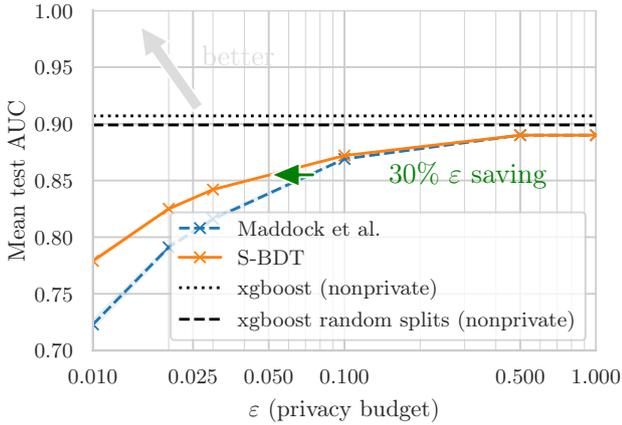}\unskip}
        \caption{\blue{Dataset: \textbf{Adult} (Classification)}}
          \label{fig:adult_main_results}
      \end{subfigure}
      \begin{subfigure}[h]{\columnwidth}
        \resizebox{\textwidth}{!}{\input{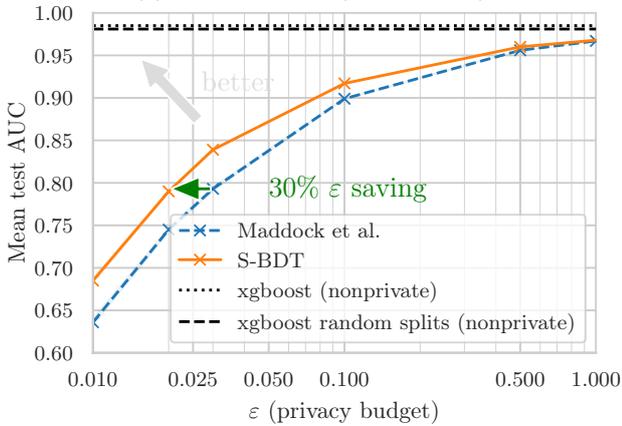}\unskip}
        \caption{\blue{Dataset: \textbf{Spambase} (Classification)}}
          \label{fig:spambse_main_results}
      \end{subfigure}
  \caption{\textbf{Comparison of utility-privacy tradeoff of our \dpgbdt{} and the SOTA by Maddock et al. \cite{Maddock_2022}}. Regression error (RMSE) (Abalone) and AUC (Adult and Spambase) of 200 runs \blue{(for Spambase and $\varepsilon\leq0.1$: 1000 runs)} vs. privacy budget $\eps$ ($(b)$ and $(c)$ in log-scale). The transparent area is the standard error.}
  \label{fig:all_main_results}
\end{figure}
\begin{table}[!ht]
  \footnotesize
  \centering
  \begin{subfigure}[h]{0.99\columnwidth}
  \begin{tabularx}{\textwidth}{ll}
       \toprule
       Technique & Mean test regression error (RMSE)  \rule[-0.9ex]{0pt}{0pt} \\
       \midrule
       \citet{Maddock_2022} & 2.939 $\pm 0.019$ \rule{0pt}{2.6ex} \\
       \raisebox{0.25ex}{\footnotesize+} Subsampling & 2.782 $\pm 0.008$ \\
       \raisebox{0.25ex}{\footnotesize+}  DP initial score & 2.760 $\pm 0.009$ \\
       \raisebox{0.25ex}{\footnotesize+}  Leaf-balanced noise & 2.745 $\pm 0.008$ \\
       \bottomrule
  \end{tabularx}
  \caption{Dataset: \textbf{Abalone} (Regression)}
  \label{tbl:abalone_ablations}
  \end{subfigure}
  \begin{subfigure}[h]{0.99\columnwidth}
  \begin{tabularx}{\textwidth}{ll}
       \toprule
       Technique & Mean test AUC  \rule[-0.9ex]{0pt}{0pt} \\
       \midrule
       \citet{Maddock_2022} & 0.791 $\pm 0.002$ \rule{0pt}{2.6ex} \\
       \raisebox{0.25ex}{\footnotesize+} Subsampling & 0.811 $\pm 0.001$ \\
       \raisebox{0.25ex}{\footnotesize+}  Leaf-balanced noise & 0.825 $\pm 0.001$ \\
       \bottomrule
  \end{tabularx}
  \caption{Dataset: \textbf{Adult} (Classification)}
  \label{tbl:adult_ablations}
  \end{subfigure}
  \caption{\textbf{Ablation study}: Improvement of our \dpgbdt{} over the SOTA by Maddock et al. \cite{Maddock_2022}. We report the RMSE (Abalone, $\eps=0.105$) and the AUC (Adult, $\eps=0.02$) each with a standard error of $200$ runs. 
  The parameters are chosen as in \Cref{fig:all_ablations}.
  }
  \label{tbl:abalone_and_adult_ablations}
\end{table}

We compare \dpgbdt to the best prior work \cite{Maddock_2022} \blue{both including the following relevant algorithmic variations proposed by \citet{Maddock_2022}. For the split selection, we also use random splits which is the best-performing variant in \citet{Maddock_2022}. For random splits, we choose a feature uniformly at random and a split value uniformly at random from a pre-defined feature range. We evaluate two weight update methods, gradient boosting for regression and Newton boosting for classification. For regression, gradient and Newton boosting are the same, as we divide in the leaf by the number of samples in that leaf which is the same as the sum of Hessians. For generating split candidates we evaluate: No prior split candidates, i.e. sampling a split from the full pre-defined feature range uniformly at random, equidistant split candidates from the pre-defined feature range, and equidistant split candidates refined with iterative Hessian, an approach that refines the initial split candidates based on the aggregated Hessians of data points for each split candidate. The latter two approaches are proposed by \citet{Maddock_2022}. We evaluate two feature interactions, cyclical, where each tree is trained on a single feature selected cyclically from all features, and random, where a random feature is selected in every split.}

\begin{figure*}[!t]
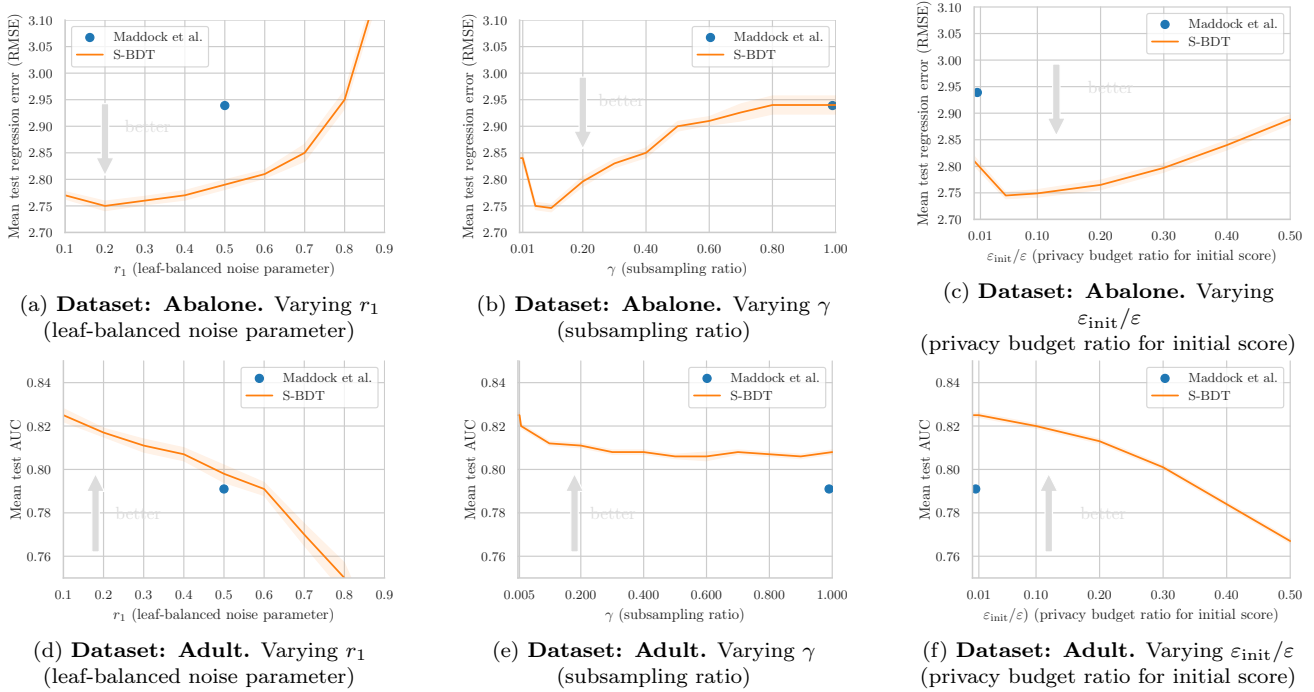

  \centering
  \captionsetup{justification=centering}
  \begin{subfigure}[h]{\textwidth}
    \begin{subfigure}[h]{0.3\textwidth}
        \centering
        \captionsetup{justification=centering}
        \resizebox{0.99\columnwidth}{!}{\input{images/abalone_lbn_ablation.pgf}\unskip}
        \caption{\textbf{Dataset: Abalone.} Varying $r_1$ \\ (leaf-balanced noise parameter)}
          \label{fig:abalone_r1_ablation}
    \end{subfigure}
    \hfill
    \begin{subfigure}[h]{0.3\textwidth}
        \centering
        \captionsetup{justification=centering}
        \resizebox{0.99\columnwidth}{!}{\input{images/abalone_subsampling_ablation.pgf}\unskip}
        \caption{\textbf{Dataset: Abalone.} Varying $\gamma$ \\ (subsampling ratio)}
          \label{fig:abalone_Q_ablation}
    \end{subfigure}
    \hfill
    \begin{subfigure}[h]{0.3\textwidth}
        \centering
        \captionsetup{justification=centering}
        \resizebox{0.99\columnwidth}{!}{\input{images/abalone_init_ablation.pgf}\unskip}
        \caption{\textbf{Dataset: Abalone.} Varying $\varepsilon_\text{init} / \varepsilon$ \\ (privacy budget ratio for initial score)}
          \label{fig:abalone_init_ablation}
    \end{subfigure}
  \end{subfigure}
  \begin{subfigure}[h]{\textwidth}
    \begin{subfigure}[h]{0.3\textwidth}
        \centering
        \captionsetup{justification=centering}
        \resizebox{0.99\columnwidth}{!}{\input{images/adult_lbn_ablation.pgf}\unskip}
        \caption{\textbf{Dataset: Adult.} Varying $r_1$ \\ (leaf-balanced noise parameter)}
          \label{fig:adult_r1_ablation}
    \end{subfigure}
    \hfill
    \begin{subfigure}[h]{0.3\textwidth}
        \centering
        \captionsetup{justification=centering}
        \resizebox{0.99\columnwidth}{!}{\input{images/adult_subsampling_ablation.pgf}\unskip}
        \caption{\textbf{Dataset: Adult.} Varying $\gamma$ \\ (subsampling ratio)}
          \label{fig:adult_Q_ablation}
    \end{subfigure}
    \hfill
    \begin{subfigure}[h]{0.3\textwidth}
        \centering
        \captionsetup{justification=centering}
        \resizebox{0.99\columnwidth}{!}{\input{images/adult_init_ablation.pgf}\unskip}
        \caption{\textbf{Dataset: Adult.} Varying $\varepsilon_\text{init} / \varepsilon$ \\ (privacy budget ratio for initial score)}
          \label{fig:adult_init_ablation}
    \end{subfigure}
  \end{subfigure}
  \caption{\textbf{Ablation studies of our improvements:} Regression error (RMSE) and AUC of 200 runs. The transparent area is the standard error. 
  For Abalone we set $\varepsilon=0.105$, number of trees $T_\text{regular}=150$, depth $d=2$, subsampling ratio $\gamma=0.1$, leaf-balanced noise parameter $r_1=0.2$, privacy budget ratio for initial score $\varepsilon_\text{init}/\varepsilon=0.1$, \blue{$\varepsilon_\text{ds}=0.005$} and clipping bound $g^*=0.1$.
  For Adult, we set $\varepsilon=0.02$, $T_\text{regular}=200$, $d=5$, $\gamma=0.005$, $r_1=0.1$, $g^*=0.5, h^*=0.1$.
  A smaller $r_1$ value means more privacy budget for Hessian sum compared to gradient sum. $r_1=0.5$ deactivates leaf-balanced noise, $\gamma=1.0$ subsampling and $\varepsilon_\text{init}/\varepsilon=0.0$ 
  the initial score.
  }
  \label{fig:all_ablations}
\end{figure*}
\begin{figure}[!ht]
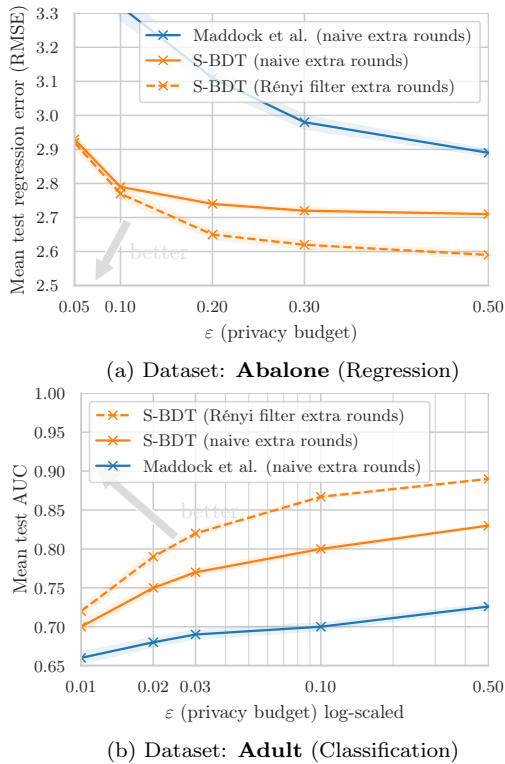

  \centering
  \begin{subfigure}[h]{0.89\columnwidth}
      \begin{subfigure}[h]{\columnwidth}
        \resizebox{0.9\textwidth}{!}{\input{images/abalone_streams.pgf}\unskip}
          \caption{\blue{Dataset: \textbf{Abalone} (Regression)}}
          \label{fig:abalone_streams}
      \end{subfigure}\\
      \begin{subfigure}[h]{\columnwidth}
        \resizebox{0.9\textwidth}{!}{\input{images/adult_streams.pgf}\unskip}
        \caption{D\blue{ataset: \textbf{Adult} (Classification)}}
          \label{fig:adult_streams}
      \end{subfigure}
  \end{subfigure}
  \caption{\textbf{Learning a stream of non-IID data:} Regression error (RMSE) (Abalone) and AUC (Adult) of 200 runs vs. privacy budget $\varepsilon$. The transparent area is the standard error.}
  \label{fig:all_streams}
\end{figure}

We vary the privacy budget $\varepsilon$ and perform a hyperparameter search over the other parameters. \Cref{fig:all_main_results} displays our findings and \Cref{tbl:abalone_ablations} and \Cref{tbl:adult_ablations} the improvement of  \dpgbdt{} over \citet{Maddock_2022} for a single privacy budget value. 
We observe that \dpgbdt achieves better performance for regression and classification than the SOTA by Maddock et al. For $\varepsilon=0.25$ on Abalone, the RMSE of \dpgbdt{} is 2.64 which Maddock et al. reach for $\varepsilon=0.5$, saving $50\%$ in terms of epsilon (cf. \Cref{fig:abalone_main_results}). For $\varepsilon=0.053$ on Adult, the AUC of \dpgbdt{} is 0.853 which Maddock et al. reach for $\varepsilon=0.08$, saving $30\%$ in terms of epsilon (\Cref{fig:adult_main_results}). \blue{For $\varepsilon=0.02$ on Spambase, the AUC of \dpgbdt{} is 0.79 which Maddock et al. reach for $\varepsilon=0.03$, saving $30\%$ in terms of epsilon (\Cref{fig:spambse_main_results}).} For the split selection, we observe that split refinement using Hessian information as proposed by Maddock et al. \cite{Maddock_2022} is beneficial for their work but not for ours.

\blue{\dpgbdt{} uses up to 600 training rounds, Maddock et al. up to 300. \dpgbdt{} takes 3.2 seconds to train 1000 trees with tree depth 6 on Abalone, and 8.2 seconds on Adult. With individual Rényi filter, \dpgbdt{} takes 5.1 seconds on Abalone and 25.7 seconds on Adult. We update the individual privacy budget
by slightly rounding up the individual sensitivities $g_i, h_i$,
computing the individual privacy loss as an upper bound on the
exact one and storing it for later use. This saves time cost and is
proposed in prior work \cite{yu2023individual}.} 

\blue{\subsection{Random splits vs. data-dependent splits}

Our evaluation supports the results of prior work \cite{bojarski2014differentially,Maddock_2022} that random splits compared to data-dependent splits cost some utility, yet with a limited effect. For random splits, we select uniformly at random a feature and uniformly at random a value from a pre-defined feature range. For data-dependent splits, we evaluate the best split across all features and possible feature values, i.e. data points for this feature, using a variant of the MSE gain as used in xgboost \cite{chen2016xgboost}. In fact, on the three datasets Abalone, Adult, and Spambase we observe in the non-private setting a gap of $0.07$ in RMSE (Abalone), $0.008$ in AUC (Adult), and $0.004$ in AUC (Spambase) between data-independent random splits and data-dependent gain-based splits (cf. \Cref{fig:all_main_results}).
This demonstrates that the privacy leakage of tree learners can be reduced to the leaves and that random splits can be of interest even in a non-private setting due to their significantly faster computation time.}

\subsection{Impact of algorithmic parameters $r_1, \gamma, \varepsilon_\text{init}/\varepsilon$}
\label{sec:impact_of_parameters}

We explore the impact of parameters for our leaf-balanced noise, subsampling, and our initial score on model performance. We fix the optimal setting for our \dpgbdt{} and then vary a single one of the parameters. Our findings are displayed in \Cref{fig:all_ablations}. For leaf-balanced noise, \blue{choosing the range $0.05 \le r_1 \le 0.2$ is most helpful which assigns more budget to the gradient sum (leaf numerator)} (cf. \Cref{fig:abalone_r1_ablation}, \Cref{fig:adult_r1_ablation}).
Choosing a small subsampling ratio $\gamma$ is generally beneficial ($\gamma < 0.3$) but if the chosen $\gamma$ is too small, the privacy amplification cannot outweigh the small amount of data points used for a single tree which can hurt model performance (cf. \Cref{fig:adult_Q_ablation}, \Cref{fig:abalone_Q_ablation}). \blue{We fix $\varepsilon_\text{ds}=0.005$ and tune $\varepsilon_\text{init}$.} Spending privacy budget $\varepsilon_\text{init}$ on the initial score improves performance on Abalone (cf. \Cref{fig:abalone_init_ablation}), but choosing a fraction $\varepsilon_\text{init}/\varepsilon>0.3$ leaves too little privacy budget for the trees and hurts model performance.

\subsection{Learning a stream of non-IID data}
\label{sec:renyi_filter_streams}

We study the impact of an individual Rényi filter for learning a stream of non-IID data. On the stream, all data points with a regression label above the mean of the data set arrive after the regular training rounds are over. For classification, all data points with label 1 arrive late. As the baseline, we adopt a naïve training approach for streams by adding extra training rounds that only train with the newly arrived data. Our \dpgbdt{} with an individual Rényi filter instead can incorporate not only newly arrived data but also formerly seen data points that still have a privacy budget left. We vary the privacy budget $\varepsilon$ and perform a hyperparameter search over the other parameters. For the SOTA by \citet{Maddock_2022} we adopt the naïve training approach for streams. \Cref{fig:all_streams} displays our findings: \dpgbdt{} with individual Rényi filter performs better than the naïve baseline. This benefit comes without spending any additional privacy budget, an important feature of an individual Rényi filter.

\blue{
\section{Other Related work}
In the literature, there are two lines of secure (distributed) GBDT training: Cryptography-based \cite{lu2023squirrel,jiang2024sigbdt} and DP-based methods \cite{bojarski2014differentially,li2020privacy,Maddock_2022,nori2021accuracy,wang2022feverless,Fletcher_2017}. Cryptography-based methods are parallel to our approach as they focus the attack surface on the computation of GBDTs and not on the release. They frequently utilize secure multiparty computation (MPC) or homomorphic encryption (HE). For DP-based GBDTs, \citet{Maddock_2022} is the closest to us as they outperform prior works DP-EBM \cite{nori2021accuracy}, FEVERLESS \cite{wang2022feverless}, and DP-RF \cite{Fletcher_2017}. Other work \cite{bojarski2014differentially} focuses on random forests. Another line of work \cite{li2020privacy} provides suboptimal leaf noising in $\mathcal{O}(g^*)$ and not in $\mathcal{O}(\sfrac{g^*}{n})$ with $g^*$ as the gradient clipping bound and $n$ the number of data points in a leaf together with data-dependent gain-based DP splits which is already covered in \citet{Maddock_2022}.
}

\section{Conclusion}
We introduced \dpgbdt for learning GBDTs, which displays strong utility-privacy performance for the Abalone, Adult, and Spambase datasets. 
Compared to prior work, \dpgbdt incorporates three techniques. To reduce the degree of noise used during training and increase the number of trees, we incorporate subsampling and leaf-balanced noise and prove tight privacy bounds. We adopt so-called individual Rényi filters to ensure that data points that were used in prior trees but were underutilized can be used for training the next trees. As a result, \dpgbdt is well-suited for training models on streams of non-IID data where the intermediary models are released. 

We present a combination of tight individual Rényi DP bound for non-spherical multivariate Gaussian mechanism with leaf-balanced noise and a novel generalized condition for tight privacy bounds for subsampling. This combination might be of independent interest, as it enables a better calibration of the noise for functions with an arbitrary number of output dimensions. The novel condition solely requires the $(\alpha, \rho(\alpha))$-Rényi DP bound of the subsampled mechanism to be linear in $\alpha$.

\bibliographystyle{ACM-Reference-Format}
\bibliography{dp_boosting}

\newpage
\appendix

\section*{Appendix}

\section{Further experiments}

\boldparagraph{Individual Rényi filter for regular training.}
\label{sec:rf_ablation_regular_training}

In \Cref{sec:renyi_filter_streams} we evaluate the individual Rényi filter for learning a stream of non-IID data. Here, we investigate the impact of an individual Rényi filter for regular training.

Prior works \cite{feldman2020individual}\cite{koskela2023individualgdp} have empirically shown that an individual Rényi filter can improve performance when choosing hyperparameters such as a clipping bound suboptimally. Our findings are displayed in \Cref{fig:rf_ablations}. We can also observe the effect of improvement (cf. \Cref{fig:adult_rf_ablation}) when we increase the gradient clipping bound from its optimal value to twice the value. We investigate the number of extra training rounds $T_\text{extra} \in \{T/4, T/2, T \, 3/4, T\}$.

\begin{figure}[!b]
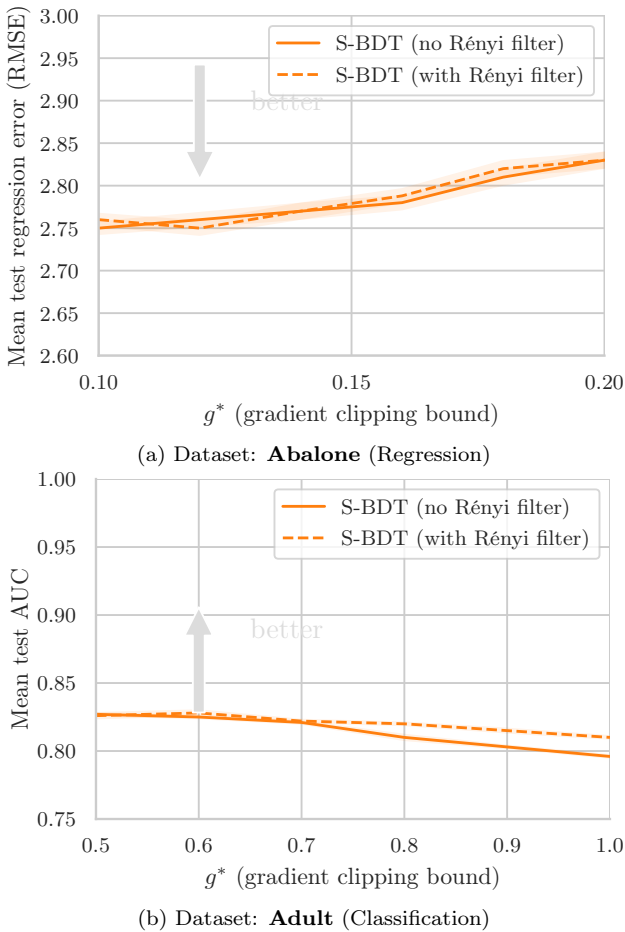

  \centering
  \begin{subfigure}[h]{\columnwidth}
    \resizebox{\textwidth}{!}{\input{images/abalone_rf_ablation.pgf}\unskip}
      \caption{Dataset: \textbf{Abalone} (Regression)}
      \label{fig:abalone_rf_ablation}
  \end{subfigure}\\
  \begin{subfigure}[h]{\columnwidth}
    \resizebox{\textwidth}{!}{\input{images/adult_rf_ablation.pgf}\unskip}
    \caption{Dataset: \textbf{Adult} (Classification)}
      \label{fig:adult_rf_ablation}
  \end{subfigure}
  \caption{\textbf{Ablation study of an individual Rényi filter tailored to \dpgbdt{}}. Regression error (RMSE) (Abalone) and AUC (Adult) of 200 runs. The transparent area is the standard error. For Abalone we set $\varepsilon=0.1$, number of trees 150, depth 2, subsampling ratio $\gamma=0.1$, leaf-balanced noise parameter $r_1=0.2$, privacy budget ratio for initial score $\varepsilon_\text{init}/\varepsilon=0.1$ and clipping bound $g^*=0.1$. For Adult we set $\varepsilon=0.02$, number of trees 200, depth 5, subsampling ratio $\gamma=0.005$, leaf-balanced noise parameter $r_1=0.1$, privacy budget ratio for initial score $\varepsilon_\text{init}/\varepsilon=0.1$ and clipping bounds $g^*=0.5, h^*=0.1$.}
  \label{fig:rf_ablations}
\end{figure}

\boldparagraph{Time cost.}
\label{sec:time_cost}

We investigate the time cost of \dpgbdt{}. The individual Rényi filter is quite impactful here. An individual Rényi filter demands updating the individual privacy budget for every data point in every iteration and extra training rounds after the regular training rounds, both increase the time costs. \blue{We update the individual privacy budget by slightly rounding up the individual sensitivities $g_i, h_i$ and then computing the individual privacy loss as an upper bound on the exact individual privacy loss. This strategy saves time cost when the computed individual privacy losses are stored for later use and is proposed in prior work \cite{yu2023individual}.}  We investigate training for 1000 rounds and a tree depth 6 and compare no individual privacy accounting and training with an individual Rényi filter for $T_\text{extra}=1000$ extra rounds. We observe the time cost displayed in \Cref{tbl:time_cost}.

\begin{table}[]
  \footnotesize
  \centering
  \begin{subfigure}[h]{\columnwidth}
  \begin{tabularx}{\columnwidth}{ll}
       \toprule
       Setting  & Time cost (seconds) \\
       \midrule
       \dpgbdt{} no Rényi filter \rule[-2.5ex]{0pt}{0pt} & 3.2 \\
       \dpgbdt{} with Rényi filter \rule[-2.5ex]{0pt}{0pt} & 5.1 \\
       \bottomrule
  \end{tabularx}
  \caption{\textbf{Abalone}}
  \end{subfigure}
  \begin{subfigure}[h]{\columnwidth}
  \begin{tabularx}{\columnwidth}{ll}
       \toprule
       Setting  & Time cost (seconds) \\
       \midrule
       \dpgbdt{} no Rényi filter \rule[-2.5ex]{0pt}{0pt} & 8.2 \\
       \dpgbdt{} with Rényi filter \rule[-2.5ex]{0pt}{0pt} & 25.7 \\
       \bottomrule
  \end{tabularx}
  \caption{\textbf{Adult}}
  \end{subfigure}
  \caption{\textbf{Time cost} of \dpgbdt{} on Abalone and Adult. We train $T_\text{regular}=1000$ rounds with tree depth 6. For the individual Rényi filter we add $T_\text{extra}=1000$ extra rounds.}
  \label{tbl:time_cost}
\end{table}

\section{Postponed proofs}\label{apx:proofs}
    \subsection{Useful RDP Properties}

\begin{lemma}[From RDP to individual RDP]
    Let $M$ be any mechanism satisfying $(\alpha, \rho(\alpha))$-Rényi differential privacy. Then $M$ satisfies $(\alpha, \rho(\alpha))$-individual Rényi differential privacy for data point $x$.
    \label{lem:rdp_implies_individual_rdp}
\end{lemma}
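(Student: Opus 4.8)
The plan is to unfold both definitions and observe that individual Rényi DP is a strictly weaker requirement than (ordinary) Rényi DP, so the implication is essentially immediate. Recall that $(\alpha,\rho(\alpha))$-Rényi DP of $M$ asks that $D_\alpha^\leftrightarrow(M(X)\|M(X')) \le \rho(\alpha)$ for \emph{every} pair of neighboring datasets $X \sim X'$, where $X \sim X'$ means $X' = X \cup \{x\}$ for some $x \in \mathcal{X}$. On the other hand, $(\alpha,\rho(\alpha))$-individual Rényi DP for the fixed data point $x$ (Definition~\ref{def:individual_privacy}) asks the same divergence bound, but only over the restricted family of neighboring datasets $D_0 \sim_{x} D_1$ that differ precisely in $x$ and additionally satisfy the size constraint $|D_0|, |D_1| \le n$.

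First I would fix an arbitrary pair $D_0, D_1$ with $D_0 \sim_{x} D_1$ (and $|D_0|,|D_1| \le n$, though this extra constraint is not even needed). By the definition of $\sim_x$, the two datasets differ exactly in the single point $x$, hence in particular $D_0 \sim D_1$ in the sense of the neighboring-datasets relation used in the RDP definition. Applying the hypothesis that $M$ is $(\alpha,\rho(\alpha))$-Rényi DP to this pair yields $D_\alpha^\leftrightarrow(M(D_0)\|M(D_1)) \le \rho(\alpha)$. Since $D_0, D_1$ were an arbitrary pair of the form required by Definition~\ref{def:individual_privacy}, this establishes $(\alpha,\rho(\alpha))$-individual Rényi DP for $x$. (The statement of the lemma does not quantify $n$, so I would read it as: for every $n$ and every $x$, taking the same $\rho(\alpha)$.)

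There is essentially no obstacle here: the only thing to be careful about is bookkeeping of the two slightly different notations for ``neighboring'' — the unqualified $X \sim X'$ versus the point-specific $D_0 \sim_{x} D_1$ — and noting that the latter is a sub-relation of the former (every $x$-neighboring pair is a neighboring pair), together with the observation that the size restriction $|D_0|,|D_1|\le n$ only shrinks the set of pairs over which the supremum is taken, hence can only make the bound easier to satisfy. So the proof is a one-line inclusion-of-quantifiers argument. I would phrase it in two or three sentences and not belabor it further.
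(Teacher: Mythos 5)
Your proof is correct and follows essentially the same route as the paper's: both arguments simply observe that the $x$-neighboring pairs (with the size restriction) form a sub-family of the neighboring pairs quantified over in the RDP definition, so the uniform bound $\rho(\alpha)$ carries over directly. No gaps.
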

\begin{proof}
    Let $M$ be as defined in the lemma's statement. 
    Let $X,X'$ be two neighboring datasets, then
    \begin{align*}
        D_\alpha(M(X)||M(X')) \leq \rho(\alpha)
    \end{align*}
    Now, let $D\sim_{d_i}D'$ be two neighboring datasets differing in $d_i$. Since $\rho(\alpha)$ is a upper bound on the Rényi divergence of $M(X),M(X')$ for arbitrary inputs, it also applies for the specific inputs $D,D'$:
    \begin{align*}
        D_\alpha(M(D)||M(D')) \leq \rho(\alpha)
    \end{align*}
    Thus, $M$ is also $(\alpha,\rho(\alpha))$ individual RDP.
\end{proof}

\begin{corollary}[Adaptive sequential Composition for individual RDP]
    Let $M$ be a sequence of adaptively chosen mechanisms
    $M_i: \Pi_{j=1}^{i-1} R_j \times \mathcal{X} \mapsto \mathcal{R}_i$ ($i=1,2,...,k)$, each providing $(\alpha, \rho_i)$-individual Rényi differential privacy for data point $x$. Then for any $\alpha$ $M$ is $(\alpha, \sum_i^k \rho_i(\alpha))$-individual Rényi differentially private for data point $x$.
    \label{thm:individual_rdp_sequential_composition}
\end{corollary}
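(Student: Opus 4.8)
The plan is to mirror the proof of adaptive sequential composition for Rényi DP (\Cref{thmrdpsequentialcomposition}), exploiting the fact that that argument is carried out pointwise on a single pair of neighboring datasets and hence restricts verbatim to the pairs that differ only in the fixed data point $x$. Concretely, fix an order $\alpha$ and any two datasets $D_0 \sim_x D_1$ with $|D_0|,|D_1| \le n$, where $n$ is the cap in the individual-RDP definition (\Cref{def:individual_privacy}) that all of $M_1,\dots,M_k$ share. Write $M^{(i)}$ for the composition $M_1,\dots,M_i$, so that $M^{(i)}(D_b)$ is the joint law of the first $i$ outputs on input $D_b$. The goal is to show $D_\alpha^\leftrightarrow(M^{(k)}(D_0)\,\|\,M^{(k)}(D_1)) \le \sum_{i=1}^k \rho_i(\alpha)$, after which taking the supremum over all admissible $D_0 \sim_x D_1$ yields the claimed $(\alpha,\sum_i\rho_i(\alpha))$-individual RDP, and since $\alpha$ was arbitrary this holds for every $\alpha$.

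The induction on $i$ uses the standard factorization of the Rényi ``moment'' along the last coordinate. Letting $P^{(i)},Q^{(i)}$ be the densities of $M^{(i)}(D_0),M^{(i)}(D_1)$ and $p_i(\cdot \mid y_{1:i-1}),q_i(\cdot \mid y_{1:i-1})$ the densities of $M_i$ applied to prefix $y_{1:i-1}$ and dataset $D_0$, resp.\ $D_1$, adaptivity gives $P^{(i)}(y_{1:i}) = P^{(i-1)}(y_{1:i-1})\,p_i(y_i \mid y_{1:i-1})$ and likewise for $Q^{(i)}$, whence
\[
\int \frac{\bigl(P^{(i)}\bigr)^{\alpha}}{\bigl(Q^{(i)}\bigr)^{\alpha-1}}
 = \int \frac{\bigl(P^{(i-1)}\bigr)^{\alpha}}{\bigl(Q^{(i-1)}\bigr)^{\alpha-1}}
   \Biggl( \int \frac{p_i(\cdot \mid y_{1:i-1})^{\alpha}}{q_i(\cdot \mid y_{1:i-1})^{\alpha-1}} \Biggr).
\]
The inner integral equals $\exp\bigl((\alpha-1)\,D_\alpha^\leftrightarrow(M_i(y_{1:i-1},D_0)\,\|\,M_i(y_{1:i-1},D_1))\bigr)$, and since $D_0 \sim_x D_1$ with both sizes $\le n$, the individual-RDP hypothesis of $M_i$ bounds it by $\exp\bigl((\alpha-1)\rho_i(\alpha)\bigr)$ uniformly in the prefix. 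Pulling this constant factor out and applying the induction hypothesis to the outer integral gives $D_\alpha^\leftrightarrow(M^{(i)}(D_0)\,\|\,M^{(i)}(D_1)) \le \sum_{j\le i}\rho_j(\alpha)$; the base case $i=1$ is the individual-RDP bound of $M_1$ itself. Setting $i=k$ finishes it.

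The one point that needs genuine care — and the reason this is a corollary rather than a literal instance of \Cref{thmrdpsequentialcomposition} — is that the neighboring relation and the size bound must be preserved along the composition: each $M_i$ is evaluated on the original $D_0$ or $D_1$ and only its \emph{prefix} argument changes, so $D_0 \sim_x D_1$ and $|D_0|,|D_1|\le n$ continue to hold at every step, and hence every $M_i$'s individual-RDP guarantee (for the shared $n$) is applicable where we invoke it. I do not expect any real obstacle here. Equivalently, one can phrase it black-box: an $(\alpha,\rho_i)$-individually-RDP mechanism is in particular $(\alpha,\rho_i)$-RDP on the restricted family of neighbors that differ in $x$ and have size $\le n$, and the proof of \Cref{thmrdpsequentialcomposition} run on exactly that family delivers the statement.
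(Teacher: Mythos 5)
Your proposal is correct and matches the paper's proof in approach: the paper simply observes that the proof of \Cref{thmrdpsequentialcomposition} is parametric in the family of neighboring datasets and restricts it to pairs $(X, X\cup\{x\})$, which is exactly the ``black-box'' phrasing in your final paragraph. The explicit inductive factorization you carry out is the standard argument behind that theorem and is a correct (if more detailed) unpacking of the same idea.
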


\begin{proof}
    This corollary follows directly from ~\Cref{thmrdpsequentialcomposition}. The proof of this theorem is parametric in a set of neighboring datasets. By choosing the set of neighboring datasets as all $(X,X\cup\set{x})$, the statement follows.
\end{proof}

\subsection{Main Theorem}

We state the full \Cref{thmdpinitscore} and the full proof.

\begin{reptheorem}{thm:maintheorem}[Main theorem]
    \Cref{algtrainsgbdt} (\texttt{TrainSBDT}) is $(\alpha, \rho(\alpha)$)-Rényi DP.
\end{reptheorem}
\begin{proof}
    \newcommand{\train}{\texttt{TrainSBDT}}
    \newcommand{\single}[1]{\texttt{TrainSingleTree}} 
    \newcommand{\filter}{\texttt{Filter}_\alpha^{\rho(\alpha)}}
    \newcommand{\poisson}{\texttt{Poisson}}
    \newcommand{\init}{\texttt{DPInitScore}}
    \newcommand{\grad}{\texttt{ComputeGradients}}
    \newcommand{\predict}{\texttt{Predict}}
    \newcommand{\M}[1]{M_{#1}}
    \renewcommand{\o}[1]{o_{#1}}
    \newcommand{\ou}[1]{{o^{(#1)}}}
    We prove an $(\alpha, \rho(\alpha))$-RDP bound for the learning algorithm \train. \\
    Let $M_0 := \init$, $M_i := \single{k} \circ \filter$ $\forall i>0$. $\filter$ denotes the individual Rényi filter part of our algorithm (lines \ref{algtrainsgbdtfilterstart} to \ref{algtrainsgbdtfilterend} in ~\Cref{algtrainsgbdt}): The filter computes individual RDP privacy losses for $\single{i}$ for every data point and then filters out data points that have exceeded the individual RDP bound of $\rho(\alpha)$.

    Let $T^{(0..k)}(X) := \M{k}(X, \M{k-1}(X, ...(X, \M{0}(X))))$ be the nested applications of mechanism $M_i$, comprising the learning algorithm $\train$, where $T^{(v..w)}(X) := \M{w}(X, \M{k-1}(X, ...(X, \M{v}(X))))$. The output of the training algorithm $\train$, which we call an observation, is a sequence of trees $\o{0}, ..., \o{k}$, consisting of an initial score $\o{0}$ and trees $\o{i}$ output by the application of $\M{i}$. We write $\mathbf{o} := \ou{0..k}=(\o{0}, \o{1}, ..., \o{k})$, where $\ou{v..w} = (\o{v}, ..., \o{w})$.
    \begin{align*}
        &\textstyle D_\alpha(\train(X)||\train(X')) \\
        &\textstyle= \frac{1}{\alpha-1} \log \int_\mathbf{o} \frac{\Pr[\train(X)=\mathbf{o}]^\alpha}{\Pr[\train(X')=\mathbf{o}]^{\alpha-1}} d \mathbf{o} \\
        &\textstyle= \frac{1}{\alpha-1} \log \int_\ou{0..k} \frac{\Pr[T^{(0..k)}(X)=\ou{0..k}]^\alpha}{\Pr[T^{(0..k)}(X')=\ou{0..k}]^{\alpha-1}} d \ou{0..k} \\
        \intertext{Applying the RDP sequential composition bound of  ~\Cref{thmrdpsequentialcomposition} we get}
        &\textstyle\leq \frac{1}{\alpha-1} \log \int_\ou{1..k} \frac{\Pr[T^{(1..k)}(X)=\ou{1..k}]^\alpha}{\Pr[T^{(1..k)}(X')=\ou{1..k}]^{\alpha-1}} d \ou{1..k} \\
        &\textstyle+ \frac{1}{\alpha-1} \log \int_{\o{0}} \frac{\Pr[\M{0}(X)=\o{0}]^\alpha}{\Pr[\M{0}(X')=\o{0}]^{\alpha-1}} d \o{0} \\
        &\textstyle= \frac{1}{\alpha-1} \log \int_\ou{1..k} \frac{\Pr[T^{(1..k)}(X)=\ou{1..k}]^\alpha}{\Pr[T^{(1..k)}(X')=\ou{1..k}]^{\alpha-1}} d \ou{1..k} \\
        &\textstyle+ D_\alpha(\init(X)||\init(X')) \\
        \intertext{Let $f(\varepsilon) = \log \Big\{ \frac{\alpha}{2\alpha -1} \exp\Big((\alpha-1)\cdot \varepsilon\Big) + \frac{\alpha -1}{2\alpha -1} \exp\Big(-\alpha \cdot \varepsilon\Big) \Big\}$, applying the RDP bound for \texttt{DPInitScore} of ~\Cref{thmdpinitscore} we get}
        &\textstyle\leq \frac{1}{\alpha-1} \log \int_\ou{1..k} \frac{\Pr[T^{(1..k)}(X)=\ou{1..k}]^\alpha}{\Pr[T^{(1..k)}(X')=\ou{1..k}]^{\alpha-1}} d \ou{1..k} \\
        &\textstyle+ \frac{1}{\alpha-1} \Big(f(\varepsilon_\text{init}) + f(\varepsilon_\text{ds})\Big) \\[-4ex]
    \end{align*}

    By definition of $T^{(1..k)}$, $T^{(1..k)}$ is a sequence of $M_i$ ($i=1,2,...,k$) where each $M_i$ consists of two operations: (1) The filter operation $\filter$ that applies the privacy filter and filters out data points that have exceeded the individual RDP bound $\rho(\alpha)$. (2) The mechanism $\single{i}$ that trains a tree on the filtered dataset.
    
    In ~\Cref{thm:train_single_tree_irdp_main_body} we bound the individual RDP privacy losses for $\single{i}$, so our filter is correct, i.e. it will always filter out data points that have exceeded the individual RDP bound $\rho(\alpha)$.
    By ~\Cref{thm:adaptive_composition_individual_privacy} the sequence of mechanisms $T^{(1..k)}$ then satisfies the following RDP bound
    \begin{align*}
        \textstyle\leq \rho(\alpha) 
        &\textstyle+ \frac{1}{\alpha-1} \Big(f(\varepsilon_\text{init}) + f(\varepsilon_\text{ds})\Big)\\[-4ex]
    \end{align*}
\end{proof}

\subsection{RDP Proofs of each \dpgbdt{} Component}
\subsubsection{DPInitScore}

We recall \Cref{thmdpinitscore} and state the full proof.

\begin{reptheorem}{thmdpinitscore}
    
\end{reptheorem}

\begin{proof}    
    Let $X\sim X'$ be two neighboring datasets. Assume without loss of generality that $X$ and $X'$ differ in data point $(x',y')$. We first analyze the sensitivities of \texttt{DPInitialScore}'s subfunctions.

    Let $g_\text{ds}$ denote the subfunction of \texttt{DPInitialScore} without noise, that computes the dataset size. The sensitivity of $g_\text{ds}$ is $\Delta_{g_\text{ds}} = 1$ because we investigate unbounded DP, and adding or removing a single element from the dataset can change its size by at most 1.
    
    Let $g_\text{sum}$ denote the subfunction of \texttt{DPInitialScore} without noise, that computes the sum of clipped labels. All labels are clipped to have length at most $m^*$ (line \ref{algdpinitscoreclipmean} in ~\Cref{algdpinitscore}) and we denote the clipped labels for data point $(x_i,y_i) \in X$ as $\overline{y_i}$. The sensitivity of $g_\text{sum}$ is $\Delta_{g_\text{sum}} $
    \begin{align*}
        &\textstyle= \max_{X\sim X'} \big|f_\text{sum}(X) - f_\text{sum}(X')\big| \\
        &= \max_{X\sim X'} \Bigg|\sum_{(x_i,y_i) \in X} \overline{y_i} - \sum_{(x_i,y_i) \in X'} \overline{y_i}\Bigg| \\
        &\textstyle= \max_{X\sim X'} \Bigg|y_l'\Bigg| \\
        &\textstyle\leq m^*
    \end{align*}

    We analyze the Rényi divergence between two Laplace distributions shifted in $\Delta_f$ with noise scale $\lambda := \frac{\Delta_f}{\varepsilon}$ for some privacy budget $\varepsilon$, to compute the RDP bound. 
    Let $p \sim \mathcal{L}(0, \lambda), q \sim \mathcal{L}(\Delta_f, \lambda)$. $D_\alpha(p||q) = $
    \begin{align*}
        &\textstyle= \frac{1}{\alpha-1} \log \int_{-\infty}^{\infty} \frac{p(x)^\alpha}{q(x)^{\alpha-1}} dx \\
        &\textstyle= \frac{1}{\alpha-1} \log \int_{-\infty}^{\infty} \frac{(\frac{1}{2\lambda} \exp(-|x|/\lambda))^\alpha}{(\frac{1}{2\lambda}\exp(-|x-\Delta_f|/\lambda))^{\alpha-1}} dx \\
        &\textstyle= \frac{1}{\alpha-1} \log \frac{1}{2\lambda} \int_{-\infty}^{\infty} \exp(-\frac{\alpha|x|}{\lambda} + \frac{(\alpha-1)|x-\Delta_f|}{\lambda}) dx \\
        \intertext{We split the integral into three parts}
        &\textstyle\frac{1}{2\lambda}\int_{-\infty}^{\infty} \exp(-\alpha|x|/\lambda + (\alpha-1)|x-\Delta_f|/\lambda) dx \\
        &\textstyle= \frac{1}{2\lambda} \int_{-\infty}^{0} \exp(\alpha x/\lambda + (1-\alpha)(x-\Delta_f)/\lambda) dx \\
        &\textstyle+ \frac{1}{2\lambda} \int_{0}^{\Delta_f} \exp(-\alpha x/\lambda + (1-\alpha)(x-\Delta_f)/\lambda) dx \\
        &\textstyle+ \frac{1}{2\lambda} \int_{\Delta_f}^{\infty} \exp(-\alpha x/\lambda - (1-\alpha)(x-\Delta_f)/\lambda) dx \\[0em]
        &\textstyle= \frac{1}{2\lambda} \exp((\alpha-1)\Delta_f/\lambda) \cdot \lambda \\
        &\textstyle+ \frac{1}{2\lambda} \cdot \frac{\lambda}{2\alpha - 1} \Big(\exp\Big(\frac{(\alpha-1)\Delta_f}{\lambda}\Big) - \exp\Big(\frac{-\alpha \Delta_f}{\lambda}\Big)\Big) \\
        &\textstyle+ \frac{1}{2\lambda} \lambda \exp\Big(\frac{-\alpha \Delta_f}{\lambda}\Big)\\
        &\textstyle= \frac{\alpha}{2\alpha-1} \exp\Big(\frac{(\alpha-1)\Delta_f}{\lambda}\Big)
        + \frac{\alpha-1}{2\alpha-1} \exp\Big(\frac{-\alpha\Delta_f}{\lambda}\Big) \\
        &\textstyle= \frac{\alpha}{2\alpha-1} \exp\Big((\alpha-1) \cdot \varepsilon\Big)
        + \frac{\alpha-1}{2\alpha-1} \exp\Big(-\alpha \cdot \varepsilon\Big)
    \end{align*}

    We use this result to define a function:
    \begin{align*}
        f(\varepsilon) &:= \log \Big\{ \frac{\alpha}{2\alpha-1} \exp\Big((\alpha-1) \cdot \varepsilon\Big) \\
        &+ \frac{\alpha-1}{2\alpha-1} \exp\Big(-\alpha \cdot \varepsilon\Big) \Big\}
    \end{align*}

    We show, that \texttt{DPInitialScore} satisfies RDP.
    The output of \texttt{DPInitialScore}, which we call an observation, is a post-processed version $\sfrac{o_\text{sum}}{o_\text{ds}}$ of a pair of observations $(o_\text{ds}, o_\text{sum})$ consisting of the dataset size and the sum of clipped labels. As RDP is preserved by post-processing (cf. \Cref{thm:rdp_post_processing}) we investigate the pair of observations output by \texttt{DPInitialScore} prior to post-processing. Let $M_\text{ds}, M_\text{sum}$ be the two mechanisms computing the noisy dataset size and the noisy sum of clipped labels.
    \newcommand{\initialscore}{\texttt{DPInitialScore}}
    \begin{align*}
        &\textstyle D_\alpha(\initialscore(X)||\initialscore(X')) \\
        &\textstyle= \frac{1}{\alpha-1} \log \int_{(o_\text{ds}, o_\text{sum})} \frac{\Pr[\initialscore(X)=\mathbf{o}]^\alpha}{\Pr[\initialscore(X')=\mathbf{o}]^{\alpha-1}} d (o_\text{ds}, o_\text{sum}) \\
        \intertext{Applying the RDP sequential composition bound of  ~\Cref{thmrdpsequentialcomposition} we get}
        &\textstyle\leq \frac{1}{\alpha-1} \log \int_{o_\text{ds}} \frac{\Pr[M_\text{ds}(X)=o_\text{ds}]^\alpha}{\Pr[M_\text{ds}(X')=o_\text{ds}]^{\alpha-1}} d o_\text{ds} \\
        &\textstyle+ \frac{1}{\alpha-1} \log \int_{o_\text{sum}} \frac{\Pr[M_\text{sum}(X)=o_\text{sum}]^\alpha}{\Pr[M_\text{sum}(X')=o_\text{sum}]^{\alpha-1}} d o_\text{sum} \\
        &\textstyle= D_\alpha(M_\text{ds}(X)||M_\text{ds}(X')) + D_\alpha(M_\text{sum}(X)||M_\text{sum}(X')) 
        \intertext{$M_\text{ds}, M_\text{sum}$ are Laplace mechanisms with noise scales $\sfrac{\Delta_{g_\text{ds}}}{\varepsilon_\text{ds}}$, $\sfrac{\Delta_{g_\text{sum}}}{\varepsilon_\text{init}}$ and sensitivities $\Delta_{g_\text{ds}}$, $\Delta_{g_\text{sum}}$ so we can apply the previously obtained function $f(\varepsilon)$:}
        &\textstyle= \frac{1}{\alpha-1} \Big(f(\varepsilon_\text{ds}) + f(\varepsilon_\text{init})\Big)
    \end{align*}
\end{proof}

\subsubsection{DPLeaf}

We state the full \Cref{thm:rdp_nonspherical_gauss}.

\begin{repcorollary}{thm:rdp_nonspherical_gauss}[RDP of Gaussian Mechanism with leaf-balanced non-spherical noise]
    Let $f \colon \mathcal{X} \mapsto \mathbb{R}^D$ denote a function from an arbitrary input $X \in \mathcal{X}$ to a set of scalars. Let the function $f$ projected to its $d$-th output have bounded sensitivity $s_d \in \mathbb{R}$. Let $Y \sim \mathcal{N}(\mathbf{0},\Sigma)$ be a random variable of non-spherical multivariate Gaussian noise with covariance matrix $\Sigma \coloneqq D^{-1}\cdot\diag(r_1^{-1} s_{1}^{2} \sigma^2, \dots, r_D^{-1} s_{D}^{2} \sigma^2 )$ where a given variance $\sigma^2 \in \mathbb{R}^+$ is weighted in each dimension $d$ individually by $s_{d}^{2}$ and $r_d \in \mathbb{R}^+$ such that $\sum_{d=1}^D r_d = 1$. Then the Gaussian mechanism $M(X) \mapsto \set{f(X)_d + Y_d}_{d=1}^D$ is $(\alpha, \rho(\alpha))$-RDP with $\rho(\alpha) = \alpha\cdot \frac{D}{2\sigma^2}$.
\end{repcorollary}
\begin{proof}
    The corollary follows directly from ~\Cref{thm:individual_rdp_nonspherical_gauss} for a challenge data point with a worst-case sensitivity: $s_d$.
\end{proof}

We recall \Cref{cor:dpleaf_individual_rdp_main_body} and state the full proof.

\begin{repcorollary}{cor:dpleaf_individual_rdp_main_body}
    
\end{repcorollary}

\begin{proof}
    For arbitrary input $X$, \texttt{DPLeaf} without noise computes a function $f(X) \mapsto \set{\sum_{x_i \in X} h_i, \sum_{x_i \in X} g_i}$ where $g_i, h_i$ are the gradient and Hessian of data point $x_i \in X$. The first output of $f$ is $|h_i|$-sensitivity bounded with respect to $X \sim_{x_i} X'$ (i.e., $X$ and $X'$ only  differ in $x_i$), the second output of $f$ is $|g_i|$-sensitivity bounded with respect to $X \sim_{x_i} X'$ , as $X$ and $X'$ only  differ in $x_i$:
    \begin{align*}
        &\textstyle\max_{X\sim X'} |\sum_{x_j \in X} g_j - \sum_{x_j \in X'} g_j| = |g_i|\\
        &\textstyle\max_{X\sim X'} |\sum_{x_j \in X} h_j - \sum_{x_j \in X'} h_j| = |h_i|\\
    \end{align*}
    In lines \ref{algdpleafnoisedsupport} and \ref{algdpleafnoisedsum}, \texttt{DPLeaf} applies leaf-balanced non-spherical bivariate Gaussian noise $Y \sim \mathcal{N}(\mathbf{0}, \Sigma)$ with covariance matrix $\Sigma := \frac{1}{2}\diag(r_1^{-1} (h^{*})^{2} \sigma_\text{leaf}^2, r_2^{-1} (g^{*})^{2} \sigma_\text{leaf}^2)$ to the output of $f$, denoted as $M(X) \mapsto \{f(X)_1 + Y_1, f(X)_2 + Y_2\}$.
    By ~\Cref{thm:individual_rdp_nonspherical_gauss}, $M$ satisfies $(\alpha, \rho(\alpha))$-individual RDP for $x_i$ with $\rho(\alpha) = \alpha \cdot \frac{2}{2\sigma_\text{leaf}^2} \cdot \Big(\frac{r_1 \cdot |h_i|^{2}}{(h^{*})^{2}} + \frac{r_2\cdot |g_i|^2}{(g^{*})^{2}}\Big)$.

    \texttt{DPLeaf} finally combines the outputs $f(X)_1, f(X)_2$ of $f$ in line \ref{algdpleafvalue} via post-processing (cf. ~\Cref{thm:rdp_post_processing}) without additional privacy leakage.
\end{proof}

We state the full \Cref{cor:dpleaf_rdp_main_body}.

\begin{repcorollary}{cor:dpleaf_rdp_main_body}
    \Cref{algdpleaf} (\texttt{DPLeaf}) satisfies $(\alpha, \sfrac{\alpha}{\sigma_\text{leaf}^2})$-RDP with $\sigma_\text{leaf}^2$ as the unweighted variance of the leaf Gaussian.
\end{repcorollary}
\begin{proof}
    The corollary follows directly from \Cref{cor:dpleaf_individual_rdp_main_body} for a challenge data point with worst-case sensitivities $g^*, h^*$.
\end{proof}

\subsubsection{TrainSingleTree}

\begin{remark}
    Privacy amplification by subsampling (\Cref{thmsubsampling}) can be applied to individual RDP as well. The proof of ~\Cref{thmsubsampling} assumes a bound on the worst-case individual RDP of some mechanism $M$ and computes the worst-case individual RDP of the subsampled variant $M^\mathcal{P}$ of $M$. Now, if we only assume an individual RDP bound for $M$ for some data point $x_i$ we can analogously utilize ~\Cref{thmsubsampling} to obtain an individual RDP bound of subsampled $M^\mathcal{P}$ for $x_i$.
    \label{rem:subsampling}
\end{remark}

We recall \Cref{thm:train_single_tree_irdp_main_body} and state the full proof.

\begin{reptheorem}{thm:train_single_tree_irdp_main_body}
    
\end{reptheorem}
\begin{proof}
    Assume that $M$ is the mechanism computing the output of ~\Cref{algtrainsingletree}, when no subsampling is applied. Then $M$ comprises a sequence of two mechanisms $M_R$ computing the random tree (line \ref{alg:trainSingleTreeRandomTree} in ~\Cref{algtrainsingletree}) and $M_L$ computing a leaf value (line \ref{alg:train_single_tree_dp_leaf} in ~\Cref{algtrainsingletree}) for each leaf. Let $M_A$ be the mechanism that outputs all $k$ of the leaves' values at once, i.e. $M_A(X) = (M_L(X), M_L(X), ..., M_L(X))$.

    Let $o:=o^{(R,1..k)}=(o_R, o_1, o_2, ..., o_k)$ be some observation of $M$, comprising the output random tree and $k$ values for $k$ leaves of the random tree. Define $o^{(1..k)}=(o_1,o_2,...,o_k)$,
    let $X,X'$ be two neighboring datasets.
    \begin{align*}
        &\textstyle\phantom{=}~D_\alpha(M(X)||M(X'))
        \textstyle= \frac{1}{\alpha-1}\log\int_{-\infty}^{\infty} \frac{\Pr[M(X)=o]^\alpha}{\Pr[M(X')=o]^{\alpha-1}} do \\
        &\textstyle= \frac{1}{\alpha-1}\log\int_{-\infty}^{\infty} \frac{\Pr[M_A(M_R(X), X)=o]^\alpha}{\Pr[M_A(M_R(X'), X')=o]^{\alpha-1}} do \\
        \intertext{We apply the individual RDP sequential composition bound of \Cref{thm:individual_rdp_sequential_composition} and get}
        &\textstyle\leq \frac{1}{\alpha-1}\log\int_{-\infty}^{\infty} \frac{\Pr[M_R(X)=o_R]^\alpha}{\Pr[M_R(X')=o_R]^{\alpha-1}} do \\
        &\textstyle+ \frac{1}{\alpha-1}\log\int_{-\infty}^{\infty} \frac{\Pr[M_A(o^{(1..k)}, X)=o^{(1..k)}]^\alpha}{\Pr[M_A(o^{(1..k)}, X')=o^{(1..k)}]^{\alpha-1}} do^{(1..k)} \\
        \intertext{Since the splits are data-independent we have}
        &\textstyle= 0 + \frac{1}{\alpha-1}\log\int_{-\infty}^{\infty} \frac{\Pr[M_A(o_R, X)=o]^\alpha}{\Pr[M_A(o_R, X')=o]^{\alpha-1}} do \\
        \intertext{As the splits are chosen uniformly at random and due to the law of total probability, it suffices to consider an arbitrary but fixed splitting function $s$ with $s(X) = (X_i)_{i=1}^k$ that partitions the data into $k$ partitions for $k$ leaves.}
        &\textstyle= \frac{1}{\alpha-1}\log
        \textstyle\int_{-\infty}^{\infty} \frac{\Pr[(M_L(o_R, X_1), ..., M_L(o_R, X_k))=o]^\alpha}{\Pr[(M_L(o_R, X'_1), ..., M_L(o_R, X'_k))=o]^{\alpha-1}} do \\
        &\textstyle= \frac{1}{\alpha-1}\log \int_{-\infty}^{\infty} \prod_{i=1}^k \frac{\Pr[M_L(o_R, X_i)]^\alpha}{\Pr[M_L(o_R, X'_i)]^{\alpha-1}} do \\
        \intertext{Each splitting function $s$ data-independently partitions the dataset into $k$ distinct subsets; hence, for neighboring datasets $D\sim_{x_i}D'$ in unbounded DP we get $s(D) = (X_j)_{j=1}^k$ and $s(D') = (X'_j)_{j=1}^k$ such that for one $i$ $X_i$ and $X'_i$ differs in the one element $x_i$ (arbitrary but fixed) and for all $j\neq i$ $X_j = X'_j$.}
        &\textstyle= \frac{1}{\alpha-1}\log
        \textstyle\int_{-\infty}^{\infty} \frac{\Pr[(M_L(o_R, X_1)]^\alpha}{\Pr[(M_L(o_R, X'_1)]^{\alpha-1}} \cdot ...\cdot \frac{\Pr[(M_L(o_R, X_i)]^\alpha}{\Pr[(M_L(o_R, X'_i)]^{\alpha-1}} \\
        &\textstyle\cdot ... \cdot \frac{\Pr[M_L(o_R, X_k))=o]^\alpha}{\Pr[M_L(o_R, X'_k))=o]^{\alpha-1}} do \\
        &\textstyle= \frac{1}{\alpha-1}\log \int_{-\infty}^{\infty} \frac{\Pr[(M_L(o_R, X_i)]^\alpha }{\Pr[(M_L(o_R, X'_i)]^{\alpha-1}} do \\
        \intertext{We apply the individual RDP bound of ~\Cref{cor:dpleaf_individual_rdp_main_body}:}
        &\textstyle\leq \alpha \cdot \frac{2}{2\sigma_\text{leaf}^2} \cdot \Big( \frac{r_1 \cdot |h_i|^{2}}{(h^{*})^{2}} + \frac{r_2\cdot |g_i|^2}{(g^{*})^{2}}\Big) 
    \end{align*}

    Finally, we show that \texttt{TrainSingleTree} with subsampling applied, yields a privacy amplification.
    As stated in ~\Cref{rem:subsampling}, we can utilize the privacy amplification by subsampling (\Cref{thmsubsampling}) for individual RDP. The individual RDP bound of \texttt{TrainSingleTree} is 
    $\alpha \cdot \frac{2}{2\sigma_\text{leaf}^2} \cdot \Big( \frac{r_1 \cdot |h_i|^{2}}{(h^{*})^{2}} + \frac{r_2\cdot |g_i|^2}{(g^{*})^{2}}\Big)$ and is linear in $\alpha$. We use our condition for tight subsampling bounds (\Cref{cor:pearson_vajda_condition_general}) to follow that the Pearson-Vajda pseudo-divergence of $D_{\mathcal{X}^{\alpha}}(M(X)||M(X')) \geq 0$ for all odd $\alpha \geq 1$. This allows us to use the tight subsampling bound of \Cref{thmsubsampling} to obtain $a_{\gamma}(\alpha, \alpha \cdot \frac{2}{2\sigma_\text{leaf}^2} \cdot \Big( \frac{r_1 \cdot |h_i|^{2}}{(h^{*})^{2}} + \frac{r_2\cdot |g_i|^2}{(g^{*})^{2}}\Big))$ as individual RDP bound.
\end{proof}

We state the full \Cref{cor:train_single_tree_rdp_main_body}.

\begin{repcorollary}{cor:train_single_tree_rdp_main_body}
    Let $\sigma_\text{leaf}^2$ be the unweighted variance of the leaf Gaussian. Let $a_\gamma: \mathbb{N} \times \mathbb{R} \mapsto \mathbb{R}$ denote the privacy amplification of \Cref{thmsubsampling} with subsampling ratio $\gamma$. Then \texttt{TrainSingleTree} (cf. \Cref{algtrainsingletree}) is  $(\alpha, a_{\gamma}(\alpha, \sfrac{\alpha}{\sigma_\text{leaf}^2}))$-RDP.
\end{repcorollary}

\begin{proof}
    This corollary follows directly from ~\Cref{thm:train_single_tree_irdp_main_body} with the worst-case sensitivities $g^*$ and $h^*$.
\end{proof}

\section{Scalable distributed learning}
    \label{sec:appendix_distributed_learning}

\begin{algorithm}[t!]
\small
    \newcommand{\single}[1]{\texttt{TrainSingleTree}}
    \caption{Distributed-TrainSBDT}
    \SetKwFunction{Initialization}{Initialization}
    \SetKwFunction{TrainSingleTree}{TrainSingleTree}
    \SetKwFunction{CompleteTree}{CompleteTree}
    \SetKwFunction{PrivacyAmplificationBySubsampling}{PrivacyAmplificationBySubsampling}
    \SetKwFunction{ComputeInitScore}{ComputeInitScore}
    \SetKwFunction{ComputeGradients}{ComputeGradients}
    \SetKwFunction{DPInitScore}{DPInitScore}
    \SetKwFunction{UpdateIndividualPrivacyLoss}{UpdateIndividualPrivacyLoss}
    \SetKwFunction{Append}{Append}
    \SetKwFunction{ZeroClassifier}{ZeroClassifier}
    \SetKwFunction{BulletinBoard}{BulletinBoard}
    \SetKwFunction{PoissonSubsample}{PoissonSubsample}
    \SetKwFunction{SecureAggregation}{SecureAggregation}
    \SetKwFunction{PublicUniformSampling}{PublicUniformSampling}
    \KwIn{$D_u$ : private training dataset of user $u$} 
    \SetKwInOut{KwIn}{\color{white}\phantom{Input}}
    \KwIn{$k$ : number of users} 
    \KwIn{$(r_1, r_2)$ : noise weights for leaf value}
    \KwIn{$g^*, h^*, m^*$ : clipping bounds on gradients, Hessians and labels}
    \KwIn{$\lambda, \beta$ : regularization parameters}
    \KwIn{$(T_\text{regular}, T_\text{extra})$ : number of rounds and extra rounds}
    \KwIn{$d$ : depth of trees}
    \KwIn{$\gamma$ : subsampling ratio}
    \KwIn{$m$ : number of features of $D_u$}
    \KwIn{$(v_{\min}^{(1)}, v_{\max}^{(1)}, ..., v_{\min}^{(m)}, v_{\max}^{(m)})$ : feature value ranges}

    \BlankLine
    $((r_1, r_2), g^*, \lambda, T_{\text{regular}}, T_{\text{extra}}, d, \gamma) =$ \BulletinBoard{'hyperparameters'} \label{alg:distributed_learning_hyperparameters} \\
    $T_\text{max} = T_\text{regular} + T_\text{extra}$ \\
    $\hat{\alpha}, \sigma_\text{leaf}^2, \rho(\hat{\alpha}) =$ \Initialize{$\alpha_{\max}$, $(\varepsilon_\text{trees}, \delta_\text{trees}), \varepsilon_\text{init}, \gamma$} \label{alg:distributed_learning_setup} \\ 
    $(\text{ds}_u, \text{sum}_u) = \text{\DPInitScore{$D, m^*, \varepsilon_\text{init}$}}$ \label{alg:distributed_learning_client_init} \\
    $\text{ds} =$ \SecureAggregation{$\text{ds}_1, ..., \text{ds}_u,...,\text{ds}_k$} \\
    $\text{sum} =$ \SecureAggregation{$\text{sum}_1, ..., \text{sum}_u,...,\text{sum}_k$} \\
    $\text{init}_0 = \text{sum} / \text{ds}$ \\
    $E = (\text{init}_0)$ \\
    \For{$t = 1$ to $T_\text{max}$}{
        \For{$i=1$ to $|D_u|$}{ \label{alg:distributed_learning_accounting_start}
            $\rho_t^{(i)}(\alpha) = a_\gamma(\alpha, \frac{\alpha}{\sigma_\text{leaf}^{2}} \cdot \Big( \frac{r_1\cdot |h_i|^2}{(h^{*})^{2}} + \frac{r_2\cdot |g_i|^2}{(g^{*})^{2}}\Big))$ \tcp*{by~\Cref{thm:train_single_tree_irdp_main_body} \blue{(for RDP: \Cref{cor:train_single_tree_rdp_main_body})}} \label{alg:distributed_learning_client_individual}
        }
         $D_{u,t} = (x_i : \mathcal{F}_{\hat{\alpha}, \rho(\hat{\alpha})}(\rho_{1}^{(u,i)}, \rho_{2}^{(u,i)}, \dots, \rho_{t}^{(u,i)}) = \text{CONT})$ \tcp{by ~\Cref{thmprivacyfilter}}\label{alg:distributed_learning_client_filter} \label{alg:distributed_learning_accounting_end}
         \BlankLine 
         $\text{tree}_t = $ \CompleteTree{} \label{alg:distributed_learning_client_receive_tree_start} \\
         \For{each split $(i, v)$ in $\text{tree}_t$}{
            $i = \lceil \text{\PublicUniformSampling{$[0,m)$}} \rceil$ \label{alg:distributed_learning_publicuniformsampling1} \\ 
            $v = \text{\PublicUniformSampling{$[0, v_{\max}^{(i)} - v_{\min}^{(i)})$}} - v_{\min}^{(f)}$ \label{alg:distributed_learning_client_receive_tree_end} \label{alg:distributed_learning_publicuniformsampling2}
         }
         \BlankLine
         $\text{tree}_t =$ \TrainSingleTree{$\text{tree}_t, D_t, d,$
     $\sigma_\text{leaf}^2, g^*, h^*, (r_{1}, r_{2}), \lambda, \beta, E$} \label{alg:distributed_learning_client_train_single} \\ 
         $V_u =$ list() \\
         $W_u =$ list() \\
         \For{$l=1$ to $2^{d}$}{
            let $v_l, w_l$ be gradient sum / Hessian sum of leaf $l$ in $\text{tree}_t$ \\
            \Append{$V_u, v_l$} \label{alg:distributed_learning_client_V_u} \\
            \Append{$W_u, w_l$} \label{alg:distributed_learning_client_W_u}
         }
         $A =$ \SecureAggregation{$W_1, \dots, \mathbf{W_u},$
         $ \dots, W_{k}$} \label{alg:distributed_learning_client_secure_aggregation} \\
         $B =$ \SecureAggregation{$V_1, \dots, \mathbf{V_u},$
         $ \dots, V_{k}$} \label{alg:distributed_learning_client_secure_aggregation_2} 
         \For{$l=1$ to $2^{d}$}{
            \SetLeaf{$\text{tree}_t, l, A[l] / B[l]$} \label{alg:distributed_learning_client_set_leaf}
         }
         $E = (\text{init}_0, \text{tree}_1, \text{tree}_2, \dots, \text{tree}_t)$
     }
     \BlankLine
     \Return E
    \label{alg:distributed_learning_client}
\end{algorithm}

Our distributed learning extension for \dpgbdt{} builds on prior work \cite{Maddock_2022}. We train a global ensemble $E$ that is shared amongst $k$ users with distinct private training datasets $D_1, D_2, ..., D_k$. We assume the existence of a secure bulletin board that provides each user with the same set of hyperparameters. We utilize secure aggregation \cite{secure_aggregation_bonawitz,secure_aggregation_bell}, a protocol for privately computing the sum of vectors 
$A = \sum_{u=1}^{k} W_u$. The full protocol of distributed \dpgbdt{} is described in ~\Cref{alg:distributed_learning_client}.

Every user $u$ ($u=1,2,...,k)$ receives the hyperparameters from the bulletin board (line \ref{alg:distributed_learning_hyperparameters}) and sets up the accounting (line \ref{alg:distributed_learning_setup}).

For the initial classifier (from line \ref{alg:distributed_learning_client_init}), the user computes DP releases of the sum of labels and the dataset size. This is a slight variation of  \Cref{algdpinitscore} (\texttt{DPInitialScore}) where usually the sum would already be divided by the dataset size. All users synchronize and invoke \texttt{SecureAggregation} \cite{secure_aggregation_bonawitz,secure_aggregation_bell} with fixed precision, for aggregating the dataset size first and then the label sum. The initial score is then built by dividing the aggregated label sum by the overall dataset size and added to the ensemble.

The user commences $T_{\max}$ rounds of training, and starts a single training round by updating the individual Rényi DP privacy losses for all its data points (line \ref{alg:distributed_learning_client_individual}) and then filtering out those data points that have expended all their privacy budget (line \ref{alg:distributed_learning_client_filter}). Next, the user locally initializes $\text{tree}_t$ for the current round $t$ (line \ref{alg:distributed_learning_client_receive_tree_start}) with arbitrary or even undefined splits, and then synchronizes with the other users and utilizes public uniform sampling \cite[Protocol 1]{sabater_verifiable_sampling} to collaboratively and verifiably sample uniformly random features and feature values for the splits of $\text{tree}_t$: For the random feature (line \ref{alg:distributed_learning_publicuniformsampling1}) chosen from $m$ features we draw a randomly uniform sample $i'$ from $[0, m)$ and then select the feature $\lceil i' \rceil$. For the feature value (line \ref{alg:distributed_learning_publicuniformsampling2}) of a numerical feature $i$ with feature range $[v_{\min}^{(i)}, v_{\max}^{(i)})$ we draw a randomly uniform sample $v'$ from $[0, v_{\max}^{(i)} - v_{\min}^{(i)})$ and then select the feature value $v' - v_{\min}^{(i)}$. If the feature $i$ is categorical we assume $v_{\max}^{(i)} = c$ be the number of different values of feature $i$ and $v_{\min}^{(i)}=0$. We then draw a randomly uniform sample $v'$ from $[0, c)$ and select the categorical feature value with index $\lceil v' \rceil$.

The user then locally only adjusts the leaves of $\text{tree}_t$ with its local share of sensitive data $D_u$ (line \ref{alg:distributed_learning_client_train_single}). Note that this is a slight variation of ~\Cref{algtrainsingletree} (\texttt{TrainSingleTree}): When already given a random tree, ~\Cref{algtrainsingletree} must leave the splits untouched and only train the leaves of the given tree. 

All users must now synchronize again to utilize \\\texttt{SecureAggregation} ~\cite{secure_aggregation_bonawitz}~\cite{secure_aggregation_bell} with fixed precision for collaboratively generating leaf values for the tree. User $u$ generates vectors $V_u, W_u$ to contain the gradient sum and Hessian sum of all its local leaf values (line \ref{alg:distributed_learning_client_V_u} and \ref{alg:distributed_learning_client_W_u}). The user then calls \texttt{SecureAggregation} twice (line \ref{alg:distributed_learning_client_secure_aggregation} and \ref{alg:distributed_learning_client_secure_aggregation_2}) to privately compute
\[\textstyle A = \sum_{u=1}^{k} V_u, \, B = \sum_{u=1}^{k} W_u\]
containing the collaboratively generated gradient sum and Hessian sum. User $u$ then sets the leaf values of its local tree  (line \ref{alg:distributed_learning_client_set_leaf}) and finally adds this tree to the ensemble.

\end{document}